\newtheorem {Theorem}                 {Theorem}         [section]
\newtheorem {theorem}      [Theorem]  {Theorem}
\newtheorem {myalgorithm}    [Theorem]  {Algorithm}  
\newtheorem {Procedure}    [Theorem]  {Prozedur}    
\newtheorem {corollary}    [Theorem]  {Corollary}
\newtheorem {Lemma}        [Theorem]  {Lemma}
\newtheorem {lemma}        [Theorem]  {Lemma}
\journal{arXiv}
\begin{document}
\begin{frontmatter}
\title{Computing the $2$-blocks of directed graphs}
\author{Raed Jaberi\corref{cor1}}
\cortext[cor1]{Tel.: +49 3677 69 2786; fax: +49 3677 69 1237.}
\address{Faculty of Computer Science and Automation, Teschnische Universität Ilmenau, 
\\$98694$ Ilmenau, Germany}
\ead{raed.jaberi@tu-ilmenau.de}
\begin{abstract}  
Let $G$ be a directed graph. A \textit{$2$-directed block} in $G$ is a maximal vertex set $C^{2d}\subseteq V$ with $|C^{2d}|\geq 2$ such that for each pair of distinct vertices $x,y \in C^{2d}$, there exist two vertex-disjoint paths from $x$ to $y$ and two vertex-disjoint paths from $y$ to $x$ in $G$. In contrast to the $2$-vertex-connected components of $G$, the subgraphs induced by the $2$-directed blocks may consist of few or no edges. In this paper we present two algorithms for computing the $2$-directed blocks of $G$ in $O(\min\lbrace m,(t_{sap}+t_{sb})n\rbrace n)$ time, where $t_{sap}$ is the number of the strong articulation points of $G$ and $t_{sb}$ is the number of the strong bridges of $G$.  
Furthermore, we study two related concepts: the $2$-strong blocks and the $2$-edge blocks of $G$.
We give two algorithms for computing the $2$-strong blocks of $G$ in $O( \min \lbrace m,t_{sap} n\rbrace  n)$ time and we show that the $2$-edge blocks of $G$ can be computed in $O(\min \lbrace m, t_{sb} n \rbrace n)$ time. In this paper we also study some optimization problems related to the strong articulation points and the $2$-blocks of a directed graph. Given a strongly connected graph $G=(V,E)$, find a minimum cardinality set $E^{*}\subseteq E$ such that $G^{*}=(V,E^{*})$ is strongly connected and the strong articulation points of $G$ coincide with the strong articulation points of $G^{*}$. This problem is called minimum strongly connected spanning subgraph with the same strong articulation points. We show that there is a linear time $17/3$ approximation algorithm for this NP-hard problem. We also consider the problem of finding a minimum strongly connected spanning subgraph with the same $2$-blocks in a strongly connected graph $G$. We present approximation algorithms for three versions of this problem, depending on the type of $2$-blocks. 
\end{abstract} 
\begin{keyword}
Directed graphs \sep Strong articulation points \sep Strong bridges \sep $2$-blocks \sep Graph algorithms \sep Approximation algorithms
\end{keyword}
\end{frontmatter}
\section{Introduction}
Let $G=(V,E)$ be a directed graph with $|V|=n$ vertices and $|E|=m$ edges. A \textit{strong articulation point} (SAP) of $G$ is a vertex whose removal increases the number of strongly connected components (SCCs) of $G$. A strong bridge of $G$ is an edge whose removal increases the number of SCCs of $G$. We use $t_{sap}$ to denote the number of the strong articulation points (SAPs) of $G$ and $t_{sb}$ to denote the number of the strong bridges of $G$. A directed graph $G=(V,E)$ is said to be $k$-vertex-connected if it has at least $k+1$ vertices and the induced subgraph on $V\setminus X$ is strongly connected for every $X \subsetneq V $ with $|X|<k$. Thus, a strongly connected graph $G=(V,E)$ is $2$-vertex-connected if and only if it has at least $3$ vertices and it contains no SAPs. The \textit{$2$-vertex-connected components} of a strongly connected graph $G$ are its maximal $2$-vertex-connected subgraphs. The concept was defined in \cite{ES80}. For more information see \cite{ILS12}.

In $2010$, Georgiadis \cite{G10} gave a linear time algorithm to test whether a strongly connected graph $G$ is $2$-vertex-connected or not. Later, Italiano et al. \cite{ILS12} gave a linear time algorithm for the same problem which is faster in practice than the algorithm of Georgiadis \cite{G10}. Furthermore, Italiano et al. \cite{ILS12} presented a linear time algorithm for finding all the SAPs of a directed graph $G$. They also gave two linear time algorithms for calculating all the strong bridges of a directed graph $G$. In $2014$, Jaberi \cite{J14} presented algorithms for computing the $2$-vertex-connected components of directed graphs in $O(nm)$ time. The concept of $2$-vertex-connected components is not ideal because there are directed graphs in which many vertices are well connected with each other but they lie in distinct $2$-vertex-connected components or in no $2$-vertex-connected component. In this paper we study alternative concepts similar to the $k$-blocks of undirected graphs which were defined in \cite{CDHH13} as follows. A \textit{$k$-block} in an undirected graph $G=(V,E)$ is a maximal vertex set $U\subseteq V$ with $|U|\geq k$ such that no set $X\subseteq V$ with $|X|<k$ separates any two vertices of $U\setminus X$ in the undirected graph $G$. In $2013$, Carmesin et al. \cite{CDHH13} showed that there exists a $O(\min\lbrace k,\sqrt{n}\rbrace n^{4})$-time algorithm that calculates all the \textit{$k$-blocks} in an undirected graph. The $2$-blocks in an undirected graph $G$ are similar to the $2$-vertex connected components of the undirected graph $G$, which can be found in linear time using Tarjan's algorithm \cite{T72}. In this paper we introduce and study three new concepts: the $2$-directed blocks, the $2$-strong blocks, and the $2$-edge blocks of directed graphs.
A \textit{$2$-directed block} in $G$ is a maximal vertex set $C^{2d}\subseteq V$ with $|C^{2d}|\geq 2$ such that for each pair of distinct vertices $x,y \in C^{2d}$, there exist two vertex-disjoint paths from $x$ to $y$ and two vertex-disjoint paths from $y$ to $x$ in $G$. In contrast to the $2$-vertex-connected components of $G$, the subgraphs induced by the $2$-directed blocks may consist of few or no edges. A \textit{$2$-strong block} in $G$ is a maximal vertex set $C^{2s}\subseteq V$ with $|C^{2s}|\geq 2$ such that for each pair of distinct vertices $x,y \in C^{2s}$ and for each vertex $z \in V\setminus \lbrace x,y\rbrace$, the vertices $x$ and $y$ lie in the same strongly connected component (SCC) of the graph $G\setminus \lbrace z\rbrace$. 
A \textit{$2$-edge block} in $G$ is a maximal vertex set $C^{2e}\subseteq V$ with $|C^{2e}|>1$ such that for each pair of distinct vertices $v, w \in C^{2e}$, there are two edge-disjoint paths from $v$ to $w$ and two edge-disjoint paths from $w$ to $v$ in $G$. These concepts capture the idea that it is difficult to separate vertices in a block in slightly different ways, and very different from the concept of $2$-vertex-connected components. Our new concepts are illustrated in Figure \ref{fig:example2Blocksand2Vcc}. 
   \begin{figure}[htbp]
    \centering
    \scalebox{0.95}{
    \begin{tikzpicture}[xscale=2]
        \tikzstyle{every node}=[color=black,draw,circle,minimum size=0.8cm]
       \node (v1) at (-0.5,2) {$1$};
        \node (v2) at (-2.5,0) {$2$};
        \node (v3) at (-0.5, -2.5) {$3$};
        \node (v4) at (1,-0.5) {$4$};
        \node (v5) at (0.7,2.2) {$5$};
        \node (v6) at (2.5,0.4) {$6$};
        \node (v7) at (2.5,2.5) {$7$};
        \node (v8) at (0.5,-2.8) {$8$};
        \node (v9) at (1.8,-0.8) {$9$};
        \node (v10) at (2.2,-2) {$10$};
        \node (v11) at (3.5,-2.5) {$11$};
        \node (v12) at (3.5,2) {$12$};
       \begin{scope}   
            \tikzstyle{every node}=[auto=right]   
             \draw [-triangle 45] (v1) to (v2);
             \draw [-triangle 45] (v2) to [bend left ](v1);
             \draw [-triangle 45] (v1) to (v3);
             \draw [-triangle 45] (v3) to [bend left ](v1);
             \draw [-triangle 45] (v2) to [bend right ] (v3);
             \draw [-triangle 45] (v3) to (v2);
             \draw [-triangle 45] (v3) to (v4);
             \draw [-triangle 45] (v5) to (v3);
             \draw [-triangle 45] (v5) to (v7);
             \draw [-triangle 45] (v6) to (v5);
             \draw [-triangle 45] (v8) to (v4);
             \draw [-triangle 45] (v9) to (v8);
             \draw [-triangle 45] (v6) to (v9);
             \draw [-triangle 45] (v8) to (v10);
             \draw [-triangle 45] (v10) to [bend right ] (v6);
             \draw [-triangle 45] (v6) to [bend left ] (v11);
             \draw [-triangle 45] (v11) to [bend left ] (v8);
             \draw [-triangle 45] (v6) to (v12);
             \draw [-triangle 45] (v12) to (v7);
             \draw [-triangle 45] (v7) to (v6);
             \draw [-triangle 45] (v4) to (v6);
             \draw [-triangle 45] (v4) to[bend right ]  (v10);
             \draw [-triangle 45] (v6) to (v1);
             \draw [-triangle 45] (v10) to (v11);
             \draw [-triangle 45] (v2) to (v6);
             \draw [-triangle 45] (v10) to (v9);
        \end{scope}
    \end{tikzpicture}}
     \caption{A strongly connected graph $G$, which contains one $2$-vertex-connected component $\lbrace1,2,3\rbrace$, two $2$-directed blocks $\lbrace 6,1,2,3\rbrace,\lbrace 8,10,6,4\rbrace$, four $2$-strong blocks $\lbrace 6,1,2,3\rbrace,\lbrace 9,8\rbrace,\lbrace 8,10,6,4\rbrace,\lbrace 7,6\rbrace$, and one $2$-edge block $\lbrace 1,2,3,4,6,8,10\rbrace$. Notice that the $2$-vertex-connected component $\lbrace1,2,3\rbrace$ is a subset of the $2$-directed block $\lbrace 6,1,2,3\rbrace$. We shall also see that each $2$-directed block is a subset of a $2$-strong block.}
\label{fig:example2Blocksand2Vcc}
\end{figure}

In this paper we also study some optimization problems related to the SAPs and the $2$-blocks of a directed graph. Given a strongly connected graph $G=(V,E)$, find a minimum cardinality set $E^{*}\subseteq E$ such that $G^{*}=(V,E^{*})$ is strongly connected and the SAPs of $G$ coincide with the SAPs of $G^{*}$. This problem is called minimum strongly connected spanning subgraph (MSCSS) with the same SAPs, denoted by MS-SAPs. Moreover, We consider the problem of finding a MSCSS with the same $2$-blocks, defined as follows. Given a strongly connected graph $G=(V,E)$, the goal is to find a subset $E^{*}\subseteq E$ of minimum size such that $G^{*}=(V,E^{*})$ is strongly connected and the $2$-blocks of $G$ coincide with $2$-blocks of $G^{*}=(V,E^{*})$. There are three versions of this problem, depending on the type of $2$-blocks: MSCSS with the same $2$-directed blocks (denoted by MS-2DBs), MSCSS with the same $2$-strong blocks (denoted by MS-2SBs), and MSCSS with the same $2$-edge blocks (denoted by MS-2EBs). MS-2DBs and MS-2SBs problems correspond to the problem of finding a minimum-size $2$-vertex-connected spanning subgraph of an undirected graph \cite{VV00}.
The problem of finding a minimum-size $2$-vertex connected spanning subgraph of a directed graph $G$ is a special case of the problems MS-SAPs, MS-2SBs and MS-2DBs when $G$ is $2$-vertex-connected. Moreover, the problem of finding a minimum-cardinality $2$-edge connected spanning subgraph of a directed graph $G$ is a special case of the MS-2EBs problem when $G$ is $2$-edge-connected. Therefore, by results from \cite{CJ79} the problems MS-SAPs, MS-2SBs MS-2EBs and MS-2DBs are NP hard. 

Let $G$ be a directed graph. In this paper, we present two algorithms for computing the $2$-directed blocks of $G$ in $O(\min\lbrace m,(t_{sap}+t_{sb})n\rbrace n)$ time. We also present two algorithms for computing the $2$-strong blocks of $G$ in $O( \min \lbrace m,t_{sap} n\rbrace  n)$ time and we show that the $2$-edge blocks of $G$ can be computed in $O(\min \lbrace m, t_{sb} n \rbrace n)$ time.
Furthermore, we show that there is a linear time $17/3$ approximation algorithm for the MS-SAPs problem. We also present a $(2t_{sap}+17/3)$ approximation algorithm for the MS-2SBs problem and a $(2t_{sb} +4)$ approximation algorithm for the MS-2EBs problem. Moreover, we prove that there exist a $(2(t_{sap}+ t_{sb}) +29/3)$ approximation algorithm for the MS-2DBs problem. 
\subsection{Related Work}
In independent work, Georgiadis, Italiano, Laura, and Parotsidis \cite{GILP14} have studied $2$-edge blocks and have given linear time algorithms for finding them. This is better than our results in Section \ref{def:sec2eb}.
\section{Graph Terminology and Notation} \label{def:DandN}
Throughout this paper we consider only simple directed graphs. In this section we recall some basic definitions \cite{LT79,LM69,ILS12}.
A \textit{flowgraph} $G(v)=(V,E,v)$ is a directed graph with $|V|=n$ vertices, $|E|=m$ edges, and a distinguished start vertex $v \in V$ such that every vertex $w\in V$ is reachable from $v$. For a flowgraph $G(v)=(V,E,v)$, the \textit{dominance relation} of $G(v)$ is defined as follows: a vertex $u \in V$ is a \textit{dominator} of vertex $w \in V$ if every path from $v$ to $w$ includes $u$. By $dom(w)$ we denote the set of dominators of vertex $w$. Obviously, the set of dominators of the start vertex in $G(v)$ is $dom(v)=\lbrace v \rbrace$. For every vertex $w\in V$ with $w \neq v$, $\lbrace v,w \rbrace$ is a subset of $dom(w)$; we call $w,v$ the \textit{trivial dominators} of $w$. A vertex $u$ is a \textit{non-trivial dominator} in $G(v)$ if there is some $w \notin \lbrace v,u\rbrace$ such that $u \in dom(w)\setminus\lbrace v\rbrace$. The set of all non-trivial dominators is denoted by $D(v)$.
The dominance relation is transitive. A vertex $u\in V$ is an \textit{immediate dominator} of vertex $w \in V\setminus\lbrace v\rbrace$ in $G(v)$ if $u\in dom(w)\setminus\lbrace w\rbrace$ and all elements of $dom(w)\setminus \lbrace w \rbrace$ are dominators of $u$. Every vertex $w$ of $G(v)$ except the start vertex $v$ has a unique immediate dominator. The edges $(u,w)$ where $u$ is the immediate dominator of $w$ form a tree with root $v$, called the \textit{dominator tree} of $G(v)$, denoted by $DT(v)$. Two spanning trees $T$ and $T'$ of $G(v)$ are called independent if for every vertex $w\in V\setminus \lbrace v\rbrace$, the paths from $v$ to $w$ in $T$ and $T'$ contain only $dom(w)$ in common \cite{GT12}. An edge $(x,y)$ is an \textit{edge dominator} of vertex $w$ if every path from  $v$ to $w$ in $G(v)$ contains edge $(x,y)$. Let $G=(V,E)$ be a directed graph. Let $F$ be a subset of $E$ and let $U$ be a subset of $V$. We use 
$G\setminus F$ to denote the directed graph obtained from $G$ by deleting all the edges in $F$. We use $G\setminus U$ to denote the directed graph obtained form $G$ by removing all the vertices in $U$ and their incident edges. The reversal graph of $G$ is the directed graph 
$G^{R}=(V,E^{R})$, where $E^{R}=\lbrace (w,u) \mid (u,w) \in E \rbrace$. Let $v$ be a vertex in $G$. By $D^{R}(v)$ we denote the set of all non-trivial dominators in the flowgraph $G^{R}(v)=(V,E^{R},v)$. Let $G=(V,E)$ be an undirected graph. A block of $G$ is a maximal connected subgraph of $G$ that contains no articulation points. An undirected graph $G$ is called chordal if every cycle of length at least $4$ has a chord \cite{G72,RT75}.
\section{Computing $2$-directed blocks} \label{def:c2dbofdg} 
 In this section we present our first algorithm for computing the $2$-directed blocks of directed graphs. Our second algorithm will be described in section \ref{def:secanorithAlgorithmfor2db}. We consider only strongly connected graphs since the $2$-directed blocks of a directed graph are the union of the $2$-directed blocks of its SCCs.
Let $G=(V,E)$ be a strongly connected graph. We write $x \overset{2}{\rightsquigarrow} y$ if there exist two vertex-disjoint paths from $x$ to $y$ in $G$. For distinct vertices $x,y \in V$ we write
$x \overset{2}{\leftrightsquigarrow } y$ if $x \overset{2}{\rightsquigarrow} y$ and $y \overset{2}{\rightsquigarrow} x$. A $2$-directed block in $G$ is a maximal vertex set $C^{2d}\subseteq V$ with $|C^{2d}|\geq 2$ such that for each pair of distinct vertices $x,y \in C^{2d}$, we have $x \overset{2}{\leftrightsquigarrow } y$.
\begin{lemma} \label{def:lem1}
Let $G=(V,E)$ be a strongly connected graph and let $x,y$ be distinct vertices in $G$. Then $x \overset{2}{\leftrightsquigarrow } y$ if and only if for each vertex $w\in V\setminus \lbrace x,y \rbrace$ the vertices $x,y$ lie in the same SCC of $G\setminus \lbrace w\rbrace$ and in the same SCC of $G\setminus \lbrace (x,y),(y,x)\rbrace$.
\end{lemma}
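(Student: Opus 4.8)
The plan is to prove the two implications separately: the ``only if'' direction is a direct unpacking of the definition, while the ``if'' direction rests on the vertex version of Menger's theorem, with the adjacent case handled through the edge-deletion hypothesis.

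For the forward direction, suppose $x \overset{2}{\leftrightsquigarrow } y$, so that there are two internally vertex-disjoint paths $P_1,P_2$ from $x$ to $y$ and two internally vertex-disjoint paths $Q_1,Q_2$ from $y$ to $x$. For the vertex part I would note that any $w\in V\setminus\lbrace x,y\rbrace$ is an internal vertex of at most one of $P_1,P_2$ and of at most one of $Q_1,Q_2$; hence a surviving path in each direction witnesses that $x$ and $y$ remain in the same SCC of $G\setminus\lbrace w\rbrace$. For the edge part, since $G$ is simple at most one of $P_1,P_2$ can be the single edge $(x,y)$, so at least one of them, say $P_1$, has an internal vertex. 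A simple $x$-$y$ path with an internal vertex visits $x$ and $y$ only as its endpoints, so it uses neither $(x,y)$ nor $(y,x)$ and therefore survives in $G\setminus\lbrace (x,y),(y,x)\rbrace$; the same argument applied to the $Q_j$ gives a $y$-$x$ path there, so $x,y$ lie in the same SCC of $G\setminus\lbrace (x,y),(y,x)\rbrace$.

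For the reverse direction it suffices, by symmetry, to produce two internally vertex-disjoint paths from $x$ to $y$; the same reasoning yields the paths from $y$ to $x$. I would split on whether $(x,y)\in E$. If $(x,y)\notin E$, then $x$ and $y$ are non-adjacent, and the hypothesis that $x,y$ stay in the same SCC of $G\setminus\lbrace w\rbrace$ for every internal $w$ shows that no single internal vertex destroys all $x$-to-$y$ paths; by the vertex form of Menger's theorem for non-adjacent vertices the maximum number of internally vertex-disjoint $x$-$y$ paths equals the minimum internal separator, which is thus at least $2$. If instead $(x,y)\in E$, then the edge hypothesis guarantees a path from $x$ to $y$ inside $G\setminus\lbrace (x,y),(y,x)\rbrace$, and since the edge $(x,y)$ has no internal vertices it is internally vertex-disjoint from that path, so again I obtain two internally vertex-disjoint $x$-$y$ paths.

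The step I expect to be the main obstacle is exactly the adjacent case $(x,y)\in E$: here the clean ``min-cut equals max-flow'' form of Menger's theorem is unavailable, because adjacent vertices can never be separated by deleting internal vertices, so the vertex hypothesis alone cannot force two disjoint paths. The point to get right is that deleting the pair $\lbrace (x,y),(y,x)\rbrace$ is precisely the correction needed, and that a nontrivial simple $x$-$y$ path automatically avoids both of these edges; this is what makes the edge condition both necessary (forward direction) and sufficient (reverse direction), and it is why the lemma states the edge-deletion requirement symmetrically rather than deleting only $(x,y)$.
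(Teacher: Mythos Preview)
Your proposal is correct and follows essentially the same route as the paper: both prove ``$\Rightarrow$'' by noting that among two vertex-disjoint $x$-$y$ paths at most one can contain a given $w$ and at most one can be the single edge $(x,y)$, and both prove ``$\Leftarrow$'' by splitting on $(x,y)\in E$, invoking Menger's theorem in the non-adjacent case and combining the direct edge with the path supplied by the edge-deletion hypothesis in the adjacent case. Your write-up is slightly more careful than the paper's in arguing that a nontrivial simple $x$-$y$ path avoids both $(x,y)$ and $(y,x)$, but there is no substantive difference in strategy.
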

\begin{proof} ``$\Leftarrow$'': It is sufficient to show that there are two vertex-disjoint paths from $x$ to $y$ in $G$. We consider two cases.
\begin{enumerate}
\item $(x,y)\notin E$. Let $w$ be any vertex in $V\setminus \lbrace x,y \rbrace$. Since the vertices $x,y$ lie in the same SCC of $G\setminus \lbrace w\rbrace$, there exists a path from $x$ to $y$ in $G\setminus \lbrace w\rbrace$. Thus,
one can not interrupt all paths from $x$ to $y$ by removing $w$ from $G$.
Since $x$ and $y$ are not adjacent, by Menger's Theorem for vertex connectivity \cite{BR12} we have $x \overset{2}{\rightsquigarrow } y$.
\item $(x,y) \in E$. Since $x,y$ lie in the same SCC of $G\setminus \lbrace (x,y),(y,x)\rbrace$, there is a path $p_1$ from $x$ to $y$ in $G\setminus \lbrace (x,y)\rbrace$. Thus, there are two vertex-disjoint paths $p_1$ and $p_2=(x,y)$ from $x$ to $y$ in $G$.
\end{enumerate}
``$\Rightarrow$'': We know there are two vertex-disjoint paths $p_1$ and $p_2$ from $x$ to $y$ in $G$. We must show that in $G\setminus \lbrace w\rbrace$ and in $G\setminus \lbrace (x,y)\rbrace$ there is a path from $x$ to $y$. Since at most one of $p_1$ and $p_2$ contains $w$ and at most one of $p_1$ and $p_2$ is edge $(x,y)$, the claim follows. 

\end{proof}
\begin{lemma}\label{def:xyInSameSccWithoutE}
Let $G=(V,E)$ be a strongly connected graph and let $x,y$ be distinct vertices in $G$ such that $x \overset{2}{\leftrightsquigarrow } y$. Then the vertices $x,y$ lie in the same SCC of $G\setminus\lbrace e\rbrace$ for any edge $e\in E$. 
\end{lemma}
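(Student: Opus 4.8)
The plan is to first upgrade the two vertex-disjoint paths guaranteed by $x \overset{2}{\leftrightsquigarrow} y$ into edge-disjoint paths, and then finish with a one-line counting argument. The central observation I would establish is that two \emph{distinct} internally vertex-disjoint paths from $x$ to $y$ are automatically edge-disjoint. To see this, suppose some edge $(a,b)$ lay on both paths. Since $x$ occurs only as the first vertex of each path and $y$ only as the last, any shared \emph{internal} vertex would violate internal vertex-disjointness; hence the endpoints $a,b$ of the shared edge must satisfy $a=x$ and $b=y$. But then each path begins with the edge $(x,y)$, and since a path is simple this forces both paths to equal the single edge $(x,y)$, contradicting that they are two distinct paths. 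Therefore no edge can be shared.

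Having established this, the remainder is straightforward. By hypothesis $x \overset{2}{\rightsquigarrow} y$ and $y \overset{2}{\rightsquigarrow} x$, so there are two vertex-disjoint paths $p_1,p_2$ from $x$ to $y$ and two vertex-disjoint paths $q_1,q_2$ from $y$ to $x$; by the observation above, $p_1,p_2$ are edge-disjoint and $q_1,q_2$ are edge-disjoint. Now fix any edge $e\in E$. Since $p_1$ and $p_2$ share no edge, $e$ lies on at most one of them, so at least one of $p_1,p_2$ survives in $G\setminus\lbrace e\rbrace$ and still connects $x$ to $y$. By the same reasoning at least one of $q_1,q_2$ survives and still connects $y$ to $x$. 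Hence $x$ and $y$ are mutually reachable in $G\setminus\lbrace e\rbrace$, which is exactly the statement that they lie in the same SCC of $G\setminus\lbrace e\rbrace$.

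I expect the only delicate point to be the degenerate case inside the edge-disjointness observation, namely when the direct edge $(x,y)$ is present in $G$ and one of the two vertex-disjoint paths is exactly that edge. The argument above dispatches it by noting that the second path cannot also start with $(x,y)$ without coinciding with the first, so edge-disjointness still holds. Everything else follows immediately from the definitions, so beyond writing that single case carefully I do not anticipate any real obstacle.
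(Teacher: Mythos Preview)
Your proof is correct and follows essentially the same route as the paper's: observe that the two vertex-disjoint paths in each direction are in fact edge-disjoint, so deleting a single edge destroys at most one path in each direction. The paper simply asserts the vertex-disjoint $\Rightarrow$ edge-disjoint step without justification, whereas you supply the careful case analysis (including the degenerate case where one path is the single edge $(x,y)$); otherwise the arguments coincide.
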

\begin{proof}
There exist two vertex-disjoint paths $p_1,p_2$ from $x$ to $y$ and two vertex-disjoint paths $p_3,p_4$ from $y$ to $x$ in $G$ since $x \overset{2}{\leftrightsquigarrow } y$. The paths $p_1,p_2$ are edge-disjoint and the paths $p_3,p_4$ are also edge-disjoint. Hence, there exist a path from $x$ to $y$ and a path from $y$ to $x$ in $G\setminus\lbrace e\rbrace$ for any edge $e\in E$.
\end{proof}
$2$-directed blocks intersect in at most one vertex. ($2$-vertex-connected components have the same property, see \cite{ES80} and \cite{J14}.)
\begin{lemma}\label{def:lem2}
Let $C^{2d}_1,C^{2d}_2$ be distinct $2$-directed blocks in a strongly connected graph $G=(V,E)$. Then $C^{2d}_1$ and $C^{2d}_2$ have at most one vertex in common.
\end{lemma}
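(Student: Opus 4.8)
The plan is to argue by contradiction through maximality. Suppose $C^{2d}_1$ and $C^{2d}_2$ share at least two distinct vertices $a,b$. I would show that their union $C^{2d}_1\cup C^{2d}_2$ again satisfies the defining property of a $2$-directed block, i.e. $u\overset{2}{\leftrightsquigarrow } v$ holds for every pair of distinct vertices $u,v$ in the union. Since the union strictly contains at least one of the two distinct blocks, this contradicts the maximality demanded in the definition of a $2$-directed block, forcing $C^{2d}_1=C^{2d}_2$.

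To verify the property for the union, the only pairs not already handled are the \emph{cross} pairs $x\in C^{2d}_1\setminus C^{2d}_2$ and $y\in C^{2d}_2\setminus C^{2d}_1$; note that then $x,y\notin\{a,b\}$. I would establish $x\overset{2}{\leftrightsquigarrow } y$ via Lemma \ref{def:lem1}, which reduces the claim to two conditions: first, that $x,y$ lie in the same SCC of $G\setminus\{w\}$ for every $w\in V\setminus\{x,y\}$; and second, that $x,y$ lie in the same SCC of $G\setminus\{(x,y),(y,x)\}$.

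For the first condition, given $w$ I would pick whichever of $a,b$ differs from $w$ (at least one does), call it $c$. Since $x,c\in C^{2d}_1$ and $c,y\in C^{2d}_2$, applying Lemma \ref{def:lem1} to the pairs $\{x,c\}$ and $\{c,y\}$ places $x,c$ in one SCC of $G\setminus\{w\}$ and $c,y$ in one SCC of $G\setminus\{w\}$; transitivity of ``being in the same SCC'' then puts $x,y$ together. This part is routine.

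The main obstacle is the second condition, because removing the two edges $(x,y),(y,x)$ \emph{simultaneously} is not covered by Lemma \ref{def:xyInSameSccWithoutE}, which deletes only a single edge. Here I would route through a common vertex $a$ and exploit the endpoint structure of the paths. Among the two vertex-disjoint paths from $x$ to $a$ guaranteed by $x\overset{2}{\leftrightsquigarrow } a$, the edge $(y,x)$ cannot occur at all, since its head $x$ appears only as the start vertex, and the edge $(x,y)$ can occur in at most one of them, since otherwise both paths would pass through $y$ and violate disjointness; hence at least one path from $x$ to $a$ avoids both forbidden edges. The symmetric analysis of the two vertex-disjoint paths from $a$ to $y$, where now $(x,y)$ can appear only as the final edge and $(y,x)$ cannot appear, yields a path from $a$ to $y$ avoiding both edges. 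Concatenating gives an $x$-to-$y$ walk in $G\setminus\{(x,y),(y,x)\}$, and the mirror-image argument via $y\overset{2}{\leftrightsquigarrow } a$ and $a\overset{2}{\leftrightsquigarrow } x$ gives a $y$-to-$x$ walk, establishing the second condition. With both conditions in hand, Lemma \ref{def:lem1} yields $x\overset{2}{\leftrightsquigarrow } y$, and the desired contradiction follows.
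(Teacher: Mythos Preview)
Your proof is correct and shares the paper's overall strategy: both argue by contradiction, assume the two blocks share at least two vertices, and show that any cross pair $x\in C^{2d}_1\setminus C^{2d}_2$, $y\in C^{2d}_2\setminus C^{2d}_1$ would then satisfy $x\overset{2}{\leftrightsquigarrow} y$, violating maximality. Your vertex-removal step (the first condition) is exactly the paper's argument.

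The difference is in the edge case. By invoking Lemma~\ref{def:lem1} you commit to showing that $x,y$ stay in one SCC after deleting \emph{both} edges $(x,y)$ and $(y,x)$ simultaneously, which forces your careful disjointness analysis of which of the two vertex-disjoint paths can carry which forbidden edge. The paper instead works directly from the definition: it assumes there are no two vertex-disjoint paths from $v$ to $w$ for some cross pair and splits on whether $(v,w)\in E$. When $(v,w)\in E$, only the single edge $(v,w)$ must be avoided, and Lemma~\ref{def:xyInSameSccWithoutE} applied to the pairs $(v,u)$ and $(u,w)$ with any $u\in C^{2d}_1\cap C^{2d}_2$ immediately supplies a detour. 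So your route through Lemma~\ref{def:lem1} is sound but a bit heavier than necessary; the paper's direct argument sidesteps the two-edge deletion entirely.
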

\begin{proof}
Indirect. Assume that $|C^{2d}_1 \cap C^{2d}_2|>1$. By renaming we can assume that there are at least two vertices $v \in C^{2d}_1,w\in C^{2d}_2$ with $v,w \notin C^{2d}_1 \cap C^{2d}_2$ such that there are no two vertex-disjoint paths from $v$ to $w$ in $G$. We consider two cases.
\begin{enumerate}
\item $(v,w) \notin E$. By Menger's Theorem \cite{BR12} there is some vertex $s \in V \setminus \lbrace v,w\rbrace$ such that $s$ lies on all paths from $v$ to $w$. Let $z$ be a vertex in $(C^{2d}_1 \cap C^{2d}_2)\setminus \lbrace s\rbrace$. Since $C^{2d}_1$ and $C^{2d}_2$ are $2$-directed blocks, there is a path from $v$ to $z$ in $G\setminus\lbrace s\rbrace $ and a path from $z$ to $w$ in $G\setminus\lbrace s\rbrace $, hence there is a path from $v$ to $w$ in $G\setminus\lbrace s\rbrace $, which is a contradiction. 
\item $(v,w) \in E$. In this case there is no path from $v$ to $w$ in $G\setminus \lbrace (v,w) \rbrace$. Let $u$ be a vertex in $C^{2d}_1 \cap C^{2d}_2$. But, again by the definition of $2$-directed blocks, there are paths from $v$ to $u$ and from $u$ to $w$ in $G\setminus \lbrace (v,w) \rbrace$, a contradiction.
\end{enumerate}
\end{proof}
Next we note that $2$-directed blocks can not form cycles in the following sense.
\begin{lemma}\label{def:AllVerticesOfCycleAreIn2directedBlock}
Let $G=(V,E)$ be a strongly connected graph and let $v_0,v_1,\ldots,v_l$ be distinct vertices of $G$ such that $v_0 \overset{2}{\leftrightsquigarrow } v_l$ and $v_{i-1} \overset{2}{\leftrightsquigarrow } v_i$ for $i \in\lbrace 1,2\ldots,l\rbrace$. Then all the vertices $v_0,v_1,\ldots,v_l$ lie in the same $2$-directed block of $G$.
\end{lemma}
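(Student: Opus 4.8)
The plan is to prove the stronger statement that \emph{every} pair $v_a, v_b$ (not just consecutive ones) satisfies $v_a \overset{2}{\leftrightsquigarrow} v_b$; once this holds, the set $\lbrace v_0,\ldots,v_l\rbrace$ is a clique for the relation $\overset{2}{\leftrightsquigarrow}$ and, being finite, extends to a maximal such clique, which by definition is a $2$-directed block containing all the $v_i$. For consecutive pairs the relation is part of the hypothesis, so it suffices to treat a pair $v_a, v_b$ that is \emph{not} adjacent on the cycle $v_0, v_1, \ldots, v_l, v_0$. For such a pair I will verify the two conditions of Lemma \ref{def:lem1}: (i) $v_a, v_b$ lie in the same SCC of $G \setminus \lbrace w\rbrace$ for every $w \in V \setminus \lbrace v_a, v_b\rbrace$, and (ii) $v_a, v_b$ lie in the same SCC of $G \setminus \lbrace (v_a, v_b), (v_b, v_a)\rbrace$.

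For (i), fix $w$. For every consecutive pair $(v_{i-1}, v_i)$ with $w \notin \lbrace v_{i-1}, v_i\rbrace$, Lemma \ref{def:lem1} applied to $v_{i-1} \overset{2}{\leftrightsquigarrow} v_i$ keeps $v_{i-1}, v_i$ in one SCC of $G \setminus \lbrace w\rbrace$. If $w$ is not a cycle vertex, all consecutive pairs qualify and transitivity of ``same SCC'' puts the whole cycle, hence $v_a, v_b$, in one SCC. If $w = v_k$ for some $k$, only the two pairs incident to $v_k$ are lost; the remaining consecutive relations still chain the path $v_{k+1}, \ldots, v_l, v_0, \ldots, v_{k-1}$ together in $G \setminus \lbrace v_k\rbrace$, and since $v_a, v_b \neq v_k$ they both lie on this path.

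The heart of the argument is (ii), and the key observation is that a single simple path cannot use both edges $(v_a, v_b)$ and $(v_b, v_a)$, since that would force it to visit $v_a$ (equivalently $v_b$) twice. Write $G' = G \setminus \lbrace (v_a, v_b), (v_b, v_a)\rbrace$. I claim every consecutive pair stays in one SCC of $G'$. Take such a pair $(u, u')$ and the two internally vertex-disjoint paths $P, Q$ from $u$ to $u'$ guaranteed by $u \overset{2}{\leftrightsquigarrow} u'$. If $\lbrace u, u'\rbrace \cap \lbrace v_a, v_b\rbrace = \emptyset$, then should both $P$ and $Q$ meet a deleted edge, one would contain $(v_a, v_b)$ and the other $(v_b, v_a)$ (each contains at most one), forcing $v_a$ to be an internal vertex of both, which is impossible by disjointness; hence one path survives in $G'$. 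If $u$ or $u'$ equals $v_a$ or $v_b$ (non-adjacency of $v_a, v_b$ guarantees it is not both), then only the deleted edge incident to that endpoint on the appropriate side can occur, it lies on at most one of $P, Q$, and again one path survives. The same reasoning handles the two paths from $u'$ to $u$, so $u, u'$ stay in one SCC of $G'$; transitivity then places all cycle vertices, in particular $v_a$ and $v_b$, in one SCC of $G'$.

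With (i) and (ii) verified, Lemma \ref{def:lem1} yields $v_a \overset{2}{\leftrightsquigarrow} v_b$ for every pair, and the maximal-clique argument above finishes the proof. The step I expect to be most delicate is the case in (ii) where an endpoint of the consecutive pair coincides with $v_a$ or $v_b$: there one must argue carefully that a deleted edge incident to the \emph{start} vertex of a simple path can occur only as its first edge (and incident to the \emph{end} vertex only as its last edge), so that at most one of the two vertex-disjoint paths is destroyed.
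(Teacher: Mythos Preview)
Your proof is correct. The overall idea---walk around the cycle and use the consecutive relations to transport connectivity---is the same as the paper's, but the execution differs in two ways worth noting.

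First, the paper argues by contradiction: it assumes some $v_r \not\overset{2}{\leftrightsquigarrow} v_q$ and splits on whether the edge $(v_r,v_q)$ is present. You instead verify the two conditions of Lemma~\ref{def:lem1} directly for every non-adjacent pair. Your part~(i) and the paper's case~$1$ are essentially mirror images of each other.

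Second, and more substantively, your part~(ii) is handled quite differently. The paper, working toward a contradiction in just one direction, only needs to delete the \emph{single} edge $(v_r,v_q)$ and then invokes Lemma~\ref{def:xyInSameSccWithoutE} to chain each consecutive pair through $G\setminus\{(v_r,v_q)\}$; that immediately produces a second $v_r$--$v_q$ path disjoint from the direct edge. You delete both $(v_a,v_b)$ and $(v_b,v_a)$ at once and argue from scratch that one of the two vertex-disjoint $u$--$u'$ paths must survive, with a case split on whether an endpoint coincides with $v_a$ or $v_b$. Your argument is sound (the observation that any path touching a deleted edge must pass through $v_a$, hence at most one of two internally disjoint paths can do so, is the clean way to see it), but it is more work than necessary: Lemma~\ref{def:xyInSameSccWithoutE} was proved precisely to short-circuit this. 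The trade-off is that your route is self-contained, while the paper's is shorter but leans on an auxiliary lemma.
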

\begin{proof}
Indirect. Assume that there exist two vertices $v_r,v_q$ with $r,q\in \lbrace 0,1,\ldots,l\rbrace$ such that $v_r,v_q$ lie in distinct $2$-directed blocks of $G$ and $r<q$. By renaming, we may assume that there do not exist two vertex-disjoint paths from $v_r$ to $v_q$ in $G$. We consider two cases.
\begin{enumerate}
\item $(v_r,v_q)\notin E$. In this case, all the paths from $v_r$ to $v_q$ contain a vertex $s\in V\setminus\lbrace v_r,v_q \rbrace$. Therefore, there is no path from $v_r$ to $v_q$ in $G\setminus \lbrace s\rbrace$. There are two cases to consider.
\begin{enumerate}
\item $s\notin \lbrace v_{r+1},v_{r+2},\ldots,v_{q-1}\rbrace$. In this case, for each $i\in \lbrace r+1,r+2,\ldots,q\rbrace$, there is a path from $v_{i-1}$ to $v_i$ in $G\setminus \lbrace s\rbrace$ by Lemma \ref{def:lem1}, a contradiction.
\item $s \in \lbrace v_{r+1},v_{r+2},\ldots,v_{q-1}\rbrace$. Then by Lemma \ref{def:lem1}, there are paths from $v_r$ to $v_{r-1},\ldots$ from $v_1$ to $v_0,$ from $v_0$ to $v_l,$ from $v_l$ to $v_{l-1},\ldots$ from $v_{q+1}$ to $v_q$ in $G\setminus \lbrace s\rbrace$, again a contradiction.
\end{enumerate}
\item $(v_r,v_q) \in E$. By Lemma \ref{def:xyInSameSccWithoutE}, for each $i\in \lbrace r+1,r+2,\ldots,q\rbrace$, the vertices $v_{i-1},v_i$ lie in the same SCC of $G\setminus\lbrace(v_r,v_q)\rbrace $. Therefore, there exists a path $p_1$ from $v_r$ to $v_q$ in $G\setminus\lbrace(v_r,v_q)\rbrace $. Consequently, there are two vertex-disjoint paths $p_1$ and $p_2=(v_r,v_q)$ from $v_r$ to $v_q$ in $G$, but this is a contradiction.
\end{enumerate}
\end{proof}
We construct the $2$-directed block graph $G^{2d}=(V^{2d},E^{2d})$ of a strongly connected graph $G=(V,E)$  as follows. It has a vertex $v_i$ for every $2$-directed block $C^{2d}_i$ and all vertices $w$ that lie in the intersection of (at least) two $2$-directed blocks. For each pair of distinct $2$-directed blocks $C^{2d}_i,C^{2d}_j$ with $C^{2d}_i \cap C^{2d}_j=\lbrace w\rbrace$, we add two undirected edges $(v_i,w),(w,v_j)$ to $E^{2d}$.
\begin{lemma}\label{def:G2dforest}
Let $G=(V,E)$ be a strongly connected graph. Then the $2$-directed block graph $G^{2d}=(V^{2d},E^{2d})$ of $G$ is a forest.
\end{lemma}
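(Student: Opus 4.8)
The plan is to show that $G^{2d}$ contains no cycle. First I would record that $G^{2d}$ is bipartite: by construction every edge joins a block-vertex $v_i$ (representing a $2$-directed block $C^{2d}_i$) to a vertex $w$ of $G$ lying in the intersection of at least two blocks, so no edge has both endpoints on the same side. Hence any cycle in $G^{2d}$ has even length and can be written as $v_{i_0}, w_0, v_{i_1}, w_1, \ldots, v_{i_{k-1}}, w_{k-1}, v_{i_0}$ with $k \geq 2$, where the block-vertices $v_{i_0},\ldots,v_{i_{k-1}}$ are pairwise distinct and the vertices $w_0,\ldots,w_{k-1}$ of $G$ are pairwise distinct. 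The two adjacencies $w_j \sim v_{i_j}$ and $w_j \sim v_{i_{j+1}}$ along the cycle (indices read modulo $k$) mean precisely that $w_j \in C^{2d}_{i_j} \cap C^{2d}_{i_{j+1}}$.

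Next I would extract the crucial chain. For each $j$ modulo $k$, the block $C^{2d}_{i_j}$ contains both $w_{j-1}$ (since $w_{j-1}\in C^{2d}_{i_{j-1}}\cap C^{2d}_{i_j}$) and $w_j$ (since $w_j\in C^{2d}_{i_j}\cap C^{2d}_{i_{j+1}}$), and these are distinct; therefore $w_{j-1} \overset{2}{\leftrightsquigarrow} w_j$. Thus $w_0, w_1, \ldots, w_{k-1}$ is a sequence of distinct vertices in which every consecutive pair, together with the wrap-around pair $w_{k-1}, w_0$ (both lying in $C^{2d}_{i_0}$), is $\overset{2}{\leftrightsquigarrow}$-related. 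This is exactly the hypothesis of Lemma \ref{def:AllVerticesOfCycleAreIn2directedBlock} with $v_j = w_j$, so I would apply that lemma to conclude that all of $w_0,\ldots,w_{k-1}$ lie in a single $2$-directed block $C$.

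Finally I would derive a contradiction from Lemma \ref{def:lem2}. For each $j$ the distinct vertices $w_{j-1}, w_j$ lie both in $C^{2d}_{i_j}$ and in $C$, so $C$ and $C^{2d}_{i_j}$ share at least two vertices; since distinct $2$-directed blocks share at most one vertex, this forces $C = C^{2d}_{i_j}$ for every $j$, and in particular $C^{2d}_{i_0} = C^{2d}_{i_1}$, contradicting the distinctness of the block-vertices on the cycle. Hence $G^{2d}$ is acyclic, i.e. a forest. The main obstacle is the middle step: recognizing that the shared vertices along a hypothetical cycle form a closed $\overset{2}{\leftrightsquigarrow}$-chain, which is what makes Lemma \ref{def:AllVerticesOfCycleAreIn2directedBlock} applicable; once that structure is in place, the rest is routine bookkeeping with block intersections via Lemma \ref{def:lem2}.
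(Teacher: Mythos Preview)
Your argument is correct and follows the same route as the paper, which simply says ``This follows from Lemma~\ref{def:AllVerticesOfCycleAreIn2directedBlock}.'' You have carefully unpacked what that one-line proof leaves implicit: the bipartite structure of $G^{2d}$, the extraction of the closed $\overset{2}{\leftrightsquigarrow}$-chain $w_0,\ldots,w_{k-1}$ from a hypothetical cycle, and the final contradiction via Lemma~\ref{def:lem2}.
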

\begin{proof} This  follows from Lemma \ref{def:AllVerticesOfCycleAreIn2directedBlock}.
\end{proof}
Now we turn to algorithm for finding the $2$-directed blocks.
Algorithm \ref{algo:algorforall2dblocks} describes our first algorithm for computing all the $2$-directed blocks of a strongly connected graph $G$.
\begin{figure}[htbp]
\begin{myalgorithm}\label{algo:algorforall2dblocks}\rm\quad\\[-5ex]
\begin{tabbing}
\quad\quad\=\quad\=\quad\=\quad\=\quad\=\quad\=\quad\=\quad\=\quad\=\kill
\textbf{Input:} A strongly connected graph $G=(V,E)$.\\
\textbf{Output:} The $2$-directed blocks of $G$.\\
{\small 1}\> \textbf{if} $G$ is $2$-vertex-connected \textbf{then}\\
{\small 2}\>\> Output $V$.\\
{\small 3}\> \textbf{else}\\
{\small 4}\>\> Let $A$ be an $n\times n$ matrix.\\
{\small 5}\>\> Initialize $A$ with $0$s.\\
{\small 6}\>\> \textbf{for} each ordered pair $(v,w)\in V\times V$ \textbf{do}\\
{\small 7}\>\>\> \textbf{if} there are two vertex-disjoint paths from $v$ to $w$ in $G$ \textbf{then}\\
{\small 8}\>\>\>\> $A[v,w] \leftarrow 1$. \\
{\small 9}\>\> Construct undirected graph $G^{*}=(V,E^{*})$ as follows. \\
{\small 10}\>\> \textbf{for} each pair $(v,w) \in V\times V$ \textbf{do} \\
{\small 11}\>\>\> \textbf{if} $A[v,w]=1$ and $A[w,v]=1$ \textbf{then} \\
{\small 12}\>\>\>\> Add the undirected edge $(v,w)$ to $E^{*}$. \\
{\small 13}\>\> Compute the blocks of size $>1$ of $G^{*}=(V,E^{*})$ and output them. 
\end{tabbing}
\end{myalgorithm}
\end{figure}
\begin{lemma}\label{def:Algo1IsCorrect}
Algorithm \ref{algo:algorforall2dblocks} calculates $2$-directed blocks.
\end{lemma}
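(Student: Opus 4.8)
The plan is to argue along the two branches of the algorithm. The branch in lines~1--2 is immediate: if $G$ is $2$-vertex-connected then, by Menger's Theorem, every ordered pair of distinct vertices is joined by two vertex-disjoint paths, so $x \overset{2}{\leftrightsquigarrow} y$ holds for all distinct $x,y$ and $V$ is the unique $2$-directed block. For the else-branch (lines~4--13) I would first record what the construction guarantees: line~7 sets $A[v,w]=1$ exactly when $v \overset{2}{\rightsquigarrow} w$, so lines~10--12 put the undirected edge $(v,w)$ into $E^{*}$ exactly when $A[v,w]=A[w,v]=1$, i.e.\ exactly when $v \overset{2}{\leftrightsquigarrow} w$. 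By the definition of a $2$-directed block, the edges of $G^{*}$ are therefore precisely the pairs that share a common $2$-directed block: every such pair is $\overset{2}{\leftrightsquigarrow}$-related, and conversely any $\overset{2}{\leftrightsquigarrow}$-related pair $\{v,w\}$ is itself a valid size-two block and so lies in a maximal one. This reduces the whole statement to a single claim, namely that the blocks of size $>1$ of $G^{*}$ coincide with the $2$-directed blocks of $G$.

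The crux, and the step I expect to be the main obstacle, is to prove that every block of $G^{*}$ induces a clique; here Lemma~\ref{def:AllVerticesOfCycleAreIn2directedBlock} does the decisive work. Take a block $B$ of $G^{*}$ with $|B|\ge 3$ and any two vertices $u,w\in B$. Since $B$ is $2$-connected, there are two internally vertex-disjoint paths between $u$ and $w$ inside $B$, and their union is a simple cycle through $u$ and $w$ whose vertices are pairwise distinct. Every edge of this cycle is an $\overset{2}{\leftrightsquigarrow}$-relation, so Lemma~\ref{def:AllVerticesOfCycleAreIn2directedBlock} places all of its vertices -- in particular $u$ and $w$ -- in a single $2$-directed block; hence $u \overset{2}{\leftrightsquigarrow} w$ and $(u,w)\in E^{*}$. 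As $u,w$ were arbitrary, $B$ is a clique, and blocks of size exactly $2$ are single edges and are trivially complete. Intuitively, the cycle lemma forbids induced cycles of length $\ge 4$ in $G^{*}$, so that $G^{*}$ is a block graph in which every block is complete.

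With the clique property in hand both inclusions follow quickly. If $B$ is a block of $G^{*}$ with $|B|>1$, then $B$ is a clique, so every pair in $B$ is $\overset{2}{\leftrightsquigarrow}$-related; moreover $B$ is maximal with this property, since a vertex $w\notin B$ adjacent in $G^{*}$ to all of $B$ would make $B\cup\{w\}$ a clique of size $\ge 3$, hence a $2$-connected subgraph strictly containing $B$, contradicting the maximality of the block $B$. Thus $B$ is a $2$-directed block. Conversely, a $2$-directed block $C$ is a clique in $G^{*}$ by definition, so it is contained in a unique block $B'$; by the preceding paragraph $B'$ is itself a clique, i.e.\ a set all of whose pairs are $\overset{2}{\leftrightsquigarrow}$-related, so the maximality of $C$ forces $C=B'$. (The bound on intersections in Lemma~\ref{def:lem2}, together with Lemma~\ref{def:G2dforest}, is consistent with this and confirms there is no ambiguity in the block structure, though it is not needed for the argument.) Hence the collection output in line~13 is exactly the set of $2$-directed blocks of $G$, which is what we had to show.
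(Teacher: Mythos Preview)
Your proof is correct and follows essentially the same strategy as the paper: both identify $E^{*}$ with the relation $\overset{2}{\leftrightsquigarrow}$, and both use Lemma~\ref{def:AllVerticesOfCycleAreIn2directedBlock} on a cycle through two given vertices of a block of $G^{*}$ to conclude that the whole block lies in one $2$-directed block. The paper is terser (it handles the two inclusions implicitly via the contrapositive ``different block $\Rightarrow$ not an edge $\Rightarrow$ different $2$-directed block'' together with the case split $|B|=2$ versus $|B|>2$), whereas you make the clique property and the maximality argument explicit; but the content is the same.
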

\begin{proof}
If $G$ is $2$-vertex connected, then $V$ is a $2$-directed block. Let $G=(V,E)$ be a strongly connected graph which is not $2$-vertex connected. For any vertices $v,w \in V$, $v \overset{2}{\leftrightsquigarrow } w$ if and only if $A[v,w]=1$ and $A[w,v]=1$ in line $11$. Hence, $v \overset{2}{\leftrightsquigarrow } w$ if and only if $(v,w) \in E^{*}$. Let $x,y$ be two vertices that do not lie in the same block of $G^{*}$.
Then $(x,y)$ can not be in $E^{*}$. Hence, the vertices $x,y$ do not lie in the same $2$-directed block of $G$. Let $B$ be a block of $G^{*}$ containing $v,w$ with $(v,w)\in E^{*}$. There are two cases to consider.
\begin{enumerate}
\item $B=\lbrace v,w\rbrace$. Then $v \overset{2}{\leftrightsquigarrow } w$ and $\lbrace v,w\rbrace$ is a $2$-directed block. (If there were some $z$ such that $v,w,z$ are in the same $2$-directed block, we would have the triangle $(v,z),(z,w),(w,v)$ in $G^{*}$, hence $z$ would be in the same block as $v,w$.) 
\item $B$ contains other vertices. We show that all these vertices are in the same $2$-directed block. If $z\in V\setminus\lbrace v,w\rbrace$ is in $B$, then $z,v$ lie on one simple cycle in $G^{*}$. By Lemma \ref{def:AllVerticesOfCycleAreIn2directedBlock}, the vertices $z,v$ lie in the same $2$-directed block.
\end{enumerate}
\end{proof}
It remains to describe Procedure \ref{algo:proccheck2vdps} that implements steps $6$--$8$ of Algorithm \ref{algo:algorforall2dblocks}.

\begin{figure}[htbp]
\begin{Procedure}\label{algo:proccheck2vdps}\rm\quad\\[-5ex]
\begin{tabbing}
\quad\quad\=\quad\=\quad\=\quad\=\quad\=\quad\=\quad\=\quad\=\quad\=\kill
\textbf{Purpose:} Check if there are two vertex disjoint paths.\\
\textbf{Input:} A strongly connected graph $G=(V,E)$.\\
\textbf{Output:} Matrix $A$.\\
{\small 1}\> \textbf{for} each vertex $v \in V$ \textbf{do} \\
{\small 2}\>\> $E'\leftarrow E$. \\
{\small 3}\>\> $V'\leftarrow V$. \\
{\small 4}\>\> \textbf{for} each edge $e=(v,w)\in E$ \textbf{do}\\
{\small 5}\>\>\> $E'\leftarrow E'\setminus \lbrace (v,w)\rbrace$.\\
{\small 6}\>\>\> $V'\leftarrow V'\cup \lbrace u_e\rbrace$.\\
{\small 7}\>\>\> $E'\leftarrow E' \cup \lbrace (v,u_e),(u_e,w) \rbrace$.\\
{\small 8}\>\> Compute the dominator tree $DT'(v)$ of the flowgraph $G'(v)=(V',E',v)$.\\
{\small 9}\>\> \textbf{for} each direct successor $w$ of $v$ in $DT'(v)$ \textbf{do}\\ 
{\small 10}\>\>\> \textbf{if} $w\in V$ \textbf{then}\\
{\small 11}\>\>\>\> $A[v,w] \leftarrow 1$.
\end{tabbing}
\end{Procedure}
\end{figure} 
For each vertex $v \in V$, we construct a directed graph $G'=(V',E')$ from $G$ as follows. For each edge $(v,w)\in E$, we remove this edge $(v,w)$ and we add a new vertex $u_e$ and two new edges $(v,u_e),(u_e,w)$ to $G'$. Then we compute the dominator tree $DT'(v)$ of the flowgraph $G'(v)=(V',E',v)$. For each direct successor $w$ of $v$ in $DT'(v)$ such that $w\in V$, line $11$ sets $A[v,w]$ to $1$. The correctness of Procedure \ref{algo:proccheck2vdps} follows from the following lemma.
\begin{lemma} \label{def:CorrectnessOfproccheck2vdps}
let $G=(V,E)$ be a strongly connected graph and let $v,w$ be two distinct vertices in $G$. Then $v \overset{2}{\rightsquigarrow} w$ in $G(v)$ if and only if $v$ is the immediate dominator of $w$ in the flowgraph $G'(v)=(V',E',v)$.
\end{lemma}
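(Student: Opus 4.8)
The plan is to reduce the statement to Menger's Theorem applied to the auxiliary flowgraph $G'(v)$, using the fact that $v$ being the immediate dominator of $w$ is equivalent to $w$ having no non-trivial dominator. First I would record the purely dominator-theoretic reformulation: since $v$ is the start vertex, $dom(v)=\lbrace v\rbrace$, so $v$ is the immediate dominator of $w$ in $G'(v)$ if and only if $dom(w)=\lbrace v,w\rbrace$, i.e. no vertex $s\in V'\setminus\lbrace v,w\rbrace$ lies on every path from $v$ to $w$ in $G'$. Thus the right-hand side of the lemma is equivalent to the statement that no single vertex whose removal destroys all $v$-to-$w$ paths in $G'$ exists.

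Next I would invoke Menger's Theorem for vertex connectivity \cite{BR12} inside $G'$. The construction deletes every edge $(v,w)\in E$ and routes it through a fresh vertex $u_e$, so in $G'$ the vertex $v$ has no out-edge ending directly at any original vertex; in particular $v$ and $w$ are non-adjacent in $G'$. Hence Menger's Theorem applies, and the condition ``no single cut vertex separates $v$ from $w$ in $G'$'' is equivalent to the existence of two internally vertex-disjoint paths from $v$ to $w$ in $G'$.

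It then remains to transport two internally vertex-disjoint $v$-to-$w$ paths between $G'$ and $G$. The key structural observation is that each auxiliary vertex $u_e$ (arising from an edge $e=(v,x)$) has $v$ as its unique predecessor and $x$ as its unique successor; therefore a $u_e$ can occur in a $v$-to-$w$ path of $G'$ only as its second vertex, and every such path has the shape $v,u_e,x,\dots,w$ with the remainder lying entirely in the original graph. Contracting each initial segment $v,u_e,x$ to the original edge $(v,x)$ gives a path in $G$, and conversely expanding the first edge of a $G$-path yields a $G'$-path. I would verify that this correspondence preserves internal vertex-disjointness: deleting the auxiliary vertices keeps the shared internal vertices a subset of what they were, and two internally vertex-disjoint paths necessarily leave $v$ along distinct edges (simplicity of $G$ forbids two different $u_e$ over the same edge), so the images diverge immediately and remain disjoint. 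Combining the three reductions proves both directions.

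The main obstacle is the case $(v,w)\in E$, which is exactly the reason the edge-splitting is needed. Without it, a lone direct edge $(v,w)$ would already make $v$ the immediate dominator of $w$ in $G(v)$ even when no second disjoint path exists; after splitting, that edge is forced through $u_e$, which then becomes a non-trivial dominator of $w$ unless a genuinely disjoint alternative route is present. Once the degree-one structure of the $u_e$ is in hand, checking that the path correspondence behaves well is routine, so I expect the only delicate point to be stating the Menger step correctly, namely ensuring that $v$ and $w$ are non-adjacent in $G'$ so that the theorem applies in its internally-disjoint form.
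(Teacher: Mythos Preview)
Your proposal is correct and in fact organizes the argument more cleanly than the paper does. Both proofs rest on Menger's Theorem and on the path correspondence between $G$ and $G'$, but you apply Menger inside $G'$ (where $v$ and $w$ are guaranteed non-adjacent by the subdivision), whereas the paper treats the two directions separately and, for ``$\Leftarrow$'', takes as the negation of $v\overset{2}{\rightsquigarrow}w$ the existence of a cut vertex $x\in V\setminus\lbrace v,w\rbrace$ in $G$. That reformulation is only valid when $(v,w)\notin E$; the case where $(v,w)\in E$ and the direct edge is the unique $v$-to-$w$ path (so no such $x$ exists, yet the auxiliary vertex $u_e$ is a non-trivial dominator of $w$ in $G'$) is not covered by the paper's argument. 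Your route sidesteps this entirely: once Menger is invoked in $G'$, the bijection you describe between $v$-to-$w$ paths in $G'$ and in $G$ (contract or expand the initial segment $v,u_e,x$) carries internal vertex-disjointness both ways, and your observation that two internally disjoint paths in $G$ must leave $v$ along distinct edges is exactly what guarantees the corresponding $u_e$'s differ. The structural remark that each $u_e$ has $v$ as sole predecessor, hence can only occur as the second vertex of a simple $v$-to-$w$ path, is the key point that makes the correspondence one-to-one and is worth stating explicitly, as you do.
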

\begin{proof}
``$\Rightarrow$'' Assume that $v \overset{2}{\rightsquigarrow} w$ in $G(v)$. Then there are two vertex-disjoint paths $p_{1}=(v=v_{1},v_{2},\ldots,v_{t}=w)$ and $p_{2}=(v=u_{1},u_{2},\ldots,u_{l}=w)$ from $v$ to $w$ in $G(v)$. In lines $4$--$7$ of Procedure \ref{algo:proccheck2vdps}, the edge $x=(v_{1},v_{2})$ is replaced by two edges $(v_{1},v_{x}),(v_{x},v_{2})$ and the edge $y=(u_{1},u_{2})$ is replaced by two edges $(u_{1},u_{y}),(u_{y},u_{2})$. Since $v_{x}\neq u_{y}$, there exist two vertex-disjoint paths from $v$ to $w$ in $G'(v)$. Therefore, $v$ is the immediate dominator of $w$ in the flowgraph $G'(v)$.\\ 
``$\Leftarrow$'' Suppose that all paths from $v$ to $w$ in $G(v)$ contain at least vertex $x\in V\setminus\lbrace v,w\rbrace$. Then $x$ is a non-trivial dominator of $w$ in $G(v)$. Thus, $v$ is not the immediate dominator of $w$ in $G(v)$. Let $p=(v=v_{1},v_{2},\ldots,v_{t}=w)$ be a simple path from $v$ to $w$ in $G(v)$. In lines $4$--$7$ of Procedure \ref{algo:proccheck2vdps}, $e=(v_{1},v_{2})$ is replaced by $(v_{1},u_{e}),(u_{e},v_{2})$. Hence, the path $p$ corresponds to the simple path $(v=v_{1},u_{e},v_{2},\ldots,v_{t}=w)$ in $G'(v)$. Since $u_{e}\neq x$, the vertex $x$ is a non-trivial dominator of $w$ in $G'(v)$. Therefore, $v$ is not the immediate dominator of $w$ in $G'(v)$.  
\end{proof}
\begin{theorem}\label{def:algo2dbrunningtime}
Algorithm \ref{algo:algorforall2dblocks} runs in $O(nm)$ time.
\end{theorem}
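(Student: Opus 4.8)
The plan is to bound the running time of Algorithm~\ref{algo:algorforall2dblocks} by charging the cost of each of its steps separately and then adding them up, exploiting throughout that a strongly connected graph on $n$ vertices has $m \geq n$ (every vertex has an outgoing edge), so that $n+m = O(m)$ and $n^{2} \leq nm$. First I would dispose of the easy steps. The test in line~$1$ whether $G$ is $2$-vertex-connected runs in linear time $O(m)$ by the algorithm of Georgiadis~\cite{G10} or Italiano et al.~\cite{ILS12}, and when it succeeds the algorithm simply outputs $V$. Initializing the matrix $A$ (lines~$4$--$5$) costs $O(n^{2})$, the construction of $G^{*}$ (lines~$9$--$12$) iterates over all $n^{2}$ ordered pairs and therefore also costs $O(n^{2})$, and computing the blocks of size $>1$ of $G^{*}$ in line~$13$ takes $O(n + |E^{*}|) = O(n^{2})$ by Tarjan's linear-time block algorithm~\cite{T72}, since $G^{*}$ has $n$ vertices and at most $\binom{n}{2}$ edges. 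Each of these three bounds is $O(nm)$ because $n^{2} \leq nm$.

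The substantive part of the analysis is the cost of steps~$6$--$8$, which are realized by Procedure~\ref{algo:proccheck2vdps}. Here I would argue that the body of the outer loop runs in $O(m)$ time for each of the $n$ choices of $v$. Fix $v$. Copying $E$ and $V$ into $E'$ and $V'$ takes $O(n+m)$. The inner loop of lines~$4$--$7$ subdivides each of the $\deg^{+}(v)$ edges leaving $v$ by inserting one fresh vertex $u_{e}$, so the resulting flowgraph $G'(v)$ has $|V'| = n + \deg^{+}(v) \leq 2n$ vertices and $|E'| = m + \deg^{+}(v) \leq m + n \leq 2m$ edges; building it costs $O(n+m)$. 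Computing the dominator tree $DT'(v)$ in line~$8$ can be done in time linear in the size of $G'(v)$, namely $O(|V'|+|E'|) = O(n+m) = O(m)$, using a linear-time dominator-tree algorithm. Finally, reading off the direct successors of $v$ in $DT'(v)$ and setting the corresponding entries of $A$ (lines~$9$--$11$) costs $O(n)$, since $v$ has at most $|V'| = O(n)$ children in $DT'(v)$. Hence one iteration is $O(m)$ and the whole procedure runs in $\sum_{v \in V} O(m) = O(nm)$.

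Summing the contributions of all steps yields the desired bound: the linear-time connectivity test is $O(m)$, Procedure~\ref{algo:proccheck2vdps} is $O(nm)$, and the matrix, graph-construction, and block-finding steps are each $O(n^{2}) = O(nm)$, so Algorithm~\ref{algo:algorforall2dblocks} runs in $O(nm)$ time. I expect the only delicate point to be the dominator-tree computation in line~$8$: to obtain exactly $O(nm)$ rather than an extra inverse-Ackermann factor, one must invoke a genuinely linear-time dominator-tree algorithm rather than the near-linear Lengauer--Tarjan method, and one must observe that the vertex-splitting in lines~$4$--$7$ only increases the size of the flowgraph by $O(\deg^{+}(v)) = O(n)$, which is absorbed into $O(m)$. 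Everything else is a routine summation that relies on the inequality $m \geq n$ for strongly connected graphs.
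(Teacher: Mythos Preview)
Your proof is correct and follows essentially the same approach as the paper's own proof: you bound each step separately (the $2$-vertex-connectivity test, matrix initialization, construction of $G^{*}$, block computation, and Procedure~\ref{algo:proccheck2vdps}), using the same size bounds $|V'|\le 2n$, $|E'|\le m+n$ and a linear-time dominator-tree algorithm to get $O(m)$ per iteration of the procedure. You are slightly more explicit than the paper in justifying $n^{2}\le nm$ via $m\ge n$ for strongly connected graphs and in flagging that a genuinely linear-time dominator algorithm (rather than Lengauer--Tarjan) is needed, but otherwise the arguments coincide.
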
  
\begin{proof}
The dominators of a flowgraph can be found in linear time \cite{BGKRTW00,AHLT99}. Since the graph $G'$ has $|V'|=n+d_{out}(v)<2n$ vertices and $|E'|=m+d_{out}(v)< m+n$ edges, lines $2$--$11$ of Procedure \ref{algo:proccheck2vdps} take $O(n+m)$ time. Thus, the running time of Procedure \ref{algo:proccheck2vdps} is $O(nm)$. One can test whether a directed graph is $2$-vertex-connected in linear time using the algorithm of Italiano
et al. \cite{ILS12}. The initialization of matrix $A$ requires $O(n^{2})$ time.
 The undirected graph $G^{*}=(V,E^{*})$ can also be constructed in $O(n^{2})$ time. Furthermore, the blocks of $G^{*}$ can be computed in $O(n^{2})$ using Tarjan's algorithm \cite{T72}. The total cost is therefore $O(nm+n^{2})=O(nm)$.
\end{proof}
Let $G=(V,E)$ be a strongly connected graph. By definition, the $2$-directed blocks of $G$ are the maximal cliques of the auxiliary graph $G^{*}$ which is constructed in lines $4$--$12$ of Algorithm \ref{algo:algorforall2dblocks}. By Lemma \ref{def:AllVerticesOfCycleAreIn2directedBlock}, the auxiliary graph $G^{*}$ is chordal. In line $13$ of Algorithm \ref{algo:algorforall2dblocks}, one can compute the maximal cliques of the auxiliary graph $G^{*}$ instead of blocks since the maximal cliques of a chordal graph can be calculated in linear time \cite{G72,RT75}.
\section{Computing $2$-strong blocks} \label{def:secC2sb}
In this section we present two algorithms for computing the $2$-strong blocks of directed graphs. The $2$-strong blocks of a directed graph are the union of the $2$-strong blocks of its SCCs. Let $G=(V,E)$ be a strongly connected graph. We define a relation $\overset{2s}{\leftrightsquigarrow }$ as follows. For any distinct vertices $x,y\in V$, we write $x \overset{2s}{\leftrightsquigarrow } y$ if for any vertex $z\in V\setminus\lbrace x,y\rbrace$, the vertices $x,y$ lie in the same SCC of $G\setminus\lbrace z\rbrace$. By definition, the $2$-strong blocks are maximal subsets of $V$ of size $\geq2$ closed under $ \overset{2s}{\leftrightsquigarrow } $. 
Let $v,w$ be distinct vertices in $V$ such that $(v,w)\in E$ and $w \overset{2}{\rightsquigarrow} v$. While $v,w$ are in one $2$-strong block, these vertices do not necessarily lie in the same $2$-directed block of $G$.
\begin{lemma}\label{def:Each2dbIsIn2sb}
Each $2$-directed block in a strongly connected graph $G$ is a subset of a $2$-strong block in $G$. 
\end{lemma}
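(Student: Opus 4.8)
The plan is to reduce the statement to a single containment of relations: that $x \overset{2}{\leftrightsquigarrow} y$ implies $x \overset{2s}{\leftrightsquigarrow} y$ for every pair of distinct vertices $x,y \in V$. This containment comes essentially for free from Lemma \ref{def:lem1}. Indeed, suppose $x \overset{2}{\leftrightsquigarrow} y$. Lemma \ref{def:lem1} tells us that, among other things, for every vertex $w \in V \setminus \lbrace x,y \rbrace$ the vertices $x$ and $y$ lie in the same SCC of $G \setminus \lbrace w \rbrace$. But this is precisely the condition defining $x \overset{2s}{\leftrightsquigarrow} y$ (we simply discard the extra edge-removal condition that Lemma \ref{def:lem1} also supplies). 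So the first step I would carry out is to record that the relation $\overset{2}{\leftrightsquigarrow}$ is contained in the relation $\overset{2s}{\leftrightsquigarrow}$.

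Given this, the remainder is a clique-extension argument. Let $C^{2d}$ be a $2$-directed block. By definition every pair of distinct vertices in $C^{2d}$ is related by $\overset{2}{\leftrightsquigarrow}$, and hence, by the first step, by $\overset{2s}{\leftrightsquigarrow}$. Thus $C^{2d}$ is a vertex set of size at least $2$ on which $\overset{2s}{\leftrightsquigarrow}$ holds for every pair. Since $V$ is finite, I would extend $C^{2d}$ greedily to a maximal set $C^{2s} \supseteq C^{2d}$ on which $\overset{2s}{\leftrightsquigarrow}$ still holds pairwise; by the definition of $2$-strong blocks (the maximal sets of size $\geq 2$ on which $\overset{2s}{\leftrightsquigarrow}$ holds pairwise), this $C^{2s}$ is a $2$-strong block, and its size is at least $|C^{2d}| \geq 2$, as required.

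I do not expect a genuine obstacle here; the only point deserving care is that $\overset{2s}{\leftrightsquigarrow}$ need not be transitive, so one cannot speak of ``the'' $2$-strong block of a vertex as an equivalence class. The argument must therefore pass through the maximality definition of $2$-strong blocks rather than through transitivity, exactly as the paper treats $2$-directed blocks via $\overset{2}{\leftrightsquigarrow}$. Once the containment of relations is established, the containment of a $2$-directed block in a $2$-strong block is immediate.
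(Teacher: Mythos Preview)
Your proposal is correct and follows the same approach as the paper, which simply writes ``Immediate from Lemma~\ref{def:lem1}'': you use Lemma~\ref{def:lem1} to deduce that $\overset{2}{\leftrightsquigarrow}$ is contained in $\overset{2s}{\leftrightsquigarrow}$, and then pass to blocks via maximality. Your explicit clique-extension step and your remark that $\overset{2s}{\leftrightsquigarrow}$ need not be transitive merely spell out what the paper leaves implicit.
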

\begin{proof}
Immediate from Lemma \ref{def:lem1}.
\end{proof}
\begin{lemma}\label{def:2sbsHasAtMostOneVertex}
Let $G=(V,E)$ be a strongly connected graph. Let $C^{2s}_1,C^{2s}_2$ be distinct $2$-strong blocks in $G$. Then $C^{2s}_1$ and $C^{2s}_2$ have at most one vertex in common.
\end{lemma}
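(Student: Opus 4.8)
The plan is to argue by contradiction, exploiting the fact that ``lying in the same SCC of $G\setminus\lbrace z\rbrace$'' is, for a fixed $z$, an \emph{equivalence} relation on the remaining vertices. This makes the argument noticeably cleaner than the proof of Lemma \ref{def:lem2}: there the relation $\overset{2}{\rightsquigarrow}$ is not transitive, so Menger's Theorem and a case split on whether $(v,w)\in E$ were needed, whereas here transitivity of SCC membership does all the work directly. Concretely, I would assume $|C^{2s}_1\cap C^{2s}_2|\geq 2$ and derive a contradiction by showing that $C^{2s}_1\cup C^{2s}_2$ is itself closed under $\overset{2s}{\leftrightsquigarrow}$, which contradicts the maximality in the definition of a $2$-strong block.

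First I would observe that distinct $2$-strong blocks cannot be nested: if $C^{2s}_1\subseteq C^{2s}_2$, then since $C^{2s}_1$ is a maximal set of size $\geq 2$ closed under $\overset{2s}{\leftrightsquigarrow}$, we would get $C^{2s}_1=C^{2s}_2$. Hence there exist $v\in C^{2s}_1\setminus C^{2s}_2$ and $w\in C^{2s}_2\setminus C^{2s}_1$. To show $C^{2s}_1\cup C^{2s}_2$ is closed under $\overset{2s}{\leftrightsquigarrow}$, note that any pair of distinct vertices both in $C^{2s}_1$ (or both in $C^{2s}_2$) already satisfies the relation by the block property, so the only pairs requiring attention are $x\in C^{2s}_1\setminus C^{2s}_2$ and $y\in C^{2s}_2\setminus C^{2s}_1$. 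For such a pair I would fix an arbitrary $z\in V\setminus\lbrace x,y\rbrace$ and use the hypothesis $|C^{2s}_1\cap C^{2s}_2|\geq 2$ to choose a vertex $u\in (C^{2s}_1\cap C^{2s}_2)\setminus\lbrace z\rbrace$; such a $u$ exists precisely because the intersection has at least two elements, at most one of which equals $z$. Since $u\in C^{2s}_1\cap C^{2s}_2$ while $x\notin C^{2s}_2$ and $y\notin C^{2s}_1$, we have $u\neq x$ and $u\neq y$.

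Now applying the definition of the block $C^{2s}_1$ to the pair $x,u$ (with the removed vertex $z\neq x,u$) shows that $x$ and $u$ lie in the same SCC of $G\setminus\lbrace z\rbrace$, and applying the definition of $C^{2s}_2$ to the pair $u,y$ shows that $u$ and $y$ lie in the same SCC of $G\setminus\lbrace z\rbrace$. By transitivity of SCC membership, $x$ and $y$ lie in the same SCC of $G\setminus\lbrace z\rbrace$. As $z$ was arbitrary, $x\overset{2s}{\leftrightsquigarrow} y$, so $C^{2s}_1\cup C^{2s}_2$ is closed under $\overset{2s}{\leftrightsquigarrow}$ and properly contains $C^{2s}_1$, contradicting the maximality of $C^{2s}_1$. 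I do not expect a serious obstacle here; the only points demanding care are the bookkeeping that guarantees $u\notin\lbrace x,y,z\rbrace$ (which is exactly where the assumption $|C^{2s}_1\cap C^{2s}_2|\geq 2$ is used to dodge $z$) and the non-nesting observation that supplies the witnesses $v,w$.
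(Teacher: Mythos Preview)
Your argument is correct and is essentially the same as the paper's: both pick a vertex $u$ (the paper calls it $w$) in $(C^{2s}_1\cap C^{2s}_2)\setminus\lbrace z\rbrace$ and use transitivity of SCC membership to conclude that $x$ and $y$ lie in the same SCC of $G\setminus\lbrace z\rbrace$. The only cosmetic difference is the direction of the contradiction: the paper starts by asserting (from maximality and distinctness) the existence of $x\in C^{2s}_1\setminus C^{2s}_2$, $y\in C^{2s}_2\setminus C^{2s}_1$, and $z$ with $x,y$ in different SCCs of $G\setminus\lbrace z\rbrace$, and then contradicts that directly, whereas you prove the union is closed under $\overset{2s}{\leftrightsquigarrow}$ and invoke maximality at the end.
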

\begin{proof}
Indirect. Assume that $|C^{2s}_1 \cap C^{2s}_2|>1$. Then there exist at least two vertices $x \in C^{2s}_1,y\in C^{2s}_2$ with $x,y \notin C^{2s}_1 \cap C^{2s}_2$ and a vertex $z\in V\setminus \lbrace x,y\rbrace$ such that the vertices $x,y$ lie in different SCCs of $G\setminus \lbrace z\rbrace$. Let $w$ be a vertex in $(C^{2s}_1 \cap C^{2s}_2)\setminus \lbrace z\rbrace$. Since $x,w\in C^{2s}_1  $, these vertices lie in the same SCC of $G\setminus \lbrace z\rbrace$, similarly for $w,y \in C^{2s}_2 $. Hence $x,y$ lie in the same SCC of $G\setminus \lbrace z\rbrace$, a contradiction.
\end{proof}
As with $2$-directed blocks, there can not be cycles of $2$-strong blocks. The proof is even simpler.
\begin{lemma}\label{def:AllVerticesOfCycleAreIn2StrongBlock}
Let $G=(V,E)$ be a strongly connected graph and let $v_0,v_1,\ldots,v_l$ be distinct vertices of $G$ such that $v_0 \overset{2s}{\leftrightsquigarrow } v_l$ and $v_{i-1} \overset{2s}{\leftrightsquigarrow } v_i$ for $i \in\lbrace 1,2\ldots,l\rbrace$. Then all the vertices $v_0,v_1,\ldots,v_l$ lie in the same $2$-strong block of $G$.
\end{lemma}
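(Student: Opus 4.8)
The plan is to argue in the same spirit as Lemma~\ref{def:AllVerticesOfCycleAreIn2directedBlock}, but to exploit the feature that makes this case simpler: the relation $\overset{2s}{\leftrightsquigarrow}$ refers only to single-vertex removals and to SCC membership, and ``lying in the same SCC of $G\setminus\{z\}$'' is an equivalence relation. It therefore suffices to show that \emph{every} pair $v_r,v_q$ with $0\le r<q\le l$ satisfies $v_r \overset{2s}{\leftrightsquigarrow} v_q$. Indeed, once $\{v_0,\ldots,v_l\}$ is seen to be pairwise $\overset{2s}{\leftrightsquigarrow}$-related, it is contained in some maximal set closed under $\overset{2s}{\leftrightsquigarrow}$, that is, in a single $2$-strong block, which is exactly the claim.

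So I would fix an arbitrary pair $v_r,v_q$ together with an arbitrary vertex $z\in V\setminus\{v_r,v_q\}$, and show that $v_r$ and $v_q$ lie in the same SCC of $G\setminus\{z\}$. The cyclic structure of the hypothesis supplies two internally disjoint arcs of the cycle joining $v_r$ to $v_q$: the forward arc $v_r,v_{r+1},\ldots,v_q$ and the backward arc $v_r,v_{r-1},\ldots,v_0,v_l,v_{l-1},\ldots,v_q$. Since these arcs meet only in their endpoints $v_r,v_q$ and $z\neq v_r,v_q$, the vertex $z$ can appear on at most one of them. Hence at least one arc, say $v_r=w_0,w_1,\ldots,w_k=v_q$, avoids $z$ entirely while still consisting of consecutive pairs with $w_{i-1}\overset{2s}{\leftrightsquigarrow} w_i$.

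I would then read off the conclusion from the definition of $\overset{2s}{\leftrightsquigarrow}$ along this arc. For each $i$, since $z\notin\{w_{i-1},w_i\}$, the relation $w_{i-1}\overset{2s}{\leftrightsquigarrow} w_i$ forces $w_{i-1}$ and $w_i$ into the same SCC of $G\setminus\{z\}$. Chaining these equalities by transitivity of SCC membership places $v_r=w_0$ and $v_q=w_k$ in a common SCC of $G\setminus\{z\}$. As $z$ was arbitrary, this yields $v_r \overset{2s}{\leftrightsquigarrow} v_q$, and as the pair was arbitrary, all of $v_0,\ldots,v_l$ belong to one $2$-strong block.

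I do not expect a genuine obstacle. The only step needing care is the bookkeeping that $z$ lies on at most one of the two arcs, which is precisely what lets me route around $z$ inside the cycle. This is also why the argument is shorter than that of Lemma~\ref{def:AllVerticesOfCycleAreIn2directedBlock}: because $\overset{2s}{\leftrightsquigarrow}$ never mentions deletion of the edges $(x,y),(y,x)$, no edge-deletion case arises, and the equivalence-relation property of SCC membership does the rest.
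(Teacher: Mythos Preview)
Your proof is correct and follows essentially the same approach as the paper: fix an arbitrary pair $v_r,v_q$ and an arbitrary removed vertex, then route along whichever arc of the cycle avoids that vertex, using transitivity of SCC membership. The paper presents this as an explicit two-case split on whether the removed vertex lies among $v_{r+1},\ldots,v_{q-1}$ or not, while you package the same dichotomy as ``$z$ lies on at most one of the two arcs''; the content is identical.
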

\begin{proof}
Let $v_r,v_q$ be two vertices such that $r,q\in \lbrace 0,1,\ldots,l\rbrace$ and $r<q$. Let $w$ be a vertex in $ V\setminus\lbrace v_r,v_q\rbrace$. We consider two cases.
\begin{enumerate}
\item $w\notin \lbrace v_{r+1},v_{r+2},\ldots,v_{q-1}\rbrace$. Then, for each $i\in \lbrace r+1,r+2,\ldots,q\rbrace$, the vertices $v_{i-1},v_i$ lie in the same SCC of $G\setminus \lbrace w\rbrace$. Thus the vertices $v_r,v_q$ lie in the same SCC of $G\setminus \lbrace w\rbrace$.
\item $w \in \lbrace v_{r+1},v_{r+2},\ldots,v_{q-1}\rbrace$. Then the vertices $v_{i-1},v_i$ lie in the same SCC of $G\setminus \lbrace w\rbrace$ for each $i\in \lbrace 1,2,\ldots,r\rbrace \cup\lbrace q+1,q+2,\ldots,l\rbrace$. Furthermore, the vertices $v_0,v_l$ lie in the same SCC of $G\setminus \lbrace w\rbrace$ since  $v_0 \overset{2s}{\leftrightsquigarrow } v_l$. Thus the vertices $v_r,v_q$ lie in the same SCC of $G\setminus \lbrace w\rbrace$.
\end{enumerate}
Since the vertices $v_r,v_q$ lie in the same SCC of $G\setminus \lbrace w\rbrace$ for any vertex $w\in V\setminus\lbrace v_r,v_q\rbrace$, the vertices $v_r,v_q$ lie in the same $2$-strong block of $G$.
\end{proof}
Algorithm \ref{algo:algor1forall2strongblocks} shows our first algorithm for computing the $2$-strong blocks of a strongly connected graph $G=(V,E)$.
\begin{figure}[htbp]
\begin{myalgorithm}\label{algo:algor1forall2strongblocks}\rm\quad\\[-5ex]
\begin{tabbing}
\quad\quad\=\quad\=\quad\=\quad\=\quad\=\quad\=\quad\=\quad\=\quad\=\kill
\textbf{Input:} A strongly connected graph $G=(V,E)$.\\
\textbf{Output:} The $2$-strong blocks of $G$.\\
{\small 1}\> lines $1$--$5$ of Algorithm \ref{algo:algorforall2dblocks}.\\
{\small 6}\>\> \textbf{for} each vertex $v\in V$ \textbf{do}\\
{\small 7}\>\>\> Compute $DT(v)$.\\
{\small 8}\>\>\> \textbf{for} each direct successor $w$ of $v$ in $DT(v)$ \textbf{do}\\
{\small 9}\>\>\>\> $A[v,w] \leftarrow 1$. \\
{\small 10}\>\> lines $9$--$13$ of Algorithm \ref{algo:algorforall2dblocks}.
\end{tabbing}
\end{myalgorithm}
\end{figure}

Using arguments similar to those in the proof of Lemma \ref{def:Algo1IsCorrect}, one can show that Algorithm \ref{algo:algor1forall2strongblocks} is correct.

\begin{theorem}\label{def:algo2sbrunningtime}
Algorithm \ref{algo:algor1forall2strongblocks} runs in $O(nm)$ time.
\end{theorem}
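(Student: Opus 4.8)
The plan is to bound the running time of Algorithm~\ref{algo:algor1forall2strongblocks} by analyzing its three phases separately, just as in the proof of Theorem~\ref{def:algo2dbrunningtime}. The algorithm inherits lines $1$--$5$ of Algorithm~\ref{algo:algorforall2dblocks} (allocating and zeroing the $n\times n$ matrix $A$), then replaces the vertex-disjoint-paths check by a simpler dominator-tree computation in lines $6$--$9$, and finally reuses lines $9$--$13$ of Algorithm~\ref{algo:algorforall2dblocks} to build $G^{*}$ and extract its blocks. I would account for each phase in turn and then sum the costs.

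First I would handle the initialization: allocating and zeroing the matrix $A$ costs $O(n^{2})$ time, exactly as argued in Theorem~\ref{def:algo2dbrunningtime}. Next, and this is the heart of the bound, I would analyze the main loop in lines $6$--$9$. For each of the $n$ vertices $v\in V$ we compute the dominator tree $DT(v)$ of the flowgraph $G(v)=(V,E,v)$ and then set $A[v,w]\leftarrow 1$ for each direct successor $w$ of $v$ in $DT(v)$. Here I would invoke the linear-time dominator algorithms \cite{BGKRTW00,AHLT99}, so each iteration costs $O(n+m)$; note that, unlike Procedure~\ref{algo:proccheck2vdps}, we work directly on $G(v)$ without splitting edges, so the graph has exactly $n$ vertices and $m$ edges. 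Scanning the direct successors of $v$ and updating $A$ adds only $O(n)$ per iteration. Multiplying by the $n$ iterations gives $O(n(n+m))=O(nm)$ for the whole loop.

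Finally I would bound the reused steps: constructing the undirected graph $G^{*}=(V,E^{*})$ from $A$ takes $O(n^{2})$ time (one pass over all ordered pairs), and computing the blocks of $G^{*}$ takes $O(n^{2})$ time by Tarjan's algorithm \cite{T72}, again exactly as in Theorem~\ref{def:algo2dbrunningtime}. Summing the three phases yields $O(n^{2})+O(nm)+O(n^{2})=O(nm+n^{2})=O(nm)$, since $G$ is strongly connected and hence $m\geq n$.

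**The main obstacle** is essentially bookkeeping rather than a conceptual difficulty: the proof is a near-verbatim adaptation of Theorem~\ref{def:algo2dbrunningtime}, and the only genuine point to get right is that the inner loop now computes a dominator tree on the original flowgraph $G(v)$ (not on the augmented graph $G'(v)$ of Procedure~\ref{algo:proccheck2vdps}), so one must confirm that this still costs $O(n+m)$ per vertex and that the $O(n^{2})$ terms from initialization and block extraction are absorbed into $O(nm)$ under the standing assumption that $G$ is strongly connected.
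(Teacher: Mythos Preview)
Your proposal is correct and follows essentially the same approach as the paper's proof: both rely on the linear-time dominator algorithms \cite{BGKRTW00,AHLT99} to bound lines $6$--$9$ by $O(nm)$. The paper's own proof is far more terse (it states only that single sentence and omits the bookkeeping for the $O(n^{2})$ initialization and block-extraction phases), whereas you spell out all three phases explicitly as in Theorem~\ref{def:algo2dbrunningtime}; your more detailed accounting is entirely sound.
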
 
\begin{proof}
The dominators of a flowgraph can be found in linear time \cite{BGKRTW00,AHLT99}. Therefore, lines $6$--$9$ take $O(nm)$ time.
\end{proof}
\begin{lemma}\label{def:Relationbetween2sAndSAPs}
Let $G=(V,E)$ be a strongly connected graph and let $x,y$ be distinct vertices in $G$. Let $S$ be the set of all the SAPs in $G$. Then for any vertex $z \in V\setminus (S \cup \lbrace x,y\rbrace)$, the vertices $x$ and $y$ lie in the same SCC of $G\setminus \lbrace z\rbrace$.
\end{lemma}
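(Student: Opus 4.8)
The plan is to derive the statement directly from the definition of a strong articulation point, exploiting the fact that $G$ is strongly connected and hence consists of exactly one SCC. First I would unpack the hypothesis: the condition $z \in V\setminus(S\cup\lbrace x,y\rbrace)$ says precisely that $z$ is \emph{not} a SAP of $G$ and that $z$ is distinct from both $x$ and $y$.

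Next I would invoke the definition of a SAP, namely that a vertex is a SAP exactly when its removal strictly increases the number of SCCs. Since $z$ is not a SAP, deleting $z$ does not increase the number of SCCs, so $G\setminus\lbrace z\rbrace$ has at most as many SCCs as $G$, that is, at most one. Because $G\setminus\lbrace z\rbrace$ is a nonempty graph (it still contains $x$ and $y$, as $z\neq x$ and $z\neq y$), it has exactly one SCC, i.e.\ $G\setminus\lbrace z\rbrace$ is itself strongly connected.

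Finally, since $x$ and $y$ both survive the deletion of $z$, they are vertices of $G\setminus\lbrace z\rbrace$ and therefore lie in its unique SCC. Hence $x$ and $y$ lie in the same SCC of $G\setminus\lbrace z\rbrace$, which is exactly the claim.

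I do not expect any genuine obstacle here: the result is essentially a restatement of the SAP definition. The only step meriting a moment of care is the passage from ``the number of SCCs does not increase'' to ``$G\setminus\lbrace z\rbrace$ is strongly connected,'' which relies on $G$ having started with a single SCC and on $G\setminus\lbrace z\rbrace$ being nonempty.
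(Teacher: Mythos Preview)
Your proof is correct and matches the paper's approach exactly: the paper simply writes ``Immediate from the definition,'' and your argument is precisely the unpacking of that remark.
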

\begin{proof}
 Immediate from the definition. 
\end{proof}
This simple lemma gives rise to an alternative algorithm (Algorithm \ref{algo:algor2forall2strongblocks}) that might be helpful if the number of the SAPs is small.

\begin{figure}[htbp]
\begin{myalgorithm}\label{algo:algor2forall2strongblocks}\rm\quad\\[-5ex]
\begin{tabbing}
\quad\quad\=\quad\=\quad\=\quad\=\quad\=\quad\=\quad\=\quad\=\quad\=\kill
\textbf{Input:} A strongly connected graph $G=(V,E)$.\\
\textbf{Output:} The $2$-strong blocks of $G$.\\
{\small 1}\> \textbf{if} $G$ is $2$-vertex-connected \textbf{then}\\
{\small 2}\>\> Output $V$.\\
{\small 3}\> \textbf{else}\\
{\small 4}\>\> Let $A$ be an $n\times n$ matrix.\\
{\small 5}\>\> Initialize $A$ with $1$s.\\
{\small 6}\>\> Compute the SAPs of $G$.\\
{\small 7}\>\> \textbf{for} each SAP $s$ of $G$ \textbf{do} \\
{\small 8}\>\>\> Compute the SCCs of $G\setminus \lbrace s\rbrace$.\\
{\small 9}\>\>\> \textbf{for} each pair $(v,w) \in (V \setminus\lbrace s\rbrace)\times (V \setminus\lbrace s\rbrace)$ \textbf{do} \\
{\small 10}\>\>\>\> \textbf{if} $v,w$ in different SCCs of $G\setminus \lbrace s\rbrace$ \textbf{then}\\
{\small 11}\>\>\>\>\> $A[v,w] \leftarrow 0$. \\
{\small 12}\>\> $E^{*} \leftarrow \emptyset$. \\
{\small 13}\>\> \textbf{for} each pair $(v,w) \in V\times V $ \textbf{do} \\
{\small 14}\>\>\> \textbf{if} $A[v,w]=1$ and $A[w,v]=1$ \textbf{then} \\
{\small 15}\>\>\>\> Add the undirected edge $(v,w)$ to $E^{*}$. \\
{\small 16}\>\> Compute the blocks of size $>1$ of $G^{*}=(V,E^{*})$ and output them. 
\end{tabbing}
\end{myalgorithm}
\end{figure}
\begin{lemma}\label{def:vwInSameSCCiFA1}
Let $v,w$ be distinct vertices in a strongly connected graph $G$. Then $v \overset{2s}{\leftrightsquigarrow } w$ if and only if $A[v,w]=1$ and $A[w,v]=1$ (when line $14$ is reached).
\end{lemma}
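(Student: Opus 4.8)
```latex
\emph{Proof proposal.} The plan is to trace through Algorithm \ref{algo:algor2forall2strongblocks} and show that the condition ``$A[v,w]=1$ and $A[w,v]=1$ when line $14$ is reached'' is exactly the defining condition of the relation $\overset{2s}{\leftrightsquigarrow}$. The matrix $A$ is initialized to all $1$s in line $5$, and its entries are only ever changed by line $11$, which sets $A[v,w]\leftarrow 0$ whenever $v$ and $w$ fall into different SCCs of $G\setminus\lbrace s\rbrace$ for some SAP $s$. So the whole argument reduces to identifying, for a fixed ordered pair $(v,w)$, the entry $A[v,w]$ at the moment line $14$ is reached with the predicate ``$v,w$ lie in the same SCC of $G\setminus\lbrace z\rbrace$ for every $z\in V\setminus\lbrace v,w\rbrace$.''

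First I would establish the forward-reduction of the separating vertex to a SAP using Lemma \ref{def:Relationbetween2sAndSAPs}: if $v$ and $w$ lie in different SCCs of $G\setminus\lbrace z\rbrace$ for some $z$, then $z$ must be a SAP, because for any non-SAP $z\in V\setminus(S\cup\lbrace v,w\rbrace)$ the vertices stay in the same SCC. This means the algorithm does not miss any separation: every vertex that could separate $v$ from $w$ is examined in the outer loop of line $7$. Concretely, I would argue the contrapositive for both directions. For the ``only if'' direction, suppose $v\overset{2s}{\leftrightsquigarrow}w$; then for every $z\in V\setminus\lbrace v,w\rbrace$ the vertices $v,w$ share an SCC of $G\setminus\lbrace z\rbrace$, so in particular this holds for every SAP $s$, hence line $11$ never fires on the pair $(v,w)$ nor on $(w,v)$, and both entries retain their initial value $1$.

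For the converse, suppose $v\not\overset{2s}{\leftrightsquigarrow}w$. By definition there is some $z\in V\setminus\lbrace v,w\rbrace$ placing $v,w$ in distinct SCCs of $G\setminus\lbrace z\rbrace$; by Lemma \ref{def:Relationbetween2sAndSAPs} this $z$ is a SAP, so it is processed as some $s$ in the loop of line $7$, the SCCs of $G\setminus\lbrace s\rbrace$ are computed in line $8$, and line $10$ detects that $v,w$ lie in different SCCs, whence line $11$ sets $A[v,w]\leftarrow 0$. Thus at least one of $A[v,w]$, $A[w,v]$ is $0$ when line $14$ is reached, and the conjunction fails. Combining the two directions gives the equivalence.

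The main obstacle is a small bookkeeping subtlety rather than a conceptual one: I must confirm that the entries $A[v,w]$ are never reset to $1$ after being zeroed, so that the value seen at line $14$ genuinely records whether \emph{any} SAP separated the pair over the entire course of the loop. Since the only assignment to $A$ inside the loop is the one-directional $A[v,w]\leftarrow 0$ in line $11$ and there is no re-initialization between iterations, this monotonicity is immediate; but it is the step I would state explicitly, because the correctness of reading off $\overset{2s}{\leftrightsquigarrow}$ from the final matrix depends entirely on it. I would also note that the loop in line $9$ ranges over ordered pairs in $(V\setminus\lbrace s\rbrace)\times(V\setminus\lbrace s\rbrace)$, so a pair involving $s$ itself is correctly left untouched, consistent with the quantifier ``$z\in V\setminus\lbrace x,y\rbrace$'' excluding the endpoints.
```
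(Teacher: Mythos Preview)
Your proof is correct and follows essentially the same approach as the paper: both directions hinge on Lemma~\ref{def:Relationbetween2sAndSAPs}, which reduces the quantifier ``for all $z\in V\setminus\lbrace v,w\rbrace$'' to ``for all SAPs $s\in V\setminus\lbrace v,w\rbrace$,'' after which the algorithm's loop over SAPs is seen to test exactly this condition. Your version is considerably more explicit about the bookkeeping (monotonicity of $A$, exclusion of $s$ from the inner loop), which the paper leaves implicit, but the underlying argument is the same.
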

\begin{proof}
``$\Leftarrow$'' If $A[v,w]=1$ and $A[w,v]=1$, then the vertices $v,w$ lie in the same SCC of $G\setminus \lbrace s\rbrace$ for any SAP $s\in V\setminus\lbrace v,w\rbrace$ (see lines $7$--$11$). By Lemma \ref{def:Relationbetween2sAndSAPs}, the vertices $v,w$ lie in the same SCC of $G\setminus \lbrace z\rbrace$ for any vertex $z\in V\setminus\lbrace v,w\rbrace$.\\
``$\Rightarrow$'' This follows from Lemma \ref{def:Relationbetween2sAndSAPs}.
\end{proof}
\begin{theorem}\label{def:RunTimealgor2forall2strongblocks}
The running time of Algorithm \ref{algo:algor2forall2strongblocks} is $O(t_{sap}n^{2})$.
\end{theorem}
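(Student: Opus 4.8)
The plan is to bound the running time of Algorithm \ref{algo:algor2forall2strongblocks} by analysing each block of the pseudocode separately and observing that the loop over the strong articulation points dominates the total cost.

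First I would dispose of the preprocessing steps (lines $1$--$6$). The test of whether $G$ is $2$-vertex-connected can be performed in linear time using the algorithm of Italiano et al. \cite{ILS12}, and the same reference provides a linear-time procedure for computing all SAPs of $G$ in line $6$. Initializing the $n\times n$ matrix $A$ in lines $4$--$5$ costs $O(n^2)$. Since $G$ is a simple directed graph we have $m\leq n(n-1)$, hence $m=O(n^2)$ and every linear-time subroutine is absorbed into an $O(n^2)$ term.

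The heart of the argument is the main loop in lines $7$--$11$, which iterates over the $t_{sap}$ strong articulation points of $G$. For a fixed SAP $s$, computing the SCCs of $G\setminus\{s\}$ in line $8$ takes $O(n+m)=O(n^2)$ time by Tarjan's algorithm \cite{T72}, and the inner double loop over all ordered pairs $(v,w)$ in lines $9$--$11$ performs a constant amount of work per pair, for a total of $O(n^2)$ per SAP. Thus this loop runs in $O(t_{sap}\,n^2)$ time overall.

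Finally I would account for the postprocessing (lines $12$--$16$). Building the edge set $E^{*}$ in lines $12$--$15$ examines all $O(n^2)$ ordered pairs and so costs $O(n^2)$; since $G^{*}$ has $O(n^2)$ edges, computing its blocks with Tarjan's algorithm \cite{T72} in line $16$ also costs $O(n^2)$. Summing the contributions $O(n^2)+O(n^2)+O(t_{sap}\,n^2)+O(n^2)=O(t_{sap}\,n^2)$ (using $t_{sap}\geq 1$, which holds because we are in the else-branch where $G$ is not $2$-vertex-connected and therefore contains at least one SAP) yields the claimed bound. I do not anticipate a genuine obstacle here; the only point requiring a little care is to confirm that each individual SCC computation inside the loop is truly linear and that the number of loop iterations is exactly $t_{sap}$, so that the $O(n^2)$ per-iteration cost multiplies to $O(t_{sap}\,n^2)$ rather than being charged only once.
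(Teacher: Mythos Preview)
Your proposal is correct and follows the same approach as the paper's own proof, which is terser: the paper only remarks that the SAPs can be found in linear time by \cite{ILS12} and that lines $7$--$11$ take $O(t_{sap}n^{2})$ time. Your version simply spells out the per-step bookkeeping (matrix initialization, SCC computations, construction of $G^{*}$, and the use of $t_{sap}\geq 1$ in the else-branch) that the paper leaves implicit.
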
 
\begin{proof}
The SAPs of a directed graph can be computed in linear time using the algorithm of Italiano et al. \cite{ILS12}. Lines $7$--$11$ take $O(t_{sap}n^{2})$ time.
\end{proof}
\begin{corollary}\label{def:2strongblocksCanbeComputedInTime}
The $2$-strong blocks of a directed graph $G=(V,E)$ can be computed in $O( \min \lbrace m,t_{sap} n\rbrace  n)$ time.
\end{corollary}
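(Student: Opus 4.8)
The plan is to obtain the claimed bound by selecting, for each strongly connected piece of $G$, whichever of the two algorithms of this section is faster, and then combining the results. First I would dispose of the strongly connected case. By Theorem \ref{def:algo2sbrunningtime}, Algorithm \ref{algo:algor1forall2strongblocks} computes the $2$-strong blocks of a strongly connected graph in $O(nm)$ time, and by Theorem \ref{def:RunTimealgor2forall2strongblocks}, Algorithm \ref{algo:algor2forall2strongblocks} does so in $O(t_{sap}n^2)$ time. To realize the minimum of the two, I would first compute all SAPs in linear time with the algorithm of Italiano et al. \cite{ILS12}, thereby obtaining $t_{sap}$, compare $m$ against $t_{sap}n$, and then run Algorithm \ref{algo:algor1forall2strongblocks} if $m \le t_{sap}n$ and Algorithm \ref{algo:algor2forall2strongblocks} otherwise. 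This yields running time $O(\min\lbrace nm, t_{sap}n^2\rbrace)=O(\min\lbrace m, t_{sap}n\rbrace n)$ for a strongly connected input.

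Next I would treat an arbitrary directed graph. As noted at the beginning of this section, the $2$-strong blocks of $G$ are exactly the union of the $2$-strong blocks of its SCCs. Hence I would compute the SCCs $G_1,\ldots,G_k$ of $G$ in linear time, apply the procedure above to each $G_i$, and output the union. Writing $n_i$ and $m_i$ for the number of vertices and edges of $G_i$ and $t_i$ for the number of SAPs of $G_i$, the total cost is $O(\sum_i \min\lbrace n_i m_i, t_i n_i^2\rbrace)$ plus the linear-time overhead for the SCC decomposition and the SAP computations.

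It remains to bound this sum by $O(\min\lbrace m, t_{sap}n\rbrace n)$, which I expect to be the only genuinely delicate step. Applying $\min\lbrace a_i,b_i\rbrace \le a_i$ and $\min\lbrace a_i,b_i\rbrace \le b_i$ termwise gives $\sum_i \min\lbrace a_i,b_i\rbrace \le \min\lbrace \sum_i a_i, \sum_i b_i\rbrace$, so it suffices to bound $\sum_i n_i m_i$ and $\sum_i t_i n_i^2$ separately. Since the vertex sets of the SCCs partition $V$ and their edge sets are disjoint subsets of $E$, we have $n_i \le n$ and $\sum_i m_i \le m$, whence $\sum_i n_i m_i \le n\sum_i m_i \le nm$. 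For the second sum, the key observation is that a vertex is a SAP of $G$ if and only if it is a SAP of the standalone SCC containing it: removing a vertex $v\in V_i$ leaves every other SCC intact and alters only the decomposition of $G_i\setminus\lbrace v\rbrace$, so the number of SCCs of $G$ strictly increases exactly when $G_i\setminus\lbrace v\rbrace$ fails to be strongly connected. Consequently $\sum_i t_i = t_{sap}$, and using $n_i \le n$ once more gives $\sum_i t_i n_i^2 \le n^2 \sum_i t_i = t_{sap}n^2$. Combining the two estimates yields the desired $O(\min\lbrace m, t_{sap}n\rbrace n)$ bound.
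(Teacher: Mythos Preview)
Your proposal is correct and follows the same approach the paper intends: the corollary is stated without proof, being an immediate combination of Theorems \ref{def:algo2sbrunningtime} and \ref{def:RunTimealgor2forall2strongblocks} together with the remark at the start of Section \ref{def:secC2sb} that the $2$-strong blocks of a directed graph are the union of those of its SCCs. Your treatment is in fact more careful than the paper's, since you explicitly verify $\sum_i t_i = t_{sap}$ and bound the per-SCC costs, whereas the paper leaves all of this implicit.
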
 
\section{Computing the $2$-edge blocks} \label{def:sec2eb}
In this section we present two algorithms for computing the $2$-edge blocks of directed graphs. The $2$-edge blocks of a directed graph are the union of the $2$-edge blocks of its SCCs. We define a relation $\overset{2e}{\leftrightsquigarrow } $ as follows. For any distinct vertices $x,y \in V$, we write $x \overset{2e}{\leftrightsquigarrow } y$ if there exist two edge-disjoint paths from $x$ to $y$ and two edge-disjoint paths from $y$ to $x$ in $G$. The $2$-edge blocks are maximal subsets closed under $x \overset{2e}{\leftrightsquigarrow } y$.
\begin{lemma} \label{def:2eblemma1}
Let $G=(V,E)$ be a strongly connected graph and let $x$ and $y$ be distinct vertices in $G$. Then $x \overset{2e}{\leftrightsquigarrow } y$ if and only if for each edge $(v,w)\in E$, the vertices $x,y$ lie in the same SCC of $G\setminus \lbrace (v,w)\rbrace$.
\end{lemma}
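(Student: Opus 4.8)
This is the edge-analogue of Lemma~\ref{def:lem1}, so the plan is to mirror that earlier proof but replace vertex connectivity by edge connectivity, invoking Menger's Theorem for edge connectivity instead of for vertex connectivity. I would prove both directions as biconditional statements about the single edge $(x,y)$ playing the special role, exactly as the edge $(x,y)$ was singled out in the vertex case.

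\medskip

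\textbf{Direction ``$\Rightarrow$''.} First I would assume $x \overset{2e}{\leftrightsquigarrow} y$, so by definition there are two edge-disjoint paths from $x$ to $y$ and two edge-disjoint paths from $y$ to $x$ in $G$. Fix an arbitrary edge $(v,w)\in E$. Since the two paths from $x$ to $y$ are edge-disjoint, at most one of them uses the edge $(v,w)$, so deleting $(v,w)$ leaves at least one path from $x$ to $y$ intact; symmetrically at least one path from $y$ to $x$ survives. Hence $x$ and $y$ remain in the same SCC of $G\setminus\lbrace(v,w)\rbrace$. This is essentially the same observation already recorded in Lemma~\ref{def:xyInSameSccWithoutE}, so this direction is short.

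\medskip

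\textbf{Direction ``$\Leftarrow$''.} Here I would assume that for every edge $(v,w)\in E$ the vertices $x,y$ stay in the same SCC of $G\setminus\lbrace(v,w)\rbrace$, and show $x \overset{2e}{\leftrightsquigarrow} y$; by symmetry it suffices to produce two edge-disjoint paths from $x$ to $y$. The hypothesis says precisely that no single edge deletion disconnects $x$ from $y$ (in the reachability sense), i.e.\ one cannot interrupt all $x$-to-$y$ paths by removing one edge. By Menger's Theorem for edge connectivity, the maximum number of edge-disjoint $x$-to-$y$ paths equals the minimum number of edges whose removal destroys all $x$-to-$y$ paths; since that minimum is at least $2$, there exist two edge-disjoint paths from $x$ to $y$. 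The same argument applied to the $y$-to-$x$ direction gives $y \overset{2e}{\rightsquigarrow} x$, and together these yield $x \overset{2e}{\leftrightsquigarrow} y$.

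\medskip

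\textbf{Main obstacle.} Unlike the vertex case of Lemma~\ref{def:lem1}, there is no adjacency complication to split into two cases: the edge version of Menger's Theorem does not require $x$ and $y$ to be non-adjacent, so the awkward $(x,y)\in E$ case that forced the extra SCC-condition on $G\setminus\lbrace(x,y),(y,x)\rbrace$ in Lemma~\ref{def:lem1} simply disappears. The only point deserving care is making sure the single-edge-deletion hypothesis is exactly the hypothesis of the min-cut side of Menger's edge-connectivity theorem; this is straightforward but should be stated cleanly so that the application of the theorem is unambiguous.
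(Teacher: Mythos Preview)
Your proposal is correct and is exactly the approach the paper takes: the paper's proof is the single sentence ``This is an immediate consequence of Menger's Theorem for edge connectivity,'' and you have simply unpacked that sentence into its two directions. Your observation that the adjacency case-split of Lemma~\ref{def:lem1} is unnecessary here (because the edge version of Menger's Theorem has no non-adjacency restriction) is precisely why the paper can dispose of this lemma in one line.
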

\begin{proof} This is an immediate consequence of Menger's Theorem for edge connectivity \cite{BR12}.
\end{proof}
\begin{lemma}\label{def:2ebsAredisjoint}
Let $G=(V,E)$ be a strongly connected graph. The $2$-edge blocks of $G$ are disjoint.
\end{lemma}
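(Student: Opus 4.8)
The plan is to prove the stronger statement that the relation $\overset{2e}{\leftrightsquigarrow}$ is an equivalence relation on $V$ (after adding the diagonal for reflexivity); the disjointness of the $2$-edge blocks will then be a routine consequence of maximality. The engine of the argument is the characterization in Lemma \ref{def:2eblemma1}: for distinct $x,y$ we have $x \overset{2e}{\leftrightsquigarrow} y$ if and only if $x$ and $y$ lie in the same SCC of $G\setminus\lbrace (v,w)\rbrace$ for \emph{every} edge $(v,w)\in E$. For each fixed edge $e\in E$, the relation ``lie in the same SCC of $G\setminus\lbrace e\rbrace$'' is an equivalence relation on $V$, since being mutually reachable within a fixed digraph is reflexive, symmetric, and transitive. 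Thus Lemma \ref{def:2eblemma1} exhibits $\overset{2e}{\leftrightsquigarrow}$ as the intersection, over all $e\in E$, of these per-edge equivalence relations, and an intersection of equivalence relations is again an equivalence relation. In particular $\overset{2e}{\leftrightsquigarrow}$ is transitive, which is the only property I will actually need.

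From transitivity (together with the evident symmetry built into the definition) I would derive disjointness as follows. Suppose, for contradiction, that two distinct $2$-edge blocks $C^{2e}_1$ and $C^{2e}_2$ share a vertex $u$. I claim every pair of distinct vertices of $C^{2e}_1\cup C^{2e}_2$ is related by $\overset{2e}{\leftrightsquigarrow}$. Indeed, take distinct $x,y\in C^{2e}_1\cup C^{2e}_2$; if both lie in the same block we are done by definition of a $2$-edge block, and if, say, $x\in C^{2e}_1$ and $y\in C^{2e}_2$ then using the common vertex $u$ we have $x\overset{2e}{\leftrightsquigarrow} u$ and $u\overset{2e}{\leftrightsquigarrow} y$ (handling trivially the cases $x=u$ or $y=u$), whence $x\overset{2e}{\leftrightsquigarrow} y$ by transitivity. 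Hence $C^{2e}_1\cup C^{2e}_2$ is a vertex set of size $>1$ all of whose pairs satisfy $\overset{2e}{\leftrightsquigarrow}$. Since $C^{2e}_1\neq C^{2e}_2$ and both are maximal, neither contains the other, so $C^{2e}_1\cup C^{2e}_2$ properly contains $C^{2e}_1$, contradicting the maximality of $C^{2e}_1$. Therefore no common vertex can exist, and the $2$-edge blocks are pairwise disjoint.

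There is essentially no obstacle here, and it is worth noting exactly why this case is easier than the vertex-based blocks. The characterization of $\overset{2e}{\leftrightsquigarrow}$ in Lemma \ref{def:2eblemma1} involves only the condition ``same SCC of $G\setminus\lbrace e\rbrace$'' uniformly over all edges $e$, with no extra clause depending on the endpoints $x,y$. By contrast, the characterization of $\overset{2}{\leftrightsquigarrow}$ in Lemma \ref{def:lem1} carries the endpoint-specific requirement about $G\setminus\lbrace (x,y),(y,x)\rbrace$, which is precisely what prevents $\overset{2}{\leftrightsquigarrow}$ from being transitive and which is why $2$-directed and $2$-strong blocks can overlap in a vertex (Lemmas \ref{def:lem2} and \ref{def:2sbsHasAtMostOneVertex} only guarantee at most one common vertex). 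Here the clean intersection-of-equivalence-relations structure makes full transitivity available, and so the blocks partition rather than merely forming a forest.
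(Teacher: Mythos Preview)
Your proof is correct and follows essentially the same approach as the paper: both use a common vertex in the hypothetical intersection together with the transitivity of ``same SCC in $G\setminus\lbrace e\rbrace$'' to reach a contradiction with maximality. Your framing of $\overset{2e}{\leftrightsquigarrow}$ as an intersection of per-edge equivalence relations is a slightly more abstract packaging of the same idea, and your closing remark explaining why the analogous transitivity fails for $\overset{2}{\leftrightsquigarrow}$ is a nice addition, but the underlying argument is the same.
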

\begin{proof}
Let $C^{2e}_1,C^{2e}_2$ be two distinct $2$-edge blocks of $G$. Assume for a contradiction that $C^{2e}_1 \cap C^{2e}_2 \neq \emptyset$. Then there are two vertices $x\in C^{2e}_1,y\in C^{2e}_2$ with $x,y \notin C^{2e}_1 \cap C^{2e}_2$ and an edge $(v,w)\in E$ such that the vertices $x,y$ lie in distinct SCCs of $G\setminus \lbrace (v,w)\rbrace$. Let $z$ be a vertex in $C^{2e}_1 \cap C^{2e}_2$.
Then the vertices $x,z$ lie in the same SCC of $G\setminus \lbrace (v,w)\rbrace$ since $ C^{2e}_1$ is a $2$-edge block
and the vertices $z,y$ lie in the same SCC of $G\setminus \lbrace (v,w)\rbrace$ since $C^{2e}_2$ is a $2$-edge block. Hence $x,y$ lie in the same SCC of $G\setminus \lbrace (v,w)\rbrace$, a contradiction.
\end{proof}
Algorithm \ref{algo:Algor1ForAll2EdgeBlocks} shows our first algorithm for computing the $2$-edge blocks of a strongly connected graph $G$.
\begin{figure}[htbp]
\begin{myalgorithm}\label{algo:Algor1ForAll2EdgeBlocks}\rm\quad\\[-5ex]
\begin{tabbing}
\quad\quad\=\quad\=\quad\=\quad\=\quad\=\quad\=\quad\=\quad\=\quad\=\kill
\textbf{Input:} A strongly connected graph $G=(V,E)$.\\
\textbf{Output:} The $2$-edge blocks of $G$.\\
{\small 1}\> \textbf{if} $G$ is $2$-edge-connected \textbf{then}\\
{\small 2}\>\> Output $V$.\\
{\small 3}\> \textbf{else}\\
{\small 4}\>\> Let $A$ be an $n\times n$ matrix.\\
{\small 5}\>\> Initialize $A$ with $0$s.\\
{\small 6}\>\> \textbf{for} each vertex $v\in V$ \textbf{do}\\
{\small 7}\>\>\> Compute the edge dominators of $G(v)=(V,E,v)$.\\
{\small 8}\>\>\> \textbf{for} each vertex $w\in V\setminus \lbrace v\rbrace$ \textbf{do}\\
{\small 9}\>\>\>\> \textbf{If} there is no edge dominator of $w$ \textbf{then}\\
{\small 10}\>\>\>\>\> $A[v,w] \leftarrow 1$. \\
{\small 11}\>\> $E^{*} \leftarrow \emptyset$. \\
{\small 12}\>\> \textbf{for} each pair $(v,w) \in V\times V$ \textbf{do} \\
{\small 13}\>\>\> \textbf{if} $A[v,w]=1$ and $A[w,v]=1$ \textbf{then} \\
{\small 14}\>\>\>\> Add the undirected edge $(v,w)$ to $E^{*}$. \\
{\small 15}\>\> Compute the connected components of size $>1$ of the graph  \\
{\small 16}\>\> $G^{*}=(V,E^{*})$ and output them. 
\end{tabbing}
\end{myalgorithm}
\end{figure}

Algorithm \ref{algo:Algor1ForAll2EdgeBlocks} works as follows. First, line $1$ tests whether $G$ is $2$-edge-connected, and if it is, line $2$ outputs $V$, since every $2$-edge connected directed graph is a $2$-edge block. Otherwise, for each vertex $v$ in $G$, the algorithm computes the edge dominators of the flowgraph $G(v)=(V,E,v)$, and for each vertex $w\in V\setminus \lbrace v\rbrace$, line $10$ sets $A[v,w]$ to $1$ if there is no edge dominator of $w$. Let $v,w$ be distinct vertices in $G$. Then $v \overset{2e}{\leftrightsquigarrow } w$ if and only if $A[v,w]=1$ and $A[w,v]=1$ in line $13$. Lines $11$--$14$ constructs an undirected graph $G^{*}=(V,E^{*})$ as follows. For each pair $(v,w) \in V\times V$, we add an undirected edge $(v,w)$ to $E^{*}$ if $A[v,w]=1$ and $A[w,v]=1$. Finally, the algorithm finds the connected components of size at least $2$ of $G^{*}$. This is correct by Lemma \ref{def:2ebsAredisjoint}.

In \cite{ILS12}, Italiano et al. presented two algorithms for calculating the strong bridges of a strongly connected graph $G=(V,E)$. We use them to implement lines $8$--$10$ of Algorithm \ref{algo:Algor1ForAll2EdgeBlocks} as follows. Consider a flowgraph $G(v)=(V,E,v)$. For each edge $e=(x,y)\in E$, we delete this edge from $G(v)$ and we add two new edges $(x,\varphi (e)),(\varphi (e),y)$ to $G(v)$. We obtain a new flowgraph, denoted $G'(v)=(V',E',v)$. Then, we compute the dominator tree $DT'(v)$ of $G'(v)$. Obviously, an edge $e$ is an edge dominator of vertex $w \in V\setminus \lbrace v\rbrace$ in $G(v)$ if and only if the corresponding vertex $\varphi (e)$ is a dominator of $w$ in $G'(v)$. We mark  the vertices of $G$ that have edge dominators in $G(v)$ by depth first search in $DT'(v)$. Therefore, lines $8$--$10$ can be implemented in linear time. In \cite{ILS12}, Italiano et al. observed that the strong bridges of $G$ are the SAPs of the directed graph $G'=(V',E')$ that correspond to edges in $G$. We will use these strong bridges in our second algorithm for computing the $2$-edge blocks of $G$.
\begin{theorem}\label{def:Algor1ForAll2EdgeBlocksrunningtime}
Algorithm \ref{algo:Algor1ForAll2EdgeBlocks} runs in $O(nm)$ time.
\end{theorem}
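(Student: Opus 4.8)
The plan is to bound the cost of each stage of Algorithm \ref{algo:Algor1ForAll2EdgeBlocks} separately and then sum them, exactly as in the proof of Theorem \ref{def:algo2dbrunningtime}. First I would dispose of the cheap parts. Line $1$ tests whether $G$ is $2$-edge-connected, which can be done in linear time using the strong-bridge algorithm of Italiano et al. \cite{ILS12}; lines $4$--$5$ initialize the $n \times n$ matrix $A$ in $O(n^2)$ time; and lines $11$--$16$, which scan all $n^2$ ordered pairs $(v,w)$ to build $E^{*}$ and then compute the connected components of $G^{*}=(V,E^{*})$, take $O(n^2)$ time, since $G^{*}$ has $n$ vertices and $O(n^2)$ edges.

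The heart of the analysis is the loop in lines $6$--$10$, and here I would argue that a single iteration (for a fixed start vertex $v$) costs only $O(n+m)$. The flowgraph $G'(v)=(V',E',v)$ obtained by subdividing every edge of $G$ has $|V'|=n+m$ vertices and $|E'|=2m$ edges, so its dominator tree $DT'(v)$ can be computed in linear time $O(|V'|+|E'|)=O(n+m)$ by a linear-time dominators algorithm \cite{BGKRTW00,AHLT99}. By the correspondence noted after the algorithm, an edge $e$ is an edge dominator of $w$ in $G(v)$ if and only if the subdivision vertex $\varphi(e)$ is a dominator of $w$ in $G'(v)$; hence a single depth-first traversal of $DT'(v)$ suffices to mark, in linear time, every $w\in V\setminus\lbrace v\rbrace$ that has an edge dominator and to set the entries $A[v,w]$ accordingly. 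Thus each of the $n$ iterations costs $O(n+m)$, for a total of $O(n(n+m))=O(nm)$.

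Adding the stages then yields $O(nm)+O(n^2)=O(nm)$, where the simplification uses $m\geq n$, which holds for strongly connected graphs. I do not expect a genuine obstacle here; the only point requiring a moment's care is the size of the subdivided flowgraph, namely that introducing one subdivision vertex per edge keeps $|V'|$ linear in $n+m$ (rather than blowing it up), so that the per-vertex dominator computation remains linear. Everything else is a routine summation over the known linear-time primitives.
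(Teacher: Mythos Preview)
Your proposal is correct and follows essentially the same approach as the paper: bound the $2$-edge-connectivity test, the edge-dominator computations in lines $6$--$10$, and the construction of $G^{*}$ separately, then sum. The only difference is cosmetic---you spell out the subdivided-flowgraph construction (with $|V'|=n+m$, $|E'|=2m$) to justify the linear-time edge-dominator step, whereas the paper simply cites \cite{ILS12,FILOS12} for that fact and records the same $O(nm)+O(n^{2})=O(nm)$ tally.
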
 
\begin{proof}
One can test whether a directed graph is $2$-edge-connected in linear time using the algorithm of Italiano et al. \cite{ILS12}. Furthermore, the edge dominators of a flowgraph $G(v)=(V,E,v)$ can be computed in linear time \cite{ILS12,FILOS12}. Lines $6$--$10$ take $O(nm)$ time. The connected components of $G^{*}$ can be found in $O(n^{2})$ time.    
\end{proof} 
\begin{lemma}\label{def:Relationbetween2ebAndSbs}
Let $G=(V,E)$ be a strongly connected graph and let $x,y$ be distinct vertices in $G$. Let $S_{ sb}$ be the set of all the strong bridges of $G$. Then for any edge $e \in E\setminus S_{ sb}$, the vertices $x$ and $y$ lie in the same SCC of $G\setminus \lbrace e\rbrace$.
\end{lemma}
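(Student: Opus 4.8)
The plan is to observe that this is the edge analogue of Lemma \ref{def:Relationbetween2sAndSAPs}, and like that result it should follow directly from the definition of a strong bridge. First I would recall that since $G=(V,E)$ is strongly connected, it has exactly one SCC, namely all of $V$. By definition, an edge is a strong bridge precisely when its removal increases the number of SCCs of $G$.

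Now I would argue by contraposition against the definition. Take any edge $e \in E \setminus S_{sb}$, so $e$ is not a strong bridge. Then, by definition, deleting $e$ does not increase the number of SCCs, so $G \setminus \lbrace e\rbrace$ still has exactly one SCC and hence remains strongly connected. Consequently every pair of distinct vertices of $G$, and in particular $x$ and $y$, lies in the same (unique) SCC of $G \setminus \lbrace e\rbrace$. This gives the claim.

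I do not anticipate any genuine obstacle here: the statement is essentially a restatement of what it means for $e$ not to be a strong bridge, exactly as Lemma \ref{def:Relationbetween2sAndSAPs} was immediate for the vertex/SAP case. The only point worth stating explicitly is that strong connectivity of $G$ forces the unique-SCC property, so that ``does not increase the number of SCCs'' is equivalent to ``$G\setminus\lbrace e\rbrace$ is still strongly connected,'' from which the conclusion about the arbitrary pair $x,y$ follows at once.
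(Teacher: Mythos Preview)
Your proposal is correct and matches the paper's approach exactly: the paper's proof is simply ``Immediate from the definition,'' and you have spelled out precisely that reasoning. There is nothing to add.
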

\begin{proof} Immediate from the definition.  
\end{proof}
This simple lemma leads to another algorithm (Algorithm \ref{algo:algor2forall2edgeblocks}) which might be useful when $t_{sb}$ is small. 
\begin{figure}[htbp]
\begin{myalgorithm}\label{algo:algor2forall2edgeblocks}\rm\quad\\[-5ex]
\begin{tabbing}
\quad\quad\=\quad\=\quad\=\quad\=\quad\=\quad\=\quad\=\quad\=\quad\=\kill
\textbf{Input:} A strongly connected graph $G=(V,E)$.\\
\textbf{Output:} The $2$-edge blocks of $G$.\\
{\small 1}\> \textbf{If} $G$ is $2$-edge-connected \textbf{then}.\\
{\small 2}\>\> Output $V$.\\
{\small 3}\> \textbf{else}\\
{\small 4}\>\> Let $A$ be an $n\times n$ matrix.\\
{\small 5}\>\> Initialize $A$ with $1$s.\\
{\small 6}\>\> \textbf{for} each strong bridge $e$ of $G$ \textbf{do} \\
{\small 7}\>\>\> \textbf{for} each pair $(v,w) \in V\times V$ \textbf{do} \\
{\small 8}\>\>\>\> \textbf{if} $v,w$ in distinct SCCs of $G\setminus \lbrace e\rbrace$ \textbf{then}\\
{\small 9}\>\>\>\>\> $A[v,w] \leftarrow 0$. \\
{\small 10}\>\> Lines $12$--$15$ of Algorithm \ref{algo:algor2forall2strongblocks} to construct $G^{*}=(V,E^{*})$. \\
{\small 14}\>\> Compute the connected components of size $>1$ of $G^{*}$ and output them.\\
\end{tabbing}
\end{myalgorithm}
\end{figure}

The correctness of this algorithm follows from the following lemma.
\begin{lemma}\label{def:vwInSame2EdgeBlocks}
Let $v,w$ be distinct vertices in a strongly connected graph $G$. Then  $v \overset{2e}{\leftrightsquigarrow } w$ if and only if $A[v,w]=1$ and $A[w,v]=1$ (when line $10$ is reached).
\end{lemma}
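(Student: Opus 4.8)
The plan is to prove Lemma \ref{def:vwInSame2EdgeBlocks} by showing that the matrix $A$ correctly records, for every ordered pair $(v,w)$, whether $v$ and $w$ survive together in every SCC after deleting any strong bridge, and then invoke Lemma \ref{def:2eblemma1} to translate this into the relation $\overset{2e}{\leftrightsquigarrow}$. The argument splits naturally along the same two directions as Lemma \ref{def:vwInSameSCCiFA1}, and the key auxiliary fact is Lemma \ref{def:Relationbetween2ebAndSbs}, which tells us that only strong bridges can ever separate a pair of vertices.

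First I would analyze what value $A[v,w]$ holds when line $10$ is reached. The matrix is initialized to all $1$s in line $5$, and the only place an entry is lowered to $0$ is line $9$, which fires precisely when $v$ and $w$ lie in distinct SCCs of $G\setminus\lbrace e\rbrace$ for some strong bridge $e$ examined in the loop of lines $6$--$9$. Hence $A[v,w]=1$ at line $10$ if and only if $v$ and $w$ lie in the same SCC of $G\setminus\lbrace e\rbrace$ for every strong bridge $e$ of $G$. This is the pivotal characterization; everything else is bookkeeping.

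For the ``$\Leftarrow$'' direction, suppose $A[v,w]=1$ and $A[w,v]=1$ at line $10$; by the observation above this means $v,w$ stay in the same SCC of $G\setminus\lbrace e\rbrace$ for every strong bridge $e$. By Lemma \ref{def:Relationbetween2ebAndSbs}, for any edge $e\in E\setminus S_{sb}$ the vertices $v,w$ automatically lie in the same SCC of $G\setminus\lbrace e\rbrace$. Combining these two facts, $v,w$ lie in the same SCC of $G\setminus\lbrace e\rbrace$ for \emph{every} edge $e\in E$, and Lemma \ref{def:2eblemma1} then yields $v \overset{2e}{\leftrightsquigarrow} w$. For the ``$\Rightarrow$'' direction, if $v \overset{2e}{\leftrightsquigarrow} w$ then Lemma \ref{def:2eblemma1} gives that $v,w$ share an SCC of $G\setminus\lbrace e\rbrace$ for every edge, in particular for every strong bridge; therefore line $9$ never lowers $A[v,w]$ or $A[w,v]$, so both remain $1$ at line $10$.

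The only step requiring care is the bidirectionality of the strong-bridge deletion, since an SCC is a symmetric notion in the pair $(v,w)$ but the matrix is indexed by ordered pairs: I would note that ``$v,w$ in distinct SCCs of $G\setminus\lbrace e\rbrace$'' is symmetric, so line $9$ sets $A[v,w]$ and $A[w,v]$ to $0$ together, which keeps the test in line $13$ of the referenced construction consistent. No genuine obstacle arises; the proof is a direct assembly of Lemma \ref{def:2eblemma1} and Lemma \ref{def:Relationbetween2ebAndSbs} around the simple invariant maintained by the loop, entirely parallel to the proof of Lemma \ref{def:vwInSameSCCiFA1} for $2$-strong blocks.
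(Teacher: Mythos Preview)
Your proposal is correct and follows essentially the same approach as the paper, which simply states that the proof is ``similar to the proof of Lemma \ref{def:vwInSameSCCiFA1} using Lemma \ref{def:Relationbetween2ebAndSbs}.'' You have spelled out precisely that parallel argument, invoking Lemma \ref{def:Relationbetween2ebAndSbs} for non-bridge edges and Lemma \ref{def:2eblemma1} to translate the SCC condition into $\overset{2e}{\leftrightsquigarrow}$, with only minor added commentary on symmetry.
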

\begin{proof}
Similar to the proof of Lemma \ref{def:vwInSameSCCiFA1} using Lemma \ref{def:Relationbetween2ebAndSbs}.
\end{proof}
\begin{theorem}\label{def:RunTimeAlgor2ForAll2EdgeBlocks}
Algorithm \ref{algo:algor2forall2edgeblocks} runs in $O(t_{sb}n^{2})$ time.
\end{theorem}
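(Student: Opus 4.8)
The plan is to mirror the running-time analysis of the analogous Algorithm \ref{algo:algor2forall2strongblocks} given in Theorem \ref{def:RunTimealgor2forall2strongblocks}, since the two algorithms share the same structure: an outer loop over a set of ``critical'' objects (there the strong articulation points, here the strong bridges), an inner double loop over all ordered pairs of vertices, and a final block/connected-component computation on the auxiliary graph $G^{*}$. First I would account for the preprocessing: testing whether $G$ is $2$-edge-connected takes linear time by the algorithm of Italiano et al. \cite{ILS12}, initializing the $n\times n$ matrix $A$ takes $O(n^{2})$ time, and computing all the strong bridges of $G$ also takes linear time \cite{ILS12}. None of these dominates the main loop.

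The heart of the analysis is the loop in lines $6$--$9$. For each strong bridge $e$ of $G$ (there are $t_{sb}$ of them), the algorithm computes the SCCs of $G\setminus\lbrace e\rbrace$, which costs $O(n+m)$ time via Tarjan's SCC algorithm, and then scans all $O(n^{2})$ ordered pairs $(v,w)$ to zero out the corresponding entry of $A$ whenever $v,w$ fall in distinct SCCs. Since the pair scan costs $O(n^{2})$ per bridge and dominates the $O(n+m)$ SCC computation (as $m\leq n^{2}$), each iteration costs $O(n^{2})$, and the loop therefore takes $O(t_{sb}n^{2})$ time in total. The final construction of $G^{*}$ from $A$ (lines $12$--$15$ of Algorithm \ref{algo:algor2forall2strongblocks}, invoked in line $10$) and the computation of its connected components of size $>1$ both run in $O(n^{2})$ time, so they are absorbed into the bound.

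Summing these contributions gives $O(n+m)+O(n^{2})+O(t_{sb}n^{2})+O(n^{2})=O(t_{sb}n^{2})$, which is the claimed bound. I do not expect any genuine obstacle here: the argument is a straightforward accounting over the algorithm's lines, and the only point worth stating carefully is that the per-bridge cost is governed by the $O(n^{2})$ pair scan rather than by the single SCC computation, so the number of strong bridges enters the bound linearly.
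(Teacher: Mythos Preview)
Your proposal is correct and follows essentially the same approach as the paper's own proof, which is very terse: it simply notes that the strong bridges can be found in linear time by \cite{ILS12} and that the main loop costs $O(t_{sb}n^{2})$. Your version spells out the per-line accounting (including the $O(n^{2})$ initialization, the SCC computation per bridge, and the final construction of $G^{*}$) more explicitly, but the argument is the same.
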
 
\begin{proof}
 The strong bridges of a directed graph can be found in linear time using the algorithm of Italiano et al. \cite{ILS12}. Lines $7$--$11$ take $O(t_{sb}n^{2})$ time.
\end{proof}
Let $G$ a directed graph. Italiano et al. \cite{ILS12} showed that $t_{sb}\leq(2n-2)$.
\begin{corollary}\label{def:RunTimeAll2EdgegBlocksCanbeComputed}
The $2$-edge blocks of a directed graph $G=(V,E)$ can be computed in $O( \min \lbrace m,t_{sb} n\rbrace  n)$ time.
\end{corollary}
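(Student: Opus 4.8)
The plan is to obtain the bound by running whichever of the two algorithms of this section is faster, using a linear-time preprocessing step to decide which one that is. For a strongly connected graph, Theorem~\ref{def:Algor1ForAll2EdgeBlocksrunningtime} gives Algorithm~\ref{algo:Algor1ForAll2EdgeBlocks} with running time $O(nm)$, and Theorem~\ref{def:RunTimeAlgor2ForAll2EdgeBlocks} gives Algorithm~\ref{algo:algor2forall2edgeblocks} with running time $O(t_{sb}n^2)$. Since $\min\{nm,\,t_{sb}n^2\}=n\min\{m,\,t_{sb}n\}$, it suffices to select the better of the two. First I would compute all strong bridges of $G$ in linear time using the algorithm of Italiano et al.~\cite{ILS12}, so that $t_{sb}$ is known; then I would compare $m$ with $t_{sb}n$ and run Algorithm~\ref{algo:Algor1ForAll2EdgeBlocks} when $m\le t_{sb}n$ and Algorithm~\ref{algo:algor2forall2edgeblocks} otherwise. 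This yields $O(\min\{m,\,t_{sb}n\}n)$ time for strongly connected graphs.

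For an arbitrary directed graph I would first decompose $G$ into its strongly connected components $G_1,\dots,G_k$ in linear time; as noted at the start of this section, the $2$-edge blocks of $G$ are exactly the union of the $2$-edge blocks of the $G_i$, so it suffices to run the selection above on each component separately. Writing $n_i,m_i$ for the number of vertices and edges of $G_i$ and $t_{sb,i}$ for its number of strong bridges, the cost on $G_i$ is $O(\min\{m_i,\,t_{sb,i}n_i\}\,n_i)$, and the total cost is $\sum_i \min\{m_i,\,t_{sb,i}n_i\}\,n_i$.

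To bound this sum I would use $\sum_i m_i\le m$ and $\sum_i n_i\le n$, together with $n_i\le n$, to get $\sum_i \min\{m_i,t_{sb,i}n_i\}n_i \le \sum_i m_i n_i \le n\sum_i m_i\le nm$ and, on the other hand, $\sum_i \min\{m_i,t_{sb,i}n_i\}n_i \le \sum_i t_{sb,i}n_i^2\le n^2\sum_i t_{sb,i}$. Since $\sum_i\min\{a_i,b_i\}\le\min\{\sum_i a_i,\sum_i b_i\}$, combining the two estimates gives the total bound $n\min\{m,\,n\sum_i t_{sb,i}\}$.

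The one genuine point to verify --- and what I expect to be the main obstacle --- is the identity $\sum_i t_{sb,i}=t_{sb}$, i.e.\ that the strong bridges of $G$ are precisely the edges that are strong bridges of the SCC containing them. I would argue this from maximality of the SCCs: an edge joining two distinct SCCs lies on no cycle, so deleting it leaves the SCC partition unchanged and it is never a strong bridge; and for an edge $e$ inside a component $G_i$, any path in $G\setminus\{e\}$ between two vertices of $V_i$ cannot leave $V_i$ and return (that would force an outside vertex into the SCC), so reachability within $V_i$ in $G\setminus\{e\}$ coincides with reachability in $G_i\setminus\{e\}$, whence $e$ is a strong bridge of $G$ if and only if it is a strong bridge of $G_i$. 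With this identity the bound becomes $n\min\{m,t_{sb}n\}=O(\min\{m,t_{sb}n\}n)$, as claimed.
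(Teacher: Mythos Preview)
Your proposal is correct and follows the same idea the paper has in mind: combine the $O(nm)$ bound of Theorem~\ref{def:Algor1ForAll2EdgeBlocksrunningtime} with the $O(t_{sb}n^2)$ bound of Theorem~\ref{def:RunTimeAlgor2ForAll2EdgeBlocks} and take the smaller. The paper states the corollary without proof, treating it as immediate from those two theorems together with the remark at the beginning of the section that the $2$-edge blocks of a directed graph are the union of the $2$-edge blocks of its SCCs.

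Where you go further than the paper is in carefully justifying the passage from strongly connected graphs to arbitrary directed graphs: you sum the per-component costs and then verify the identity $\sum_i t_{sb,i}=t_{sb}$ by arguing that inter-SCC edges are never strong bridges and that for an intra-SCC edge the notion of strong bridge in $G$ and in its SCC coincide. This is a genuine detail the paper glosses over, and your argument for it is sound. The practical upshot is the same bound, but your write-up is more rigorous on this point than the paper's one-line corollary.
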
 
Now we show that the $2$-edge block that contains a certain vertex can be computed in linear time. Let $G=(V,E)$ be a strongly connected graph and let $v\in V$. By $U(v)$ we denote the set of vertices that do not have edge dominators in $G(v)$ and by $U^{R}(v)$ we denote the set of vertices that do not have edge dominators in $G^{R}(v)$. Let $C^{2e}$ be the $2$-edge block of $G$ that includes $v$. The following lemma shows that $C^{2e}= U(v)\cap U^{R}(v)$.
\begin{lemma} \label{def:2EdgeBlockCertainVertex}
 $w\in C^{2e}$ if and only if $ w\in U(v)\cap U^{R}(v)$
\end{lemma}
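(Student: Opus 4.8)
The plan is to translate the defining two-way edge-connectivity condition for membership in $C^{2e}$ into the two edge-dominator conditions that define $U(v)$ and $U^{R}(v)$. First I would observe that, by Menger's Theorem for edge connectivity \cite{BR12}, there exist two edge-disjoint paths from $v$ to a vertex $w$ in $G$ if and only if no single edge separates $v$ from $w$, i.e.\ if and only if $w$ has no edge dominator in the flowgraph $G(v)$; this is exactly the statement that $w \in U(v)$. Applying the same reasoning in the reversal graph, $w \in U^{R}(v)$ holds if and only if there are two edge-disjoint paths from $v$ to $w$ in $G^{R}$, which by the definition of $G^{R}$ is the same as two edge-disjoint paths from $w$ to $v$ in $G$. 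Combining the two, $w \in U(v)\cap U^{R}(v)$ holds precisely when $v \overset{2e}{\leftrightsquigarrow} w$. (The degenerate case $w=v$ is handled separately: $v$ has no edge dominator in either flowgraph, so $v \in U(v)\cap U^{R}(v)$, and of course $v \in C^{2e}$.)

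It then remains to show that $v \overset{2e}{\leftrightsquigarrow} w$ is equivalent to $w \in C^{2e}$. One direction is immediate from the definition of a $2$-edge block: if $w \in C^{2e}$ with $w \neq v$, then $v \overset{2e}{\leftrightsquigarrow} w$. For the converse I would first establish that the relation $\overset{2e}{\leftrightsquigarrow}$ is transitive. This is where Lemma \ref{def:2eblemma1} does the work: it characterizes $x \overset{2e}{\leftrightsquigarrow} y$ as ``for every edge $(a,b)\in E$, the vertices $x$ and $y$ lie in the same SCC of $G\setminus\{(a,b)\}$.'' Since lying in a common SCC is, for each fixed deleted edge, an equivalence relation on the vertices, the conjunction over all edges inherits transitivity, so $x \overset{2e}{\leftrightsquigarrow} y$ and $y \overset{2e}{\leftrightsquigarrow} z$ force $x \overset{2e}{\leftrightsquigarrow} z$. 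Together with the evident symmetry this makes $\overset{2e}{\leftrightsquigarrow}$ an equivalence relation, so its maximal subsets closed under the relation (the $2$-edge blocks) are exactly its equivalence classes of size $\geq 2$; hence $C^{2e} = \{v\}\cup\{w : v \overset{2e}{\leftrightsquigarrow} w\}$, and $v \overset{2e}{\leftrightsquigarrow} w$ indeed implies $w \in C^{2e}$.

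I expect the main obstacle to be precisely this converse direction, and specifically the need to argue that every vertex edge-biconnected to $v$ lands in the \emph{same} $2$-edge block rather than merely in \emph{some} $2$-edge block. Disjointness of the $2$-edge blocks (Lemma \ref{def:2ebsAredisjoint}) alone does not suffice for this; the clean route is the transitivity argument above, which upgrades $\overset{2e}{\leftrightsquigarrow}$ from a merely symmetric relation to a genuine equivalence relation and thereby pins down $C^{2e}$ as a single equivalence class. Chaining the two equivalences then yields $w \in C^{2e} \iff v \overset{2e}{\leftrightsquigarrow} w \iff w \in U(v)\cap U^{R}(v)$, which is the claim.
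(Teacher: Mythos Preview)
Your proof is correct and follows essentially the same route as the paper: both use Menger's theorem for edge connectivity to translate membership in $U(v)\cap U^{R}(v)$ into the relation $v \overset{2e}{\leftrightsquigarrow} w$, and then identify this with membership in $C^{2e}$. You are in fact more careful than the paper on the second step: the paper's ``$\Leftarrow$'' direction stops once $v \overset{2e}{\leftrightsquigarrow} w$ is established, leaving the passage to $w\in C^{2e}$ implicit, whereas you explicitly supply the transitivity argument (via Lemma~\ref{def:2eblemma1}) that makes $\overset{2e}{\leftrightsquigarrow}$ an equivalence relation and hence identifies $C^{2e}$ with the equivalence class of $v$.
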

\begin{proof}
``$\Leftarrow$'' Let $ w\in (U(v)\cap U^{R}(v))\setminus\lbrace v\rbrace$. $w$ does not have any edge dominator in $G(v)$. Therefore, by Menger's Theorem for edge connectivity, there exist two edge-disjoint paths from $v$ to $w$ in $G(v)$. Furthermore, there are two edge-disjoint paths from $v$ to $w$ in $G^{R}(v)$ since $w$ does not have any edge dominator in $G^{R}(v)$. Thus, there are also two edge-disjoint paths from $w$ to $v$ in $G$.\\
``$\Rightarrow$'' Immediate from definition. 
\end{proof}
We have seen that $U(v)$ can be computed in linear time. Therefore, $U(v)\cap U^{R}(v)$ can be computed in linear time.
\section{The relation between $2$-directed blocks, $2$-strong blocks and $2$-edge blocks} \label{def:secanorithAlgorithmfor2db}
In this section we consider the relation between $2$-directed blocks, $2$-strong blocks and $2$-edge blocks.
\begin{lemma}\label{def:LemmaForSeconedAlgorFor2Edgeblocks}
Let $G=(V,E)$ be a strongly connected graph and let $x,y$ be distinct vertices in $G$. Then $x \overset{2}{\leftrightsquigarrow } y$ if and only if $x\overset{2s}{\leftrightsquigarrow} y$ and $x\overset{2e}{\leftrightsquigarrow} y$.
\end{lemma}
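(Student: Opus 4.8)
The plan is to route both directions through Lemma \ref{def:lem1}, which characterises $x \overset{2}{\leftrightsquigarrow} y$ as the conjunction of a \emph{vertex} condition (for every $w\in V\setminus\{x,y\}$, the vertices $x,y$ share an SCC of $G\setminus\{w\}$) and a single \emph{edge} condition ($x,y$ share an SCC of $G\setminus\{(x,y),(y,x)\}$). The point is that the vertex condition is literally the definition of $x\overset{2s}{\leftrightsquigarrow} y$, while the edge condition is almost, but not quite, captured by $x\overset{2e}{\leftrightsquigarrow} y$.

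For the forward direction I would argue that $x\overset{2}{\leftrightsquigarrow} y$ immediately yields both partners. The vertex condition supplied by Lemma \ref{def:lem1} is exactly $x\overset{2s}{\leftrightsquigarrow} y$. For the edge part, Lemma \ref{def:xyInSameSccWithoutE} says that $x\overset{2}{\leftrightsquigarrow} y$ keeps $x,y$ in one SCC after deleting \emph{any} single edge, and by Lemma \ref{def:2eblemma1} this is precisely $x\overset{2e}{\leftrightsquigarrow} y$. So this direction is pure bookkeeping.

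The backward direction is where the real work sits. Assuming $x\overset{2s}{\leftrightsquigarrow} y$ and $x\overset{2e}{\leftrightsquigarrow} y$, the vertex condition of Lemma \ref{def:lem1} is handed to us directly by $x\overset{2s}{\leftrightsquigarrow} y$. The obstacle is the edge condition: Lemma \ref{def:2eblemma1} only guarantees that $x,y$ stay in one SCC after removing the single edge $(x,y)$, or after removing the single edge $(y,x)$, whereas Lemma \ref{def:lem1} needs them to survive the removal of \emph{both} edges at once. To bridge this gap I would appeal directly to the two edge-disjoint paths from $x$ to $y$ guaranteed by $x\overset{2e}{\leftrightsquigarrow} y$, which may be taken simple. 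Since $(x,y)$ is a single edge, at most one of these two paths uses it, so some simple path $q$ from $x$ to $y$ avoids $(x,y)$; being a simple $x$-to-$y$ path, $q$ visits $y$ only as its last vertex and therefore cannot traverse $(y,x)$ either, so $q$ is a path in $G\setminus\{(x,y),(y,x)\}$. The symmetric argument applied to the two edge-disjoint $y$-to-$x$ paths produces a path from $y$ to $x$ in the same subgraph, whence $x,y$ lie in one SCC of $G\setminus\{(x,y),(y,x)\}$. With both conditions of Lemma \ref{def:lem1} verified, it follows that $x\overset{2}{\leftrightsquigarrow} y$.

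I expect the main obstacle to be exactly this subtlety: resisting the temptation to quote Lemma \ref{def:2eblemma1} for the two-edge deletion, and instead extracting the required path from the edge-disjoint paths while observing that a simple $x$-to-$y$ path can never use the reverse edge $(y,x)$. Everything else is a direct translation between the three relations via the three earlier lemmas.
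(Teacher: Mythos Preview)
Your proof is correct and follows essentially the same route as the paper: both directions are routed through Lemma~\ref{def:lem1}, with the forward direction citing Lemma~\ref{def:xyInSameSccWithoutE} (together with Lemma~\ref{def:2eblemma1}) for the edge relation and the definition for the strong relation, and the backward direction reducing to showing that $x,y$ share an SCC of $G\setminus\{(x,y),(y,x)\}$. The only difference is cosmetic: the paper obtains the $(x,y)$-avoiding $x\to y$ path by quoting Lemma~\ref{def:2eblemma1} for the single-edge deletion $G\setminus\{(x,y)\}$, whereas you extract it directly from the two edge-disjoint $x\to y$ paths; and you make explicit the observation (left implicit in the paper) that a simple $x\to y$ path cannot traverse $(y,x)$, which is exactly what closes the two-edge gap.
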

\begin{proof}
``$\Leftarrow$'': By Lemma \ref{def:2eblemma1}, for each edge $e\in E$ the vertices $x,y$ lie in the same SCC of $G\setminus \lbrace e\rbrace$ since $x\overset{2e}{\leftrightsquigarrow} y$. Because the vertices $x,y$ lie in the same SCC of $G\setminus \lbrace (x,y)\rbrace$, there exist a path from $x$ to $y$ in $G\setminus \lbrace (x,y)\rbrace$. There is also a path from to $y$ to $x$ in $G\setminus \lbrace (y,x)\rbrace$ since $x,y$ lie in the same SCC of $G\setminus \lbrace (y,x)\rbrace$. As a consequence, the vertices $x,y$ lie in the same SCC of $G\setminus \lbrace (x,y),(y,x)\rbrace$. By Lemma \ref{def:lem1}, we have $x \overset{2}{\leftrightsquigarrow } y$.\\
``$\Rightarrow$'': This direction follows from Lemma \ref{def:Each2dbIsIn2sb} and Lemma \ref{def:xyInSameSccWithoutE}.
\end{proof}
Now we describe our second algorithm for computing all the $2$-directed blocks of a strongly connected graph $G$. First, we execute lines $1$--$11$ of Algorithm \ref{algo:algor2forall2strongblocks}. Next, we execute lines $6$--$9$ of Algorithm \ref{algo:algor2forall2edgeblocks}. Finally, we execute lines $12$--$16$ of Algorithm \ref{algo:algor2forall2strongblocks}. The correctness of our algorithm follows from Lemma \ref{def:LemmaForSeconedAlgorFor2Edgeblocks}.

\begin{theorem}\label{def:RunTimealgor2forall2directedblocks}
The $2$-directed blocks of a directed graph $G$ can be computed in $O( (t_{sap}+t_{sb})n^{2})$ time.
\end{theorem}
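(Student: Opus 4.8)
The plan is to verify that the three-phase algorithm described above is correct and then to add up the cost of its phases, all of which operate on a single shared matrix $A$. First I would argue that after the first two phases the surviving $1$-entries of $A$ encode exactly the relation $\overset{2}{\leftrightsquigarrow}$. Lines $1$--$11$ of Algorithm \ref{algo:algor2forall2strongblocks} initialize $A$ with $1$s and, scanning over the SAPs only, zero out $A[v,w]$ (and $A[w,v]$) whenever removing a SAP separates $v$ and $w$; by Lemma \ref{def:Relationbetween2sAndSAPs} this leaves $A[v,w]=1$ precisely when $v \overset{2s}{\leftrightsquigarrow} w$. Crucially, the matrix is \emph{not} reinitialized: lines $6$--$9$ of Algorithm \ref{algo:algor2forall2edgeblocks} continue to zero out entries whenever a strong bridge separates the pair, so by Lemma \ref{def:Relationbetween2ebAndSbs} the entries that remain $1$ are exactly the pairs that stay in the same SCC under every SAP removal and every strong-bridge removal, i.e.\ $v \overset{2s}{\leftrightsquigarrow} w$ and $v \overset{2e}{\leftrightsquigarrow} w$. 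By Lemma \ref{def:LemmaForSeconedAlgorFor2Edgeblocks} this is equivalent to $v \overset{2}{\leftrightsquigarrow} w$, so the auxiliary graph $G^{*}$ built in lines $12$--$15$ has exactly the edges $(v,w)$ with $v \overset{2}{\leftrightsquigarrow} w$, and its blocks of size $>1$ are the $2$-directed blocks by the argument of Lemma \ref{def:Algo1IsCorrect}.

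For the running time I would bound each phase separately. The SAPs are found in linear time, and for each of the $t_{sap}$ SAPs one SCC computation costs $O(n+m)$ while the update over the ordered pairs costs $O(n^2)$; by Theorem \ref{def:RunTimealgor2forall2strongblocks} the first phase therefore runs in $O(t_{sap}n^2)$ time. By the same accounting applied to the $t_{sb}$ strong bridges (Theorem \ref{def:RunTimeAlgor2ForAll2EdgeBlocks}), the second phase runs in $O(t_{sb}n^2)$ time. Constructing $G^{*}$ and extracting its nontrivial blocks with Tarjan's algorithm adds only $O(n^2)$. Summing gives $O(t_{sap}n^2 + t_{sb}n^2 + n^2) = O((t_{sap}+t_{sb})n^2)$, as claimed.

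The step that needs the most care is not the arithmetic but the bookkeeping around the shared matrix: I must make sure that chaining the two deletion passes onto one matrix really produces the conjunction of the two relations, rather than overwriting the $2$-strong information gathered in the first pass. Since both passes only ever write $0$ and never reset an entry back to $1$, a $1$ survives if and only if it survives both passes, so the final matrix is the entrywise conjunction of the two matrices that the standalone algorithms would produce---which is precisely what Lemma \ref{def:LemmaForSeconedAlgorFor2Edgeblocks} requires. Once this is confirmed, the time bound follows directly from the per-phase bounds already established, and the running time of reusing one matrix is no larger than the sum of the two phases.
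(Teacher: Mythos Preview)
Your proposal is correct and follows the same approach as the paper: the paper's proof simply says the bound follows from Theorem~\ref{def:RunTimealgor2forall2strongblocks} and Theorem~\ref{def:RunTimeAlgor2ForAll2EdgeBlocks}, with correctness having been dispatched via Lemma~\ref{def:LemmaForSeconedAlgorFor2Edgeblocks} in the text preceding the theorem. You supply considerably more detail---in particular the observation that both passes only clear entries, so the final matrix is the entrywise conjunction---but this is exactly the reasoning implicit in the paper's one-line proof.
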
 
\begin{proof}
This follows from Theorem \ref{def:RunTimealgor2forall2strongblocks} and Theorem \ref{def:RunTimeAlgor2ForAll2EdgeBlocks}.
\end{proof}
\begin{corollary}\label{def:RunningTimerall2directedblocks}
The $2$-directed blocks of a directed graph $G=(V,E)$ can be computed in $O(\min\lbrace m, (t_{sap}+t_{sb})n\rbrace n)$ time.
\end{corollary}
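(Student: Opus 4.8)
The plan is to combine the two algorithms we already have for the $2$-directed blocks and, for each input, run whichever of them is faster. By Theorem \ref{def:algo2dbrunningtime}, Algorithm \ref{algo:algorforall2dblocks} computes the $2$-directed blocks in $O(nm)$ time, and by Theorem \ref{def:RunTimealgor2forall2directedblocks} the second algorithm of this section does so in $O((t_{sap}+t_{sb})n^{2})$ time. Since
\[
O(\min\lbrace m,(t_{sap}+t_{sb})n\rbrace\, n)=O(\min\lbrace nm,(t_{sap}+t_{sb})n^{2}\rbrace),
\]
it suffices to decide, for the given graph, which of the two running times is the smaller and to invoke the corresponding algorithm.

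First I would compute the strong articulation points and the strong bridges of $G$, which by the algorithms of Italiano et al. \cite{ILS12} takes linear time $O(n+m)$; this also yields the exact values of $t_{sap}$ and $t_{sb}$. Next I would compare $(t_{sap}+t_{sb})n$ with $m$: if $m\leq (t_{sap}+t_{sb})n$, run Algorithm \ref{algo:algorforall2dblocks}, at cost $O(nm)$; otherwise run the second algorithm, at cost $O((t_{sap}+t_{sb})n^{2})$. In either branch the running time of the chosen algorithm is $O(\min\lbrace nm,(t_{sap}+t_{sb})n^{2}\rbrace)$, which is the claimed bound. Correctness of the output is inherited directly from the two underlying algorithms, so no new correctness argument is needed.

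The only point requiring care is that the $O(n+m)$ preprocessing (the $2$-vertex-connectivity test together with computing the strong articulation points and strong bridges) must be absorbed into the claimed bound. If $G$ is $2$-vertex-connected, both algorithms simply output $V$ and the bound holds trivially; otherwise $t_{sap}\geq 1$ (for $n\geq 3$; the cases $n\leq 2$ are $O(1)$), and hence $(t_{sap}+t_{sb})n^{2}\geq n^{2}>m$ because $G$ is simple, while $nm\geq m$ and $nm\geq n$ hold trivially. Therefore $O(n+m)$ is dominated by $O(\min\lbrace nm,(t_{sap}+t_{sb})n^{2}\rbrace)$ in both branches, so the total cost is $O(\min\lbrace m,(t_{sap}+t_{sb})n\rbrace\, n)$. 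I expect this accounting of the preprocessing overhead to be the main (and essentially the only) subtlety in the argument.
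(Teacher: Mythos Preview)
Your proposal is correct and is exactly the intended argument: the corollary in the paper is stated without proof, being an immediate consequence of Theorem~\ref{def:algo2dbrunningtime} and Theorem~\ref{def:RunTimealgor2forall2directedblocks}, and you have simply spelled out the routine ``run whichever algorithm is faster'' combination together with the observation that the linear-time preprocessing is absorbed by the stated bound.
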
 
\begin{theorem}\label{def:spacerequirements}
All algorithms in Sections \ref{def:c2dbofdg}, \ref{def:secC2sb}, \ref{def:sec2eb} and \ref{def:secanorithAlgorithmfor2db} require $O(n^{2})$ space.
\end{theorem}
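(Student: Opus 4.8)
The plan is to establish the $O(n^2)$ space bound by inspecting each algorithm and verifying that no data structure it maintains exceeds quadratic size. The central object common to all the algorithms is the $n \times n$ matrix $A$ (introduced in line~4 of Algorithm~\ref{algo:algorforall2dblocks} and reused in Algorithms~\ref{algo:algor1forall2strongblocks}, \ref{algo:algor2forall2strongblocks}, \ref{algo:Algor1ForAll2EdgeBlocks}, and~\ref{algo:algor2forall2edgeblocks}), which by construction occupies exactly $n^2$ entries and hence $O(n^2)$ space. Since the second algorithm for $2$-directed blocks in Section~\ref{def:secanorithAlgorithmfor2db} is assembled entirely from lines of Algorithms~\ref{algo:algor2forall2strongblocks} and~\ref{algo:algor2forall2edgeblocks}, it inherits the same matrix and no additional superquadratic storage.

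First I would argue that every auxiliary graph constructed during the computations fits within $O(n^2)$ space. The undirected graph $G^{*}=(V,E^{*})$ has $n$ vertices and at most $\binom{n}{2}=O(n^2)$ edges, so its adjacency representation is $O(n^2)$. The modified flowgraph $G'(v)=(V',E',v)$ used in Procedure~\ref{algo:proccheck2vdps} and in the edge-dominator computations of Algorithm~\ref{algo:Algor1ForAll2EdgeBlocks} satisfies $|V'| < 2n$ and $|E'| < m+n = O(n^2)$ (as already noted in the proof of Theorem~\ref{def:algo2dbrunningtime}), so it too is linear in $m$ and therefore $O(n^2)$ in the worst case.

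Next I would check the transient structures built inside the main loops. The dominator trees $DT(v)$, $DT'(v)$ (and their reverse-graph analogues) each have at most $2n$ nodes and are discarded before the next iteration begins, so they contribute only $O(n)$ space at any moment rather than accumulating across the $n$ iterations. Likewise, the SCC decompositions of $G\setminus\{s\}$ (Algorithm~\ref{algo:algor2forall2strongblocks}) and $G\setminus\{e\}$ (Algorithm~\ref{algo:algor2forall2edgeblocks}) are computed one at a time and require only $O(n+m)=O(n^2)$ space, again without summing over the outer loop. Tarjan's block-finding and the chordal-graph clique extraction run on $G^{*}$ within $O(n^2)$ space.

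The only subtlety I anticipate—and the step I would treat most carefully—is making precise that the per-iteration structures are reused rather than retained, so that the space is bounded by the maximum over iterations and not by the sum. Once that observation is stated, the bound is immediate: every component is either a fixed $O(n^2)$ structure ($A$, $G^{*}$) or a transient $O(n+m)=O(n^2)$ structure that is released each iteration. Taking the maximum over all these contributions yields the claimed $O(n^2)$ space for all algorithms in Sections~\ref{def:c2dbofdg}, \ref{def:secC2sb}, \ref{def:sec2eb}, and~\ref{def:secanorithAlgorithmfor2db}.
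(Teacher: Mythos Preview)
Your proposal is correct and follows the same approach as the paper: both identify the matrix $A$ and the auxiliary graph $G^{*}$ as the dominant $O(n^2)$ structures. Your treatment is considerably more thorough---the paper's own proof is a single sentence (``Clearly, all these algorithms need $O(n^{2})$ space to store the matrix $A$ and the auxiliary graph $G^{*}$'')---so your explicit verification that the transient structures (dominator trees, SCC decompositions, $G'(v)$) are reused rather than accumulated is a welcome addition that the paper leaves implicit.
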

\begin{proof}
Clearly, all these algorithms need $O(n^{2})$ space to store the matrix $A$ and the auxiliary graph $G^{*}$.
\end{proof}
\section{The $2$-directed blocks that contain a certain vertex}
Let $G=(V,E)$ be a strongly connected graph and let $v$ be a vertex in $G$. In this section we present an algorithm for computing the $2$-directed blocks of $G$ that contain $v$ in $O(t^{*}m)$ time, where $t^{*}$ is the number of these blocks. This algorithm is based on our Lemmas \ref{def:AllVerticesOfCycleAreIn2directedBlock} and \ref{def:lem2}. It offers two advantages, First, it does not need to construct the auxiliary graph $G^{*}$. Second, it runs in linear time when $v$ is contained in only one $2$-directed block. By $N(v)$ we denote the set of all vertices $w\in V\setminus\lbrace v\rbrace$ such that $v \overset{2}{\leftrightsquigarrow } w$. One can compute $N(v)$ by using Procedure \ref{algo:ProcedureForComputingN} in linear time.
\begin{figure}[htbp]
\begin{Procedure}\label{algo:ProcedureForComputingN}\rm\quad\\[-5ex]
\begin{tabbing}
\quad\quad\=\quad\=\quad\=\quad\=\quad\=\quad\=\quad\=\quad\=\quad\=\kill
\textbf{Input:} A strongly connected graph $G=(V,E)$ and vertex $v\in V$.\\
\textbf{Output:} $N(v)$.\\
{\small 1}\> $N_{1}(v) \leftarrow \emptyset$, $N_{2}(v) \leftarrow \emptyset$, $N(v) \leftarrow \emptyset$.\\
{\small 2}\> $E'\leftarrow E$. \\
{\small 3}\> $V'\leftarrow V$. \\
{\small 4}\> \textbf{for} each edge $e=(v,w)\in E$ \textbf{do}\\
{\small 5}\>\> $E'\leftarrow E'\setminus \lbrace (v,w)\rbrace$.\\
{\small 6}\>\> $V'\leftarrow V'\cup \lbrace u_e\rbrace$.\\
{\small 7}\>\> $E'\leftarrow E' \cup \lbrace (v,u_e),(u_e,w) \rbrace$.\\
{\small 8}\> Compute the dominator tree $DT'(v)$ of the flowgraph $G'(v)=(V',E',v)$.\\
{\small 9}\> \textbf{for} each direct successor $w$ of $v$ in $DT'(v)$ \textbf{do}\\ 
{\small 10}\>\> \textbf{if} $w\in V$ \textbf{then}\\
{\small 11}\>\>\> $N_{1}(v)\leftarrow N_{1}(v) \cup \lbrace w\rbrace$.\\
{\small 12}\> Compute the dominator tree $DT'^{R}(v)$ of $G'^{R}(v)=(V',E'^{R},v)$.\\
{\small 13}\> \textbf{for} each direct successor $w$ of $v$ in $DT'^{R}(v)$ \textbf{do}\\ 
{\small 14}\>\> \textbf{if} $w\in V$ \textbf{then}\\
{\small 15}\>\>\> $N_{2}(v)\leftarrow N_{2}(v) \cup \lbrace w\rbrace$.\\
{\small 16}\>$N(v)\leftarrow N_{1}(v)\cap N_{2}(v)$.
\end{tabbing}
\end{Procedure}
\end{figure} 

The correctness of Procedure \ref{algo:ProcedureForComputingN} follows from Lemma \ref{def:CorrectnessOfproccheck2vdps} and the fact that $w \overset{2}{\rightsquigarrow } v$ in $G$ if and only if $v \overset{2}{\rightsquigarrow } w$ in $G^{R}$. Algorithm \ref{algo:AlgorForAll2dBlocksThatConatainv} shows our algorithm.
\begin{figure}[htbp]
\begin{myalgorithm}\label{algo:AlgorForAll2dBlocksThatConatainv}\rm\quad\\[-5ex]
\begin{tabbing}
\quad\quad\=\quad\=\quad\=\quad\=\quad\=\quad\=\quad\=\quad\=\quad\=\kill
\textbf{Input:} A strongly connected graph $G=(V,E)$ and vertex $v\in V$.\\
\textbf{Output:} The $2$-directed blocks of $G$ that contain $v$.\\
{\small 1}\> \textbf{if} $G$ is $2$-vertex-connected \textbf{then}\\
{\small 2}\>\> Output $V$.\\
{\small 3}\> \textbf{else}\\
{\small 4}\>\> $R\leftarrow N(v)$.\\
{\small 5}\>\> \textbf{while} $R$ is not empty \textbf{do}\\
{\small 6}\>\>\> Choose arbitrarily a vertex $w \in R$.\\
{\small 7}\>\>\> \textbf{output} $(R\cap N(w) )\cup \lbrace v,w\rbrace$.\\
{\small 8}\>\>\>  $R\leftarrow R\setminus ((R\cap N(w))\cup \lbrace w\rbrace)$.
\end{tabbing}
\end{myalgorithm}
\end{figure}

\begin{lemma}\label{def:AlgorForAll2dBlocksThatConatainvIsCorrect}
Algorithm \ref{algo:AlgorForAll2dBlocksThatConatainv} calculates the $2$-directed blocks that include $v$.
\end{lemma}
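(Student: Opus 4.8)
The plan is to show that Algorithm \ref{algo:AlgorForAll2dBlocksThatConatainv} outputs exactly the collection of $2$-directed blocks containing $v$, by verifying two things about each iteration of the \textbf{while} loop: first, that the set $(R\cap N(w))\cup\lbrace v,w\rbrace$ emitted in line $7$ is precisely one $2$-directed block containing $v$; and second, that the bookkeeping in line $8$ guarantees each such block is output exactly once and that no block is missed. The key structural facts I would lean on are Lemma \ref{def:lem2} (distinct $2$-directed blocks share at most one vertex) and Lemma \ref{def:AllVerticesOfCycleAreIn2directedBlock} (the ``no cycles'' / transitivity-along-a-chain property).

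\emph{First} I would observe that $N(v)$ is exactly the set of all vertices $w\neq v$ with $v\overset{2}{\leftrightsquigarrow}w$, so $N(v)$ is the union over all $2$-directed blocks $C$ containing $v$ of $C\setminus\lbrace v\rbrace$. By Lemma \ref{def:lem2}, any two such blocks meet only in $v$, so these sets $C\setminus\lbrace v\rbrace$ partition $N(v)$. \emph{Next}, fix an iteration with chosen vertex $w\in R$, and let $C_w$ be the (unique) $2$-directed block containing both $v$ and $w$. I claim $C_w\setminus\lbrace v,w\rbrace = N(v)\cap N(w)$: for any $z$ in this block we have $z\overset{2}{\leftrightsquigarrow}v$ and $z\overset{2}{\leftrightsquigarrow}w$, giving $z\in N(v)\cap N(w)$; conversely, if $z\in N(v)\cap N(w)$ then $v\overset{2}{\leftrightsquigarrow}z$, $z\overset{2}{\leftrightsquigarrow}w$, and $w\overset{2}{\leftrightsquigarrow}v$ form exactly the chain hypothesis of Lemma \ref{def:AllVerticesOfCycleAreIn2directedBlock}, so $v,z,w$ lie in a common $2$-directed block, which must be $C_w$ by Lemma \ref{def:lem2}. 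Because $R\subseteq N(v)$ throughout, intersecting with $R$ does not discard any element of $C_w\setminus\lbrace v,w\rbrace$ that is still available, and thus line $7$ outputs $C_w$ exactly once it becomes the chosen block.

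\emph{The main obstacle} is making the induction on the \textbf{while} loop clean: I must maintain the invariant that at the start of each iteration $R$ equals the union of $C\setminus\lbrace v\rbrace$ over precisely those blocks $C\ni v$ not yet output. Granting the invariant, line $8$ removes exactly $(C_w\setminus\lbrace v\rbrace)$ from $R$ — since $(R\cap N(w))\cup\lbrace w\rbrace = C_w\setminus\lbrace v\rbrace$ under the invariant — so the invariant is restored with $C_w$ now accounted for. The subtle point to check is that removing $C_w\setminus\lbrace v\rbrace$ does not accidentally delete vertices belonging to other not-yet-output blocks; this is exactly where Lemma \ref{def:lem2} is essential, since distinct blocks containing $v$ meet only in $v\notin R$, so their vertex sets in $R$ are disjoint. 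The loop terminates because $|R|$ strictly decreases each iteration ($w$ itself is removed), and at termination the invariant with $R=\emptyset$ says every block containing $v$ has been output, completing the correctness proof.
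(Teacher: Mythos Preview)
Your proposal is correct and follows essentially the same route as the paper: both arguments establish that for any $w\in N(v)$ the identity $C_w=(N(v)\cap N(w))\cup\{v,w\}$ holds, using Lemma~\ref{def:AllVerticesOfCycleAreIn2directedBlock} for one inclusion and the definition of a $2$-directed block for the other, together with Lemma~\ref{def:lem2} to see that the sets $C\setminus\{v\}$ partition $N(v)$. The paper stops once this identity is established and leaves the loop bookkeeping implicit; your added invariant (that $R$ is always the disjoint union of $C\setminus\{v\}$ over the not-yet-output blocks) makes explicit why $(R\cap N(w))\cup\{v,w\}$ equals $(N(v)\cap N(w))\cup\{v,w\}$ in every iteration, which is a welcome clarification rather than a different approach.
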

\begin{proof}
Let $C^{2d}_{1},C^{2d}_{2},\ldots,C^{2d}_{t}$ be the $2$-directed blocks which contain $v$. By Lemma \ref{def:lem2}, these blocks include only the vertex $v$ in common. Thus, $C^{2d}_{1}\setminus\lbrace v\rbrace,C^{2d}_{2}\setminus\lbrace v\rbrace,\ldots,C^{2d}_{t}\setminus\lbrace v\rbrace$ are disjoint. Obviously, $\bigcup _{1\leq i \leq t} (C^{2d}_{2})\setminus\lbrace v\rbrace \subseteq N(v)$. Let $w$ be a vertex in $N(v)$ and Let $C^{2d}$ be the $2$-directed block of $G$ such that $v,w \in  C^{2d}$. It is sufficient to show that $C^{2d}=(N(w) \cap N(v)) \cup\lbrace v,w\rbrace $. Let $x$ be a vertex in $N(w) \cap N(v)$. Since $v \overset{2}{\leftrightsquigarrow } w$, $w \overset{2}{\leftrightsquigarrow } x$ and $x \overset{2}{\leftrightsquigarrow } v$, by Lemma \ref{def:AllVerticesOfCycleAreIn2directedBlock}, the vertices $x,v,w$ lie in the same $2$-directed block of $G$. Conversely, let $x$ be a vertex in $C^{2d}\setminus\lbrace v,w\rbrace$. Since $v \overset{2}{\leftrightsquigarrow } x$ and $w \overset{2}{\leftrightsquigarrow } x$, we have $x\in N(v)$ and $x \in N(w)$.    
\end{proof}
\begin{theorem}\label{def:RunTimeAlgorForAll2dBlocksThatConatainv}
Algorithm \ref{algo:AlgorForAll2dBlocksThatConatainv} runs in $O(t^{*}m)$, where $t^{*}$ is the number of the $2$-directed blocks that contain $v$.
\end{theorem}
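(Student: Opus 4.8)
The plan is to bound separately the one-time preprocessing cost and the per-iteration cost of the while loop, and then to argue that the loop executes exactly $t^{*}$ times. First I would observe that the $2$-vertex-connectivity test in line $1$ takes linear time by the algorithm of Italiano et al.~\cite{ILS12}, and that the initial assignment $R \leftarrow N(v)$ in line $4$ costs $O(m)$, because Procedure~\ref{algo:ProcedureForComputingN} computes $N(v)$ in linear time: it builds two modified flowgraphs with only $O(n)$ extra vertices and edges and computes their dominator trees, each in linear time \cite{BGKRTW00,AHLT99}. In the $2$-vertex-connected case the algorithm halts after line $2$ with $t^{*}=1$, so the bound $O(t^{*}m)$ holds trivially, and it remains to analyse the else-branch.

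Next I would analyse a single iteration of the while loop. Choosing $w$ and performing the set operations $R\cap N(w)$ and $R\setminus((R\cap N(w))\cup\lbrace w\rbrace)$ costs $O(n)$, while the dominant cost is the recomputation of $N(w)$ via Procedure~\ref{algo:ProcedureForComputingN}, which is again $O(m)$. Hence each iteration of lines $6$--$8$ runs in $O(m)$ time.

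The crux is to show that the loop body executes exactly $t^{*}$ times. I would use Lemma~\ref{def:lem2} together with Lemma~\ref{def:AllVerticesOfCycleAreIn2directedBlock}: the $2$-directed blocks $C^{2d}_1,\ldots,C^{2d}_{t^{*}}$ containing $v$ pairwise intersect only in $v$, so the sets $C^{2d}_i\setminus\lbrace v\rbrace$ are disjoint and partition $N(v)$. By the correctness argument of Lemma~\ref{def:AlgorForAll2dBlocksThatConatainvIsCorrect}, when a vertex $w\in R$ is chosen the output $(R\cap N(w))\cup\lbrace v,w\rbrace$ equals the unique block $C^{2d}$ containing $v$ and $w$; the point I must verify is that $R\cap N(w)=N(v)\cap N(w)$ at that moment, i.e.\ that no vertex already deleted from $R$ lies in $N(v)\cap N(w)$. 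This follows because any such vertex $x$ would satisfy $x\overset{2}{\leftrightsquigarrow}v$ and $x\overset{2}{\leftrightsquigarrow}w$, placing $x$ in $C^{2d}$ by Lemma~\ref{def:AllVerticesOfCycleAreIn2directedBlock}, whereas $x$ was removed as part of a previously output block, contradicting disjointness of the sets $C^{2d}_i\setminus\lbrace v\rbrace$. Consequently each iteration removes exactly one block (minus $v$) from $R$ and never revisits it, so the loop runs $t^{*}$ times.

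Combining the pieces, the total running time is $O(m)+t^{*}\cdot O(m)=O(t^{*}m)$, as claimed. The main obstacle is the middle step: establishing that the greedy deletion in line $8$ never discards a vertex belonging to a not-yet-output block, which is precisely what guarantees the iteration count is $t^{*}$ rather than larger; the remaining cost accounting is routine.
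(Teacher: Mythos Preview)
Your proposal is correct and follows the same approach as the paper: bound the cost of computing each $N(\cdot)$ by $O(m)$ and argue that the while loop executes exactly $t^{*}$ times. The paper's proof simply asserts the iteration count without justification; your inductive verification that $R\cap N(w)=N(v)\cap N(w)$ at every step (via Lemmas~\ref{def:lem2} and~\ref{def:AllVerticesOfCycleAreIn2directedBlock}) fills in a point the paper leaves implicit in both this theorem and in the correctness Lemma~\ref{def:AlgorForAll2dBlocksThatConatainvIsCorrect}.
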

\begin{proof}
We have seen that $N(v)$ can be computed in linear time. Furthermore, the number of iterations of the while-loop in lines $5$--$8$ is $t^{*}$. The total time is thus $O(t^{*}m)$. 
\end{proof}
\section{Approximation algorithm for the MS-SAPs Problem}
In this section we show that there is a $17/3$ approximation algorithm for the MS-SAPs problem. In \cite{G11}, Georgiadis presented a linear time $3$-approximation algorithm for the problem of finding a minimum-cardinality $2$-vertex connected spanning subgraph ($2$VCSS) of $2$-vertex-connected directed graphs. This algorithm is based on the works \cite{G10,GT05,ILS12}. We slightly modify this algorithm and combine it with the algorithm of Zhao et al. \cite{ZNI03} in order to obtain a $17/3$ approximation algorithm for the MS-SAPs problem. We first briefly describe Georgiadis algorithm \cite{G11}. Let $G=(V,E)$ be a $2$-vertex-connected directed graph and let $v$ be a vertex in $G$. Menger's Theorem for vertex connectivity \cite{BR12} implies that the flowgraph $G(v)$ has no non-trivial dominators.
In \cite{GT05}, Georgiadis and Tarjan proved that there exist two independent spanning trees of $G(v)$.  Algorithm \ref{algo:Georgiadisalgorihtm} shows the algorithm of Georgiadis \cite{G11}.
\begin{figure}[htbp]
\begin{myalgorithm}\label{algo:Georgiadisalgorihtm}\rm{(from \cite{G11})}\rm\quad\\[-5ex]
\begin{tabbing}
\quad\quad\=\quad\=\quad\=\quad\=\quad\=\quad\=\quad\=\quad\=\quad\=\kill
\textbf{Input:} A $2$-vertex-connected directed graph $G=(V,E)$.\\
\textbf{Output:} A $2$-vertex-connected spanning subgraph $G^{*}$ of $G$.\\
{\small 1}\> Choose arbitrarily a vertex $v \in V$.\\
{\small 2}\> Compute two independent spanning trees $T_{1},T_{2}$ of $G(v)$.\\
{\small 3}\> Compute two independent spanning trees $T_{3},T_{4}$ of $G^{R}(v)$.\\
{\small 4}\> Construct a strongly connected spanning subgraph (SCSS)\\
 {\small 6}\>\> $G^{'}= (V\setminus\lbrace v\rbrace , E^{'})$ of $G\setminus\lbrace v\rbrace$ with $|E^{'}|\leq 2(n-2)$.\\
{\small 7}\>$E^{*}\leftarrow T_{1}\cup T_{2}\cup T_{3}^{R}\cup T_{4}^{R}\cup E^{'}$.\\
{\small 8}\> Output $G^{*}=(V,E^{*})$.
\end{tabbing}
\end{myalgorithm}
\end{figure}

By \cite[Lemma $2$]{G11}, the flowgraphs $(V,T_{1} \cup T_{2},v)$ and $(V,T_{3}\cup T_{4},v)$ have only trivial dominators. Let $w$ be a vertex in $G\setminus\lbrace v\rbrace$. As is well known, it is easy to calculate a SCSS $G^{'}= (V\setminus\lbrace v\rbrace , E^{'})$ of $G\setminus\lbrace v\rbrace$ with $|E^{'}|\leq 2(n-2)$. Just take the union of outgoing branching rooted at $w$ and incoming branching rooted at $w$ (\cite{FJ81,KRY94}). Since $G^{*}\setminus\lbrace v\rbrace$ is strongly connected, the vertex $v$ is not a SAP in $G^{*}$. Therefore, by \cite[Theorem $5.2$]{ILS12} the directed graph $G^{*}$ has no SAPs.
\begin{theorem} \label{def:ApproximatioRatioOfGeorgiadisAlgorithm}
\rm \cite{G11} Algorithm \ref{algo:Georgiadisalgorihtm} has an approximation ratio of $3$ and runs in linear time.
\end{theorem}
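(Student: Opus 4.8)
The plan is to treat the two assertions separately: bound the approximation ratio by a counting argument, and verify that each line of the algorithm runs in linear time. Note that correctness of the output—that $G^{*}$ is actually a $2$-vertex-connected spanning subgraph—has already been settled in the discussion preceding the theorem (the point being that $G^{*}\setminus\lbrace v\rbrace$ is strongly connected, so $v$ is not a SAP of $G^{*}$, whence \cite[Theorem $5.2$]{ILS12} gives that $G^{*}$ has no SAPs at all). So I need only account for the number of edges in $E^{*}$ and for the time.

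For the ratio, I would first fix a lower bound on the optimum. Let $G_{\mathrm{opt}}=(V,E_{\mathrm{opt}})$ be a minimum-cardinality $2$-vertex-connected spanning subgraph of $G$. In any $2$-vertex-connected directed graph every vertex has in-degree and out-degree at least $2$: if a vertex $u$ had out-degree $1$ with unique out-neighbour $w$ (and $w\neq u$, as the graph is simple), then in $G_{\mathrm{opt}}\setminus\lbrace w\rbrace$ the vertex $u$ would have no outgoing edge and hence could reach no other vertex, contradicting strong connectivity; the in-degree bound is symmetric. Summing out-degrees gives $|E_{\mathrm{opt}}|\geq 2n$. Next I would bound $|E^{*}|$ from above: each of the four spanning trees $T_{1},T_{2}$ of $G(v)$ and $T_{3},T_{4}$ of $G^{R}(v)$ contributes at most $n-1$ edges (reversal does not change the count), and $G'$ contributes $|E'|\leq 2(n-2)$, so $|E^{*}|\leq 4(n-1)+2(n-2)=6n-8$. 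Combining the two bounds yields
\[
\frac{|E^{*}|}{|E_{\mathrm{opt}}|}\leq\frac{6n-8}{2n}<3,
\]
which is the claimed ratio.

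For the running time I would go step by step. Choosing $v$ is trivial. By \cite{GT05}, two independent spanning trees of a flowgraph whose only dominators are trivial can be computed in linear time, and since $G$ is $2$-vertex-connected, Menger's Theorem guarantees that $G(v)$ and $G^{R}(v)$ indeed have only trivial dominators; this covers lines $2$ and $3$. The strongly connected spanning subgraph $G'$ of $G\setminus\lbrace v\rbrace$ with at most $2(n-2)$ edges is obtained as the union of an out-branching and an in-branching rooted at a fixed vertex $w$, each computable by a single graph search, so line $4$ is linear as well \cite{FJ81,KRY94}. Forming the union in line $7$ and emitting the result take $O(n+m)$ time, so the whole algorithm runs in linear time.

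The only genuinely delicate step—and the one I would justify most carefully—is the lower bound $|E_{\mathrm{opt}}|\geq 2n$, since the entire ratio rests on it; by contrast the upper bound on $|E^{*}|$ and the per-line linear-time claims are routine once the cited primitives of \cite{GT05,FJ81,KRY94} are invoked.
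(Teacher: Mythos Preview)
Your argument is correct. Note, however, that the paper itself does not supply a proof of this theorem: it is stated as a result of Georgiadis \cite{G11} and is simply cited, with the surrounding discussion only explaining why the output $G^{*}$ is $2$-vertex-connected. What you have written is essentially the argument one finds in \cite{G11}: the lower bound $|E_{\mathrm{opt}}|\geq 2n$ via the minimum in/out-degree, the trivial upper bound $|E^{*}|\leq 6n-8$, and the linear-time primitives for independent spanning trees and branchings. One small remark: in this paper the linear-time construction of two independent spanning trees is attributed to \cite{GT12} rather than \cite{GT05} (see the proof of Theorem~\ref{def:RunningTimeModifiedGeorgiadisalgorihtm}), so you may wish to align the citation.
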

Now we modify Georgiadis algorithm \cite{G11} in order to obtain an approximation algorithm for the MS-SAPs problem.
 \begin{figure}[htbp]
\begin{myalgorithm}\label{algo:ModifiedGeorgiadisalgorihtm}\rm\quad\\[-5ex]
\begin{tabbing}
\quad\quad\=\quad\=\quad\=\quad\=\quad\=\quad\=\quad\=\quad\=\quad\=\kill
\textbf{Input:} A strongly connected graph $G=(V,E)$.\\
\textbf{Output:} A SCSS of $G$ with the same SAPs.\\
{\small 1}\> \textbf{if} $G$ is $2$-vertex-connected \textbf{then}\\
{\small 2}\>\> Run Algorithm \ref{algo:Georgiadisalgorihtm} on $G$.\\
{\small 3}\> \textbf{else}\\
{\small 4}\>\> Compute the SAPs of $G$.\\
{\small 5}\>\> \textbf{If} all vertices in $V$ are SAPs \textbf{then}\\
{\small 6}\>\>\> Compute a SCSS $G^{*}=(V,E^{*})$ of $G$ using the Algorithm of    \\
{\small 7}\>\>\> Zhao et al. presented in \cite{ZNI03} and output $G^{*}$.\\
{\small 8}\>\> \textbf{else}\\
{\small 9}\>\>\> Choose a vertex $v \in V$ such that $v$ is not a SAP of $G$.\\
{\small 10}\>\>\> Compute a SCSS $G'=(V\setminus \lbrace v\rbrace,E')$ of $G\setminus\lbrace v\rbrace$ using the Algorithm\\
{\small 11}\>\>\> of Zhao et al. \cite{ZNI03}. \\
{\small 12}\>\>\> Compute two independent spanning trees $T_{1}, T_{2}$ of $G(v)$.\\
{\small 13}\>\>\> Compute two independent spanning trees $T_{3}, T_{4}$ of $G^{R}(v)$.\\
{\small 14}\>\>\>  $E^{*}\leftarrow E' \cup T_{1} \cup T_{2}\cup T_{3}^{R}\cup T_{4}^{R}$.\\
{\small 15}\>\>\> Output $G^{*}=(V,E^{*})$.
\end{tabbing} 
\end{myalgorithm}
\end{figure}

The following lemma shows that the output $G^{*}$ is a feasible solution for the MS-SAPs problem. 
\begin{Lemma} \label{def:ModifiedGeorgiadisalgorihtmIsCorrect}
The output $G^{*}$ is strongly connected, and the directed graphs $G^{*}$ and $G$ have the same SAPs.  
\end{Lemma}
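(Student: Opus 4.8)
The plan is to verify the two requirements separately: strong connectivity of $G^{*}$, and the coincidence of the SAPs. I would split the argument according to the branches of Algorithm \ref{algo:ModifiedGeorgiadisalgorihtm}, since the construction of $E^{*}$ differs in each case. If $G$ is $2$-vertex-connected (lines $1$--$2$), then $G$ has no SAPs, and the output is the result of Algorithm \ref{algo:Georgiadisalgorihtm}, which by the discussion preceding Theorem \ref{def:ApproximatioRatioOfGeorgiadisAlgorithm} is a strongly connected spanning subgraph with no SAPs; hence $G^{*}$ and $G$ trivially have the same (empty) set of SAPs. If all vertices of $V$ are SAPs (lines $5$--$7$), then $G^{*}$ is a SCSS computed by the algorithm of Zhao et al., so it is strongly connected on the full vertex set $V$; since every vertex is already a SAP of $G$, it suffices to observe that every vertex of a strongly connected graph on these vertices is a SAP as well, or to note directly that the SAP sets of $G$ and $G^{*}$ both equal $V$.

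The main case is the third branch (lines $8$--$15$), and this is where I expect the real work to lie. First I would establish strong connectivity: $G'=(V\setminus\{v\},E')$ is a SCSS of $G\setminus\{v\}$, so $G^{*}\setminus\{v\}$ is strongly connected because $E'\subseteq E^{*}$. Adding the spanning trees $T_{1},T_{2}$ of $G(v)$ and $T_{3}^{R},T_{4}^{R}$ (from $G^{R}(v)$) reconnects $v$ to the rest: $T_{1}$ gives paths from $v$ to every vertex, and $T_{3}^{R}$ gives paths from every vertex to $v$, so $G^{*}$ is strongly connected on $V$. Then I would argue that $v$ is not a SAP of $G^{*}$: since $G^{*}\setminus\{v\}$ is strongly connected, deleting $v$ does not increase the number of SCCs, and by \cite[Theorem $5.2$]{ILS12} (applied exactly as in the paragraph before Theorem \ref{def:ApproximatioRatioOfGeorgiadisAlgorithm}) it follows that $v$ is not a SAP.

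The crux is then to show that the SAP sets coincide exactly, i.e. that a vertex $u\neq v$ is a SAP of $G^{*}$ if and only if it is a SAP of $G$. One direction is monotonicity: since $E^{*}\subseteq E$, any cut vertex surviving in the sparser graph $G^{*}$ is also a SAP of $G$, so every SAP of $G^{*}$ is a SAP of $G$. The harder direction is that every SAP $u$ of $G$ remains a SAP of $G^{*}$. Here I would use the key property, guaranteed by \cite[Lemma $2$]{G11}, that the flowgraphs $(V,T_{1}\cup T_{2},v)$ and $(V,T_{3}\cup T_{4},v)$ have only trivial dominators, together with the fact that $G'$ is merely a SCSS of $G\setminus\{v\}$ and therefore preserves the connectivity structure of $G\setminus\{v\}$ only up to strong connectivity, not up to vertex connectivity. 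The delicate point is that deleting $v$ from $G$ and computing a SCSS $G'$ could in principle destroy some non-trivial dominator relation that witnessed $u$ being a SAP of $G$; so I would need to show that whenever $u$ separates two vertices in $G$, it still separates them in $G^{*}$. I expect this to reduce to a case analysis on whether the separation in $G$ was already visible in $G\setminus\{v\}$ (in which case $G'$ being a SCSS of $G\setminus\{v\}$ preserves it, since a SCSS of a graph that becomes disconnected after removing $u$ also becomes disconnected after removing $u$) or whether it essentially involved $v$; the latter subcase is the main obstacle, and I would handle it by carefully exploiting that $T_{1},T_{2},T_{3}^{R},T_{4}^{R}$ only add paths through $v$ and hence cannot create new routes that bypass $u$ among the vertices of $V\setminus\{v\}$.
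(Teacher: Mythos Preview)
Your proposal has the monotonicity direction backwards, and this inverts the whole structure of the argument. From $E^{*}\subseteq E$ it follows that if $G\setminus\{u\}$ is \emph{not} strongly connected then neither is $G^{*}\setminus\{u\}$; hence every SAP of $G$ is automatically a SAP of $G^{*}$. What does \emph{not} follow from monotonicity is the converse: a vertex $u$ that separates $G^{*}$ need not separate the denser graph $G$. So the direction you call ``harder'' (SAPs of $G$ remain SAPs of $G^{*}$) is in fact the trivial one, and your elaborate case analysis there is unnecessary; conversely, the direction you dismiss in one line is exactly where the content lies.

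For that genuinely hard direction you also invoke the wrong property of the independent spanning trees. \cite[Lemma~$2$]{G11} (``only trivial dominators'') applies when $G$ is $2$-vertex-connected; here $G$ is not, so $G(v)$ does have non-trivial dominators. The relevant fact, from \cite{GT05}, is that the flowgraphs $G(v)$ and $(V,T_{1}\cup T_{2},v)$ have the \emph{same} non-trivial dominators (and likewise on the reverse side). Combined with \cite[Theorem~$5.2$]{ILS12}, which identifies the SAP set of $G$ with $D(v)\cup D^{R}(v)$ once $v$ is not a SAP, this yields $D_{1}(v)\cup D_{1}^{R}(v)\subseteq D(v)\cup D^{R}(v)$ because $T_{1}\cup T_{2}\subseteq E^{*}$ and $T_{3}\cup T_{4}\subseteq (E^{*})^{R}$; the reverse inclusion is then exactly the (correctly oriented) monotonicity step. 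This is how the paper proceeds. Note also that $v$ not being a SAP of $G^{*}$ is needed to apply \cite[Theorem~$5.2$]{ILS12} on the $G^{*}$ side, and for that your observation that $E'\subseteq E^{*}$ makes $G^{*}\setminus\{v\}$ strongly connected is indeed the right point.
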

\begin{proof} We need only to show that this lemma is correct when $G$ is not $2$-vertex-connected. Let $G$ be a strongly connected graph which is not $2$-vertex-connected. If all the vertices in $V$ are SAPs in $G$, then, for any SCSS $G'$ of $G$, the graphs $G'$ and $G$ have the same SAPs. Otherwise, $G$ has at least one vertex $v$ which is not a SAP of $G$ (see line $9$--$14$). In this case, by \cite[Theorem $5.2$]{ILS12}, $D(v)\cup D^{R}(v)$ is the set of all SAPs of $G$. Georgiadis and Tarjan \cite{GT05} showed that there exist two independent trees of $G(v)$ and the flowgraphs $G(v)$ and $(V,T_{1}\cup T_{2},v)$ have the same non-trivial dominators. Thus, the flowgraphs $G^{R}(v)$ and $(V,T_{3}\cup T_{4},v)$ have the same non-trivial dominators. Clearly, $(V,T_1\cup T_3^{R})$ is strongly connected. Therefore, $G^{*}$ is strongly connected. Let $D_{1}(v)$ be the set of all non-trivial dominators of $G^{*}(v)$ and let $D_{1}^{R}(v)$ be the set of all non-trivial dominators of $G^{*R}(v)$, where $G^{*R}$ is the reversal graph of $G^{*}$. Since 
$T_{1}\cup T_{2}\subseteq E^{*}$, we have $D_{1}(v) \subseteq D(v)$. Moreover, $D_{1}^{R}(v)\subseteq D^{R}(v)$ because $T_{3}\cup T_{4}\subseteq E^{*}$. As a consequence, $D_{1}(v)\cup D_{1}^{R}(v)\subseteq D(v)\cup D^{R}(v)$. Assume for a contradiction that $D_{1}(v)\cup D_{1}^{R}(v) \neq D(v)\cup D^{R}(v)$. Then there is at least vertex $x\in D(v)\cup D^{R}(v)$ such that $x\notin D_{1}(v)\cup D_{1}^{R}(v)$. By \cite[Theorem $5.2$]{ILS12}, the vertex $x$ is not a SAP in $G^{*}$ but $x$ is a SAP in $G$. Thus, $G^{*}\setminus \lbrace x\rbrace$ is strongly connected.
Because $G^{*}\setminus \lbrace x\rbrace$ is a spanning subgraph of $G\setminus \lbrace x\rbrace$, the directed graph $G\setminus \lbrace x\rbrace$ is strongly connected, which contradicts that $x$ is a SAP in $G$. 
Since $v$ is not a SAP of $G^{*}$ and $D(v)\cup D^{R}(v)=D_{1}(v)\cup D_{1}^{R}(v)$, by \cite[Theorem $5.2$]{ILS12} the set of all SAPs of $G^{*}$ is $D(v)\cup D^{R}(v)$.
\end{proof}
\begin{theorem}\label{def:modifierdAlgorithHasApprimationRatio}
Algorithm \ref{algo:ModifiedGeorgiadisalgorihtm} achieves an approximation ratio of $17/3$.
\end{theorem}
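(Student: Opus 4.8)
The plan is to bound the output size $|E^{*}|$ against the optimum value $\mathrm{OPT}$ of the given MS-SAPs instance, treating the three branches of Algorithm \ref{algo:ModifiedGeorgiadisalgorihtm} separately; feasibility of the output is already guaranteed by Lemma \ref{def:ModifiedGeorgiadisalgorihtmIsCorrect}, so only the cardinality estimate remains. In the branch where $G$ is $2$-vertex-connected, every feasible solution must again be $2$-vertex-connected (it is strongly connected and shares the empty SAP-set of $G$), so $\mathrm{OPT}$ equals the optimum of the minimum $2$VCSS problem and Theorem \ref{def:ApproximatioRatioOfGeorgiadisAlgorithm} already yields a factor $3\le 17/3$. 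In the branch where every vertex of $G$ is a SAP, any SCSS of $G$ automatically has exactly the same SAPs (a vertex that is a SAP of $G$ remains one in every SCSS, and here all vertices are SAPs), so $\mathrm{OPT}$ equals the optimum of the minimum SCSS problem and the algorithm of Zhao et al. \cite{ZNI03} gives a factor $5/3\le 17/3$.

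The interesting branch is the one where $G$ is not $2$-vertex-connected but the algorithm picks a vertex $v$ that is not a SAP. Write $s$ for the size of a minimum SCSS of $G\setminus\{v\}$. The key observation for the lower bound is that if $H=(V,E_H)$ is any feasible solution, then $v$ is not a SAP of $H$ (since $H$ and $G$ share their SAPs), hence $H\setminus\{v\}$ is strongly connected and is therefore a SCSS of $G\setminus\{v\}$. Consequently the edges of $H$ with both endpoints in $V\setminus\{v\}$ already number at least $s$, so $\mathrm{OPT}=|E_H|\ge s\ge n-1$. In particular both $s\le\mathrm{OPT}$ and $n-1\le\mathrm{OPT}$.

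For the upper bound I would simply add up the pieces of $E^{*}=E'\cup T_1\cup T_2\cup T_3^{R}\cup T_4^{R}$. Each of the four independent spanning trees $T_1,T_2,T_3,T_4$ has exactly $n-1$ edges, and the SCSS $G'=(V\setminus\{v\},E')$ returned by Zhao et al. \cite{ZNI03} satisfies $|E'|\le\frac{5}{3}s$ because that algorithm is a $5/3$-approximation for minimum SCSS. Hence
\[
|E^{*}|\le |E'|+4(n-1)\le \tfrac{5}{3}s+4(n-1)\le \tfrac{5}{3}\,\mathrm{OPT}+4\,\mathrm{OPT}=\tfrac{17}{3}\,\mathrm{OPT},
\]
using the two lower bounds of the previous paragraph. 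Taking the worst of the three branches then gives the claimed ratio $17/3$.

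I expect the only genuinely delicate point to be the lower-bound step: one must argue carefully that a non-SAP vertex $v$ forces every feasible $H$ to restrict to a strongly connected graph on $V\setminus\{v\}$, which is precisely what decouples the ``$v$-free part'' (charged to $s$ and approximated by Zhao et al.) from the four spanning trees (charged to $n-1$). The arithmetic then splits $17/3=5/3+4$ exactly along these two charges, so it is essential that Zhao et al. delivers the $5/3$ factor and that the spanning-tree overhead is bounded by $4(n-1)$; any weaker bound on either piece would break the target ratio.
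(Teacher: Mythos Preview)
Your argument is correct and follows essentially the same route as the paper: both identify the crucial lower bound $\mathrm{OPT}\ge s$ via the fact that any feasible solution restricted to $V\setminus\{v\}$ is a SCSS of $G\setminus\{v\}$, then split the output as $|E'|+4(n-1)$ and charge the two summands to $\tfrac{5}{3}\,\mathrm{OPT}$ and $4\,\mathrm{OPT}$ respectively. The only cosmetic differences are that the paper uses the slightly sharper bound $\mathrm{OPT}\ge n$ (yielding $17/3-4/n$) whereas you use $\mathrm{OPT}\ge n-1$, and you treat the two easy branches more explicitly than the paper does.
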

\begin{proof} The algorithm of Zhao et al. \cite{ZNI03} has an approximation factor of $5/3$. Thus, it is enough to consider the case when $G$ is not $2$-vertex-connected and includes at least one vertex which is not a SAP.
Let $E_{opt}$ be an optimal solution for the MS-SAPs problem. Let $E'_{opt}$ be any optimal solution for the problem of finding a MSCSS of $G\setminus\lbrace v\rbrace$. Since the vertex $v$ is not a SAP in $G[E_{opt}]$, the graph $G[E_{opt}]\setminus\lbrace v\rbrace$ is strongly connected. Thus, we have $|E_{opt}|\geq |E'_{opt}|$. Consequently, by \cite[Theorem $3$]{ZNI03}, we have $|E'|/|E_{opt}|\leq E'|/|E'_{opt}|\leq 5/3$.
Since $|T_{1} \cup T_{2}\cup T_{3}^{R}\cup T_{4}^{R}| \leq 4(n-1)$ and $|E_{opt}|\geq n$, Algorithm \ref{algo:ModifiedGeorgiadisalgorihtm} has approximation ratio $|E^{*}|/|E_{opt}|\leq 4-4/n+5/3=17/3-4/n$. 
\end{proof}
\begin{theorem}\label{def:RunningTimeModifiedGeorgiadisalgorihtm}
Algorithm \ref{algo:ModifiedGeorgiadisalgorihtm} runs in linear time. 
\end{theorem}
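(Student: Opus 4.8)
The plan is to bound the running time of each line (or group of lines) of Algorithm~\ref{algo:ModifiedGeorgiadisalgorihtm} separately and to observe that every subroutine it invokes is already known to run in linear time, so that the total cost is $O(n+m)$.

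First I would dispose of the $2$-vertex-connected branch (lines $1$--$2$): testing whether $G$ is $2$-vertex-connected takes linear time by the algorithm of Italiano et al.~\cite{ILS12}, and when the test succeeds the algorithm merely calls Algorithm~\ref{algo:Georgiadisalgorihtm}, which runs in linear time by Theorem~\ref{def:ApproximatioRatioOfGeorgiadisAlgorithm}.

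Next I would treat the branch where $G$ is not $2$-vertex-connected. Computing the set of all SAPs of $G$ in line $4$ again takes linear time by~\cite{ILS12}, and scanning this set to decide whether every vertex of $V$ is a SAP (line $5$) costs only $O(n)$. In the subcase where all vertices are SAPs (lines $6$--$7$) the algorithm outputs a SCSS produced by the algorithm of Zhao et al.~\cite{ZNI03}, and I would invoke the fact that this algorithm runs in linear time. In the remaining subcase (lines $9$--$14$) I would account for four operations: selecting a non-SAP vertex $v$ (an $O(n)$ scan of the SAP set already computed), computing a SCSS $G'$ of $G\setminus\{v\}$ with Zhao et al.'s algorithm (linear), computing the two pairs of independent spanning trees $T_1,T_2$ of $G(v)$ and $T_3,T_4$ of $G^{R}(v)$, and forming the union $E'\cup T_1\cup T_2\cup T_3^{R}\cup T_4^{R}$. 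The independent spanning trees can be constructed in linear time by the algorithm of Georgiadis and Tarjan~\cite{GT05}, and since each such tree has at most $n-1$ edges, reversing $T_3,T_4$ and forming the final union cost only $O(n)$.

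Summing the bounds over all branches yields a total running time of $O(n+m)$. The step I expect to require the most care is the justification that the two invoked subroutines --- the Zhao et al.\ MSCSS approximation of~\cite{ZNI03} and the Georgiadis--Tarjan construction of independent spanning trees of~\cite{GT05} --- genuinely run in linear time; everything else reduces either to a single linear-time connectivity computation from~\cite{ILS12} or to a trivial $O(n)$-size bookkeeping step. Once those two cited running times are in hand, the theorem follows immediately by adding the per-line costs.
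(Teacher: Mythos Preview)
Your proposal is correct and follows essentially the same approach as the paper: bound each subroutine individually by invoking known linear-time results (\cite{ILS12} for SAP computation, \cite{ZNI03} for the MSCSS approximation, and Georgiadis--Tarjan for the independent spanning trees), then sum. One small point: the paper cites \cite{GT12} rather than \cite{GT05} for the linear-time construction of the two independent spanning trees, so you may wish to adjust that reference.
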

\begin{proof}
The SAPs of $G$ can be calculated in linear time using the algorithm of Italiano et al \cite{ILS12}. Two independent spanning trees of $G(v)$ can also be constructed in linear time \cite{GT12}. Furthermore, the algorithm of Zhao et al. \cite{ZNI03} can be implemented in linear time.
\end{proof}
\section{Approximation algorithm for the MS-2SBs problem}
In this section we present an approximation algorithm for the MS-2SBs problem. Let $G=(V,E)$ be a strongly connected graph such that $G$ is not $2$-vertex-connected. Our algorithm consists of two main steps. The first step finds a SCSS $G^{*}$ of $G$ such that $G^{*}$ and $G$ have the same SAPs. The following lemma explains the purpose of this step.
\begin{lemma}\label{def:MSCSS2SBsFirstLemma}
Let $G=(V,E)$ be a strongly connected graph and let $S$ be the set of all SAPs of $G$.
 Let $G'=(V,E')$ be a feasible solution for the MS-SAPs problem and let $x,y$ be distinct vertices in $V$. Then for any vertex $z\in V\setminus (S\cup\lbrace x,y\rbrace)$, the vertices $x,y$ lie in the same SCC of $G'\setminus\lbrace z\rbrace$.
 \begin{proof}
 Immediate from the definition of SAPs, since $G$ and $G'$ have the same SAPs.
 \end{proof}
\end{lemma}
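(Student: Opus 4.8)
The statement to prove is Lemma~\ref{def:MSCSS2SBsFirstLemma}: if $G'=(V,E')$ is a feasible solution for the MS-SAPs problem and $S$ is the set of all SAPs of $G$, then for any distinct vertices $x,y$ and any $z \in V\setminus(S\cup\{x,y\})$, the vertices $x$ and $y$ lie in the same SCC of $G'\setminus\{z\}$.

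The plan is to argue directly from the definition of the MS-SAPs problem together with the characterization of SAPs. First I would recall what it means for $G'$ to be a feasible solution: by definition of the MS-SAPs problem, $G'$ is strongly connected and has exactly the same set of SAPs as $G$, namely $S$. Next I would fix an arbitrary vertex $z \in V\setminus(S\cup\{x,y\})$ and observe that, since $z\notin S$, the vertex $z$ is not a SAP of $G'$ either (because $G$ and $G'$ share the same SAPs). By the definition of a strong articulation point, removing a non-SAP vertex from a strongly connected graph does not increase the number of SCCs; hence $G'\setminus\{z\}$ remains strongly connected.

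Once $G'\setminus\{z\}$ is known to be strongly connected, the conclusion is immediate: every pair of vertices of $V\setminus\{z\}$, and in particular $x$ and $y$, lies in the single SCC of $G'\setminus\{z\}$. This completes the argument. The only mild subtlety is that the definition of a SAP is phrased in terms of increasing the number of SCCs rather than directly about strong connectivity of the residual graph; I would note that for a strongly connected graph $G'$ on $|V|$ vertices, $G'\setminus\{z\}$ is strongly connected precisely when $z$ is not a SAP, so the two formulations coincide here.

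There is essentially no hard step in this lemma; it is a one-line consequence of the defining property that $G$ and $G'$ possess identical SAP sets, which is exactly what the author's proof asserts (``Immediate from the definition of SAPs''). The main thing to be careful about is making the chain of implications explicit: $z\notin S \Rightarrow z$ is not a SAP of $G' \Rightarrow G'\setminus\{z\}$ is strongly connected $\Rightarrow x,y$ share an SCC of $G'\setminus\{z\}$. I would present exactly this chain as the proof.
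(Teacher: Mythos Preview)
Your proposal is correct and follows exactly the same approach as the paper's proof, simply spelling out in detail the chain of implications that the paper summarizes as ``immediate from the definition of SAPs.'' There is nothing to add.
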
 
Let $x,y$ be two vertices in $G$ such that $x \overset{2s}{\leftrightsquigarrow } y$ and let $z$ be a vertex in $V\setminus \lbrace x,y\rbrace$ such that $z$ is not SAP in $G$. The first step ensures that there exist at least one path from $x$ to $y$ and from $y$ to $x$ in $G^{*}\setminus \lbrace z\rbrace$.
The second step computes, for each SAP $v$ of $G$, strongly connected spanning subgraphs of the subgraphs induced by the SCCs of $G\setminus\lbrace v\rbrace$. Algorithm \ref{algo:AprroximationAlgoForMSCSS2SBs} describes our algorithm.
  \begin{figure}[htbp]
\begin{myalgorithm}\label{algo:AprroximationAlgoForMSCSS2SBs}\rm\quad\\[-5ex]
\begin{tabbing}
\quad\quad\=\quad\=\quad\=\quad\=\quad\=\quad\=\quad\=\quad\=\quad\=\kill
\textbf{Input:} A strongly connected graph $G=(V,E)$.\\
\textbf{Output:} A SCSS of $G$ with the same $2$-strong blocks.\\
{\small 1}\> \textbf{if} $G$ is $2$-vertex-connected \textbf{then}\\
{\small 2}\>\> Run algorithm of Cheriyan and Thurimella \cite{CT00} for minimum \\
{\small 3}\>\>  cardinality $2$-VCSS problem, improved in \cite{G11}.\\
{\small 4}\> \textbf{else}\\
{\small 5}\>\> lines $4$--$14$ of Algorithm \ref{algo:ModifiedGeorgiadisalgorihtm}, giving $G^{*}=(V,E^{*})$.\\
{\small 6}\>\> \textbf{for} each SAP $v$ of $G$ \textbf{do}\\
{\small 7}\>\>\> Compute the SCCs of $G\setminus\lbrace v\rbrace$.\\
{\small 8}\>\>\> \textbf{for} each SCC $C$ of $G\setminus\lbrace v\rbrace$ \textbf{do}\\
{\small 9}\>\>\>\> \textbf{if} $G^{*}[C]$ is not strongly connected \textbf{then}\\
{\small 10}\>\>\>\>\> choose a vertex $w\in C$. \\
{\small 11}\>\>\>\>\> Construct a spanning out-tree $T_{1}$ of $G[C]$ rooted at $w$.\\
{\small 12}\>\>\>\>\> Construct a spanning out-tree $T_{2}$ of $G^{R}[C]$ rooted at $w$.\\
{\small 13}\>\>\>\>\> $E^{*}\leftarrow E^{*}\cup T_{1} \cup T_{2}^{R}$.\\
{\small 14}\>\> Output $G^{*}=(V,E^{*})$.
\end{tabbing} 
\end{myalgorithm}
\end{figure}

In the following lemma we show that Algorithm \ref{algo:AprroximationAlgoForMSCSS2SBs} returns a feasible solution for the MS-2SBs problem. 
\begin{lemma}\label{def:AprroximationAlgoForMSCSS2SBsIsCorrect}
Let $G^{*}$ be the output of Algorithm \ref{algo:AprroximationAlgoForMSCSS2SBs}. Then $G^{*},G$ have the same $2$-strong blocks and $G^{*}$ is strongly connected. 
\end{lemma}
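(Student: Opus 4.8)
The plan is to prove the two required properties of the output $G^{*}$ separately: first that $G^{*}$ is strongly connected, and second that $G^{*}$ and $G$ have exactly the same $2$-strong blocks. Since the case when $G$ is $2$-vertex-connected is handled by a known subroutine, I would focus on the case when $G$ is not $2$-vertex-connected, so that lines $4$--$14$ of Algorithm \ref{algo:ModifiedGeorgiadisalgorihtm} produce an intermediate graph $G^{*}$ which, by Lemma \ref{def:ModifiedGeorgiadisalgorihtmIsCorrect}, is strongly connected and has the same SAPs as $G$. Strong connectivity of the final $G^{*}$ is then immediate, since the second step (lines $6$--$13$) only adds edges to an already strongly connected graph.

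The heart of the argument is showing that the $2$-strong blocks are preserved. Because $G^{*}\subseteq G$ (same vertex set, fewer edges), removing any vertex $z$ can only split more pairs apart in $G^{*}$ than in $G$; hence every $2$-strong block of $G$ is contained in a $2$-strong block of $G^{*}$ is the \emph{wrong} direction, so I must be careful: passing to a subgraph can only \emph{destroy} the $\overset{2s}{\leftrightsquigarrow}$ relation, so it suffices to prove that whenever $x \overset{2s}{\leftrightsquigarrow} y$ in $G$, the same holds in $G^{*}$. Fix distinct $x,y$ with $x \overset{2s}{\leftrightsquigarrow} y$ in $G$ and fix an arbitrary $z \in V\setminus\lbrace x,y\rbrace$; I must show $x,y$ lie in the same SCC of $G^{*}\setminus\lbrace z\rbrace$. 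I would split on whether $z$ is a SAP of $G$. If $z$ is \emph{not} a SAP, then the first step already guarantees this: since $G$ and $G^{*}$ have the same SAPs by Lemma \ref{def:ModifiedGeorgiadisalgorihtmIsCorrect}, Lemma \ref{def:MSCSS2SBsFirstLemma} applied to $G^{*}$ gives that $x,y$ lie in the same SCC of $G^{*}\setminus\lbrace z\rbrace$.

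The remaining, and main, obstacle is the case where $z$ \emph{is} a SAP of $G$; this is exactly what the second step (lines $6$--$13$) is designed to handle. When $z$ is a SAP, $G\setminus\lbrace z\rbrace$ decomposes into several SCCs, and since $x \overset{2s}{\leftrightsquigarrow} y$ in $G$, the vertices $x$ and $y$ must lie in a common SCC $C$ of $G\setminus\lbrace z\rbrace$. I would argue that after the second step $G^{*}[C]$ is strongly connected: either $G^{*}[C]$ was already strongly connected (and the \textbf{if} at line $9$ is skipped), or lines $10$--$13$ explicitly add the edges of a spanning out-tree $T_{1}$ of $G[C]$ and the reversal $T_{2}^{R}$ of a spanning out-tree of $G^{R}[C]$, which together form a strongly connected spanning subgraph of $G[C]$. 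In either case there is a path from $x$ to $y$ and from $y$ to $x$ entirely inside $C\subseteq V\setminus\lbrace z\rbrace$, using only edges of $G^{*}$, so $x,y$ lie in the same SCC of $G^{*}\setminus\lbrace z\rbrace$. Combining both cases shows $x \overset{2s}{\leftrightsquigarrow} y$ in $G^{*}$, and since the reverse inclusion holds trivially because $G^{*}$ is a subgraph of $G$, the relation $\overset{2s}{\leftrightsquigarrow}$ is identical on $G$ and $G^{*}$; by definition of $2$-strong blocks as the maximal classes closed under this relation, the $2$-strong blocks coincide. The delicate point to verify carefully is that the SCC $C$ of $G\setminus\lbrace z\rbrace$ containing $x$ and $y$ is indeed processed by the loop for this particular SAP $z$, and that adding edges of $G[C]$ does not accidentally merge SCCs across different removed vertices in a way that alters the blocks — but since we only ever add edges already present in $G$, no pair that was separated in $G$ can become joined in $G^{*}$, so no spurious enlargement of blocks can occur.
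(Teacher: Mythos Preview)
Your proposal is correct and follows essentially the same approach as the paper: both argue that it suffices to show $x \overset{2s}{\leftrightsquigarrow} y$ in $G$ implies $x \overset{2s}{\leftrightsquigarrow} y$ in $G^{*}$, then split on whether the removed vertex $z$ is a SAP, invoking Lemma~\ref{def:MSCSS2SBsFirstLemma} for the non-SAP case and the loop in lines $6$--$13$ for the SAP case. You spell out a few points the paper leaves implicit (strong connectivity of the final $G^{*}$, the trivial reverse inclusion, and why $T_{1}\cup T_{2}^{R}$ makes $G^{*}[C]$ strongly connected), but the structure is the same.
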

\begin{proof}
Since each $2$-vertex-connected graph is a $2$-strong block, the algorithm of Cheriyan and Thurimella \cite{CT00} returns a feasible solution when $G$ is $2$-vertex-connected. Let $G=(V,E)$ be a strongly connected graph such that $G$ is not $2$-vertex-connected. By Lemma \ref{def:ModifiedGeorgiadisalgorihtmIsCorrect}, the directed graph $G^{*}$ which is calculated in line $5$ and the directed graph $G$ have the same SAPs, and $G^{*}$ is strongly connected. Therefore, the output $G^{*}$ of Algorithm \ref{algo:AprroximationAlgoForMSCSS2SBs} and $G$ have the same SAPs. Obviously, it is sufficient to show the following. Let $x,y\in V$ be distinct vertices such that $x \overset{2s}{\leftrightsquigarrow } y$ in $G$. We must show that $x \overset{2s}{\leftrightsquigarrow } y$ in the output $G^{*}$ of Algorithm \ref{algo:AprroximationAlgoForMSCSS2SBs}. Let $v\in V\setminus\lbrace x,y\rbrace$ be some vertex. By Lemma \ref{def:MSCSS2SBsFirstLemma}, we may assume that $v$ is a SAP. Then $x,y$ lie in the same SCC of $G\setminus\lbrace v\rbrace$. The execution of the loop in lines $6$--$13$ for $v$ enforces that $x,y$ are also in the same SCC of $G^{*}\setminus\lbrace v\rbrace$.
\end{proof}
\begin{theorem}\label{def:ApprRatioOfAlgoForMSCSS2SBs}
Algorithm \ref{algo:AprroximationAlgoForMSCSS2SBs} has an approximation factor of $(2t_{sap}+17/3)$.
\end{theorem}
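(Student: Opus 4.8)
The plan is to charge the edges of the output against a lower bound on the optimum. Write $E_{opt}$ for an optimal MS-2SBs solution; since every feasible solution must be strongly connected we have $|E_{opt}|\ge n$. I would split the edges produced by Algorithm \ref{algo:AprroximationAlgoForMSCSS2SBs} into the set $E_0$ computed in line $5$ (the output of lines $4$--$14$ of Algorithm \ref{algo:ModifiedGeorgiadisalgorihtm}) and the set $E_{add}$ added by the fix-up loop in lines $6$--$13$, so that $|E^*|\le |E_0|+|E_{add}|$, and then bound the two pieces separately against $|E_{opt}|$.

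The bound on $E_{add}$ is a direct counting argument. Fix a SAP $v$; the sets $C$ range over the SCCs of $G\setminus\lbrace v\rbrace$, which partition the $n-1$ vertices of $V\setminus\lbrace v\rbrace$. For each such $C$ the loop adds at most $2(|C|-1)$ edges, namely an out-tree $T_1$ of $G[C]$ and the reversal $T_2^{R}$ of an out-tree of $G^{R}[C]$. Hence at most $\sum_C 2(|C|-1)\le 2(n-1)$ edges are added for each SAP, and summing over all $t_{sap}$ strong articulation points gives $|E_{add}|\le 2t_{sap}(n-1)$. Using $|E_{opt}|\ge n>n-1$, this yields $|E_{add}|/|E_{opt}|<2t_{sap}$.

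For $E_0$ I would reuse the analysis of the MS-SAPs algorithm. In the branch where every vertex is a SAP, $E_0$ is a $5/3$-approximate SCSS of $G$ by Zhao et al. \cite{ZNI03}; since $E_{opt}$ is itself a SCSS of $G$, this gives $|E_0|\le (5/3)|E_{opt}|$ immediately. In the branch where a non-SAP $v$ is chosen, let $\mu(v)$ denote the size of a minimum SCSS of $G\setminus\lbrace v\rbrace$. Exactly as in the proof of Theorem \ref{def:modifierdAlgorithHasApprimationRatio} we get $|E_0|\le 4(n-1)+(5/3)\mu(v)$, and combined with $|E_{opt}|\ge n$ this gives $|E_0|/|E_{opt}|\le 17/3-4/n$ \emph{provided} one has the key inequality $\mu(v)\le |E_{opt}|$. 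Adding the two estimates then yields $|E^*|/|E_{opt}|<17/3+2t_{sap}$, which is the claimed ratio.

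The main obstacle is precisely the inequality $\mu(v)\le |E_{opt}|$. In the MS-SAPs analysis it came for free, because a feasible MS-SAPs solution preserves \emph{all} SAPs, so the chosen non-SAP $v$ of $G$ remains a non-SAP of $G[E_{opt}]$ and deleting it leaves a SCSS of $G\setminus\lbrace v\rbrace$ with at most $|E_{opt}|$ edges. Here feasibility only preserves the $2$-strong blocks, and this does \emph{not} force $v$ to stay a non-SAP of $G[E_{opt}]$: a non-SAP of $G$ can become a SAP of a sparser subgraph while every pair with $x\overset{2s}{\leftrightsquigarrow}y$ still survives every SAP-deletion (by Lemma \ref{def:Relationbetween2sAndSAPs}, only SAP-removals matter for $2$-strong blocks). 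I would therefore argue the inequality by cases: if $v$ is a non-SAP of $G[E_{opt}]$ then $G[E_{opt}]\setminus\lbrace v\rbrace$ is already a SCSS of $G\setminus\lbrace v\rbrace$, giving $\mu(v)\le |E_{opt}|$ at once; the delicate case to verify is the one where an optimal solution turns $v$ into a SAP, where one must show directly that such a solution nonetheless has at least $\mu(v)$ edges. Closing this gap is the only nontrivial point; everything else reduces to the two routine bounds above.
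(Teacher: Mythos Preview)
Your decomposition is exactly the paper's: split the output into the edge set $E_{1}$ produced in line~$5$ (the MS-SAPs subroutine) and the edge set $E_{2}$ added in the loop of lines~$6$--$13$, then bound each against $|E_{opt}|$. Your counting argument for $E_{2}$ is the paper's argument verbatim (the SCCs of $G\setminus\{v\}$ partition $n-1$ vertices, so at most $2(n-1)$ edges per SAP, hence $|E_{2}|<2t_{sap}n$ and $|E_{2}|/|E_{opt}|<2t_{sap}$).

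Where you diverge is in being \emph{more} careful than the paper about $E_{1}$. The paper's proof simply writes ``By Theorem~\ref{def:modifierdAlgorithHasApprimationRatio}, $|E_{1}|/|E_{opt}|\le 17/3$'' and moves on; it does not revisit the inequality $\mu(v)\le |E_{opt}|$ at all. You are right that this step is not automatic: Theorem~\ref{def:modifierdAlgorithHasApprimationRatio} obtains $\mu(v)\le |E_{opt}|$ from the fact that an optimal MS-SAPs solution keeps $v$ a non-SAP, and an optimal MS-$2$SBs solution carries no such guarantee (a non-SAP of $G$ can become a SAP of $G[E_{opt}]$ without disturbing any $2$-strong block, since by Lemma~\ref{def:Relationbetween2sAndSAPs} only removals of SAPs of $G$ can separate a $2$-strong pair). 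The paper provides no argument for the ``delicate case'' you isolate; it treats the $17/3$ bound as a black-box consequence of Theorem~\ref{def:modifierdAlgorithHasApprimationRatio}. So the gap you flagged is present in the paper's own proof, not something you overlooked relative to it.
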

\begin{proof}
If $G$ is $2$-vertex-connected, the algorithm of Cheriyan and Thurimella \cite{CT00} for the minimum cardinality $2$-VCSS problem achieves an approximation ratio of $1.5$. We consider now the case when $G$ is not $2$-vertex-connected. Let $E_{opt}$ be an optimal solution for the MS-2SBs problem. The output $G^{*}$ of Algorithm \ref{algo:AprroximationAlgoForMSCSS2SBs} consists of two edge sets $E_{1},E_{2}$, where the edge set $E_{1}$ is computed in line $5$ and the edge set $E_{2}$ is computed in lines $6$--$13$. By Theorem \ref{def:modifierdAlgorithHasApprimationRatio}, $|E_{1}|/|E_{opt}|\leq 17/3$. The number of iterations of the for-loop in lines $6$--$13$ is $t_{sap}$. Because the SCCs of a directed graph are disjoint, we have $|E_{2}|< 2t_{sap}n$. Since $|E_{opt}|\leq n$, we have $|E_{2}|/|E_{opt}|<2t_{sap}$.
\end{proof}
\begin{theorem}\label{def:RunTimeOfAlgoForMSCSS2SBs}
Algorithm \ref{algo:AprroximationAlgoForMSCSS2SBs} runs in $O(m(\sqrt{n}+t_{sap})+n^{2})$ time.
\end{theorem}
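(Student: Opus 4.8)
The plan is to bound the cost of Algorithm \ref{algo:AprroximationAlgoForMSCSS2SBs} by analyzing its two branches separately and then taking the worst case, tracing each of the three summands $m\sqrt{n}$, $m\,t_{sap}$, and $n^{2}$ to a distinct source. First I would dispose of the test in line $1$: deciding $2$-vertex-connectivity with the algorithm of Italiano et al. \cite{ILS12} costs only linear time, so it never dominates. The $m\,t_{sap}$ term should come from the main loop of the general branch (lines $6$--$13$), while the $m\sqrt{n}$ and $n^{2}$ terms should come from the minimum-cardinality $2$-VCSS subroutine invoked in the $2$-vertex-connected branch (lines $2$--$3$).

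For the general branch (lines $4$--$14$) I would first note that line $5$ merely runs lines $4$--$14$ of Algorithm \ref{algo:ModifiedGeorgiadisalgorihtm}, which by Theorem \ref{def:RunningTimeModifiedGeorgiadisalgorihtm} is linear and hence negligible. The dominant cost is the for-loop in lines $6$--$13$, which iterates once per SAP of $G$, i.e. $t_{sap}$ times. Within one iteration, computing the SCCs of $G\setminus\lbrace v\rbrace$ in line $7$ costs $O(m)$ by Tarjan's algorithm. The key observation I would make is that the inner work over all SCCs $C$ of $G\setminus\lbrace v\rbrace$ aggregates to $O(m)$ rather than $O(m)$ per SCC: distinct SCCs induce vertex-disjoint subgraphs, so the induced edge sets $E^{*}[C]$ and $E[C]$ are pairwise disjoint, whence the strong-connectivity test of line $9$ together with the two out-tree constructions of lines $11$--$12$ cost $O(n+m)=O(m)$ summed over all $C$. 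Thus one iteration is $O(m)$ and the whole loop is $O(m\,t_{sap})$.

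For the $2$-vertex-connected branch I would invoke the running time of the minimum-cardinality $2$-VCSS algorithm of Cheriyan and Thurimella \cite{CT00} as improved in \cite{G11}, namely $O(m\sqrt{n}+n^{2})$, which supplies both the $m\sqrt{n}$ and the $n^{2}$ terms. Since a strongly connected but non-$2$-vertex-connected graph has $t_{sap}\geq 1$, the general branch already satisfies $m\,t_{sap}\geq m$, so combining the two branches gives the worst-case bound $O(m\sqrt{n}+n^{2}+m\,t_{sap})=O(m(\sqrt{n}+t_{sap})+n^{2})$.

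The main obstacle I anticipate is the aggregate analysis of lines $8$--$13$: one must ensure that repeatedly augmenting $E^{*}$ in line $13$ never inflates $G^{*}$ beyond $m$ edges, so that every SCC computation and every strong-connectivity check on $G^{*}[C]$ is charged against an $O(m)$-sized structure. The clean way to secure this is to observe that each added tree edge lies in $G[C]$ (for $T_{1}$) or is the reversal of an edge of $G^{R}[C]$, hence again in $E$; therefore $E^{*}\subseteq E$ and $|E^{*}|\leq m$ throughout the execution, even across successive SAP iterations. With that invariant in hand the per-iteration $O(m)$ bound goes through, and the only remaining point of care is quoting the precise running time of the external $2$-VCSS subroutine that justifies the $\sqrt{n}$ and $n^{2}$ contributions.
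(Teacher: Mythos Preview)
Your proposal is correct and follows essentially the same approach as the paper's own proof: linear time for the $2$-vertex-connectivity test, $O(m\sqrt{n}+n^{2})$ for the Cheriyan--Thurimella subroutine as improved by Georgiadis, linear time for line $5$ via Theorem \ref{def:RunningTimeModifiedGeorgiadisalgorihtm}, and $O(t_{sap}m)$ for the loop in lines $6$--$13$. You in fact supply more justification than the paper does, namely the aggregate argument that the inner work over all SCCs of $G\setminus\{v\}$ sums to $O(m)$ by disjointness, and the invariant $E^{*}\subseteq E$ ensuring that the strong-connectivity checks on $G^{*}[C]$ stay within $O(m)$; the paper simply asserts ``lines $6$--$13$ take $O(t_{sap}m)$ time'' without spelling these points out.
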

\begin{proof}
We obtain the time bound by combining known time bounds in the obvious way. The algorithm of Cheriyan and Thurimella \cite{CT00} for the minimum cardinality $2$-VCSS problem has running time $O(m^{2})$. In $2011$, Georgiadis \cite{G11} improved it to $O(m\sqrt{n}+n^{2})$. By Theorem \ref{def:RunningTimeModifiedGeorgiadisalgorihtm}, line $5$ takes $O(n+m)$ time. The SCCs of a directed graph can be found in linear time using Tarjan's algorithm \cite{T72}. Thus, lines $6$--$13$ take $O(t_{sap}m)$ time. 
\end{proof}
Notice that in lines $9$--$13$ of Algorithm \ref{algo:AprroximationAlgoForMSCSS2SBs}, every SCC $C$ which does not contain any vertex of the $2$-strong blocks of $G$ can be safely disregarded.
\section{Approximation algorithm for the MS-$2$EBs problem}
In this section we present an approximation algorithm for the MS-$2$EBs problem. The idea of this algorithm (Algorithm \ref{algo:AprroAlgoForMSCSS2EBs}) is similar to the idea of Algorithm \ref{algo:AprroximationAlgoForMSCSS2SBs}. 
\begin{figure}[h]
\begin{myalgorithm}\label{algo:AprroAlgoForMSCSS2EBs}\rm\quad\\[-5ex]
\begin{tabbing}
\quad\quad\=\quad\=\quad\=\quad\=\quad\=\quad\=\quad\=\quad\=\quad\=\kill
\textbf{Input:} A strongly connected graph $G=(V,E)$.\\
\textbf{Output:} A SCSS of $G$ with the same $2$-edge blocks.\\
{\small 1}\>\> Choose a vertex $v$ of $G$. \\
{\small 2}\>\> Compute two spanning trees $T_{1},T_{2}$ of $G(v)$ (rooted at $v$) such  \\
{\small 3}\>\> that $T_{1},T_{2}$ have only the edge dominators of $G(v)$ in common.\\
{\small 4}\>\> Compute two spanning trees $T_{3},T_{4}$ of $G^{R}(v)$ (rooted at $v$) such  \\
{\small 5}\>\> that $T_{3},T_{4}$ have only the edge dominators of $G^{R}(v)$ in common.\\
{\small 6}\>\> $E^{*}\leftarrow T_{1}\cup T_{2}\cup T_{3}^{R} \cup T_{4}^{R}$.\\
{\small 7}\>\> Find all the strong bridges in $G$.\\
{\small 8}\>\> \textbf{for} each strong bridge $e$ of $G$ \textbf{do}\\
{\small 9}\>\>\> Compute the SCCs of $G\setminus\lbrace e\rbrace$.\\
{\small 10}\>\>\> \textbf{for} each SCC $C$ of $G\setminus\lbrace e\rbrace$ \textbf{do}\\
{\small 11}\>\>\>\> \textbf{if} $G^{*}[C]$ is not strongly connected \textbf{then}\\
{\small 12}\>\>\>\>\> Find a SCSS $(C,E')$ of $G[C]$ with $|E'|<2|C|$. \\
{\small 13}\>\>\>\>\> $E^{*}\leftarrow E^{*}\cup E' $.\\
{\small 14}\>\> Output $G^{*}=(V,E^{*})$.
\end{tabbing} 
\end{myalgorithm}
\end{figure}
\begin{lemma}\label{def:GOutPutHaveSameEdgeDominator}
Let $G^{*}$ be the output of Algorithm \ref{algo:AprroAlgoForMSCSS2EBs}. Then $G^{*}$ is strongly connected and the directed graphs $G^{*},G$ have the same strong bridges.
\end{lemma}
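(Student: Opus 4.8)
The plan is to mirror the proof of Lemma~\ref{def:ModifiedGeorgiadisalgorihtmIsCorrect}, replacing vertices by edges, dominators by edge dominators, and SAPs by strong bridges throughout. Fix the vertex $v$ chosen in line~$1$. For the strong connectivity, I would note that $T_{1}$ is a spanning out-tree of $G(v)$ rooted at $v$, so every vertex is reachable from $v$ inside $E^{*}$, while $T_{3}$ is a spanning out-tree of $G^{R}(v)$ rooted at $v$, so $v$ is reachable from every vertex along $T_{3}^{R}\subseteq E^{*}$. Hence $(V,T_{1}\cup T_{3}^{R})$ is strongly connected, and since it is a subgraph of $G^{*}$, so is $G^{*}$.

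For the strong bridges, let $ED(v)$ and $ED^{R}(v)$ denote the edge dominators of $G(v)$ and of $G^{R}(v)$, and let $ED_{1}(v),ED_{1}^{R}(v)$ be the corresponding sets for $G^{*}(v)$ and $G^{*R}(v)$. Following the correspondence of \cite{ILS12} between strong bridges and the SAPs of the split graph (recalled in Section~\ref{def:sec2eb}), together with \cite[Theorem~$5.2$]{ILS12}, the strong bridges of a strongly connected graph are precisely the edges lying in $ED(v)$ together with those whose reversals lie in $ED^{R}(v)$; the same statement applies to $G^{*}$ with $ED_{1}(v),ED_{1}^{R}(v)$. It therefore suffices to prove $ED_{1}(v)\cup ED_{1}^{R}(v)=ED(v)\cup ED^{R}(v)$ (where edges of $G^{R}$ are identified with their reversals in $G$).

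First I would establish the inclusion $\subseteq$. By construction $T_{1},T_{2}$ share exactly the edge dominators of $G(v)$, so the flowgraph $(V,T_{1}\cup T_{2},v)$ has the same edge dominators as $G(v)$; since $T_{1}\cup T_{2}\subseteq E^{*}$ and adding edges can only destroy edge dominators, we obtain $ED_{1}(v)\subseteq ED(v)$, and symmetrically $ED_{1}^{R}(v)\subseteq ED^{R}(v)$ via $T_{3},T_{4}$. For the reverse inclusion I would argue by contradiction exactly as in Lemma~\ref{def:ModifiedGeorgiadisalgorihtmIsCorrect}: if some edge $e$ lies in $ED(v)\cup ED^{R}(v)$ but not in $ED_{1}(v)\cup ED_{1}^{R}(v)$, then by the characterization $e$ is a strong bridge of $G$ but not of $G^{*}$. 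Moreover $e\in E^{*}$, because every edge dominator of $G(v)$ lies in $T_{1}\cap T_{2}$ and every edge dominator of $G^{R}(v)$ has its reversal in $T_{3}^{R}\cap T_{4}^{R}\subseteq E^{*}$. As $e$ is not a strong bridge of $G^{*}$, the graph $G^{*}\setminus\lbrace e\rbrace$ is strongly connected; but it is a spanning subgraph of $G\setminus\lbrace e\rbrace$, forcing $G\setminus\lbrace e\rbrace$ to be strongly connected, which contradicts that $e$ is a strong bridge of $G$.

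The main obstacle is the second step: justifying the edge-connectivity analogue of \cite[Theorem~$5.2$]{ILS12} (that strong bridges are exactly the edge dominators seen from $G(v)$ and from $G^{R}(v)$) and the edge-dominator preservation property of the two spanning trees (that $(V,T_{1}\cup T_{2},v)$ keeps precisely the edge dominators of $G(v)$). Both are the edge-disjoint counterparts of the vertex facts used for the MS-SAPs problem and follow from the split-graph reduction $e\mapsto\varphi(e)$ already described in Section~\ref{def:sec2eb}; once they are in place the two inclusions above are immediate. Finally I would remark that lines~$7$--$13$ play no role in this lemma—they serve only to preserve the $2$-edge blocks—so they may be ignored here and the argument rests entirely on lines~$1$--$6$.
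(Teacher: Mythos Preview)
Your proposal is correct and rests on the same two ingredients as the paper's proof: Tarjan's pair of spanning trees that share exactly the edge dominators of $G(v)$, and the characterization from \cite{ILS12} of strong bridges as the edge dominators seen in $G(v)$ together with those in $G^{R}(v)$. The only difference is in the packaging of the second inclusion: the paper observes directly that $(V,T_{1}\cup T_{2}\cup T_{3}^{R}\cup T_{4}^{R})\subseteq G^{*}\subseteq G$ and that the two extreme graphs already have the same strong bridges, so the sandwiched $G^{*}$ does too; you instead replay the contradiction argument of Lemma~\ref{def:ModifiedGeorgiadisalgorihtmIsCorrect}, which works but is longer than necessary here since $ED(v)\subseteq ED_{1}(v)$ follows immediately from $E^{*}\subseteq E$ once you have noted $ED(v)\subseteq T_{1}\cap T_{2}\subseteq E^{*}$.
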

\begin{proof}
Since $T_{1} \cup T_{3}^{R} \subseteq E^{*}$, the graph $G^{*}$ is strongly connected. Tarjan \cite{T76} proved that there exist two spanning trees (rooted at $v$) of $G(v)$ that have only the edge dominators of $G(v)$ in common and he gave algorithms for computing them. Italiano et al. \cite{ILS12} showed that edge $e\in E$ is strong bridge if and only if $e$ is an edge dominator in $G(v)$ or in $G^{R}(v)$. Therefore, the directed graphs $G,(V,T_{1}\cup T_{2}\cup T_{3}^{R} \cup T_{4}^{R})$ have the same strong bridges. Since $(V,T_{1}\cup T_{2}\cup T_{3}^{R} \cup T_{4}^{R})$ is a subgraph of $G^{*}$ and $G^{*}$ is a subgraph of $G$, the directed graphs $G^{*}, G$ have the same strong bridges
\end{proof}
Next we show that Algorithm \ref{algo:AprroAlgoForMSCSS2EBs} outputs a feasible solution for the MS-$2$EBs problem.
\begin{lemma} \label{def:AprroAlgoForMSCSS2EBsIsCorrect}
 Let $G^{*}=(V,E^{*})$ be the output of Algorithm \ref{algo:AprroAlgoForMSCSS2EBs}. Then $G^{*},G$ have the same $2$-edge blocks.
\end{lemma}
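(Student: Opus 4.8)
The plan is to show that the relation $\overset{2e}{\leftrightsquigarrow }$ is exactly the same in $G$ and in $G^{*}$; since the $2$-edge blocks are determined by this relation and are disjoint by Lemma \ref{def:2ebsAredisjoint}, this immediately yields that $G$ and $G^{*}$ have the same $2$-edge blocks. One inclusion is free: because $G^{*}$ is a spanning subgraph of $G$, any two edge-disjoint paths in $G^{*}$ are also edge-disjoint paths in $G$, so $v \overset{2e}{\leftrightsquigarrow } w$ in $G^{*}$ implies $v \overset{2e}{\leftrightsquigarrow } w$ in $G$. The work is therefore in the converse direction.

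So I would assume $v \overset{2e}{\leftrightsquigarrow } w$ in $G$ for distinct $v,w$ and try to conclude $v \overset{2e}{\leftrightsquigarrow } w$ in $G^{*}$. By Lemma \ref{def:2eblemma1} applied to $G^{*}$, it suffices to prove that for every edge $e\in E^{*}$ the vertices $v,w$ lie in the same SCC of $G^{*}\setminus\lbrace e\rbrace$. If $e$ is not a strong bridge of $G^{*}$, then Lemma \ref{def:Relationbetween2ebAndSbs} applied to $G^{*}$ (which is strongly connected by Lemma \ref{def:GOutPutHaveSameEdgeDominator}) gives this for free. By Lemma \ref{def:GOutPutHaveSameEdgeDominator} the strong bridges of $G^{*}$ are precisely those of $G$, so it remains to handle the case where $e$ is a strong bridge of $G$. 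Since $v \overset{2e}{\leftrightsquigarrow } w$ in $G$, Lemma \ref{def:2eblemma1} places $v,w$ in a common SCC $C$ of $G\setminus\lbrace e\rbrace$, and the loop in lines $8$--$13$ processes precisely this pair $(e,C)$ and guarantees that $G^{*}[C]$ is strongly connected.

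The crux is to turn ``$G^{*}[C]$ is strongly connected'' into ``$v,w$ lie in the same SCC of $G^{*}\setminus\lbrace e\rbrace$'', and for this I must verify that the removed edge $e$ is not an internal edge of $C$. I claim that a strong bridge $e=(a,b)$ has its two endpoints in distinct SCCs of $G\setminus\lbrace e\rbrace$: if $a$ and $b$ were in the same SCC of $G\setminus\lbrace e\rbrace$, there would be an $a$-to-$b$ path avoiding $e$, and rerouting every use of $e$ along this path would show that $G\setminus\lbrace e\rbrace$ is still strongly connected, contradicting that $e$ is a strong bridge. Hence at most one endpoint of $e$ lies in $C$, so $e\notin E(G^{*}[C])$, and the strongly connected subgraph $G^{*}[C]$ is contained in $G^{*}\setminus\lbrace e\rbrace$. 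Since $v,w\in C$, they lie in the same SCC of $G^{*}\setminus\lbrace e\rbrace$, as required. I would also record that the edge additions in the loop are monotone, so once $G^{*}[C]$ is made strongly connected it remains so through the rest of the algorithm.

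I expect the main obstacle to be exactly this last point: making sure that forcing $G^{*}[C]$ to be strongly connected on the vertex set $C$ genuinely certifies connectivity in $G^{*}\setminus\lbrace e\rbrace$, which hinges on the observation that a strong bridge never lies inside one of the SCCs created by its own removal. Everything else reduces, via Lemmas \ref{def:2eblemma1}, \ref{def:Relationbetween2ebAndSbs} and \ref{def:GOutPutHaveSameEdgeDominator}, to bookkeeping over the strong bridges.
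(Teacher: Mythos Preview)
Your proof is correct and follows essentially the same approach as the paper: reduce to Lemma~\ref{def:2eblemma1}, split on whether $e$ is a strong bridge of $G^{*}$, and invoke Lemma~\ref{def:GOutPutHaveSameEdgeDominator} together with the effect of the loop in lines $8$--$13$. You are in fact more careful than the paper in two respects: you make the easy direction (from $G^{*}$ to $G$) explicit, and you justify why the strong bridge $e$ cannot lie inside the SCC $C$, a point the paper's proof passes over silently when it asserts that the loop ``enforces'' strong connectivity in $G^{*}\setminus\lbrace e\rbrace$.
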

\begin{proof}
 Let $x,y$ be distinct vertices such that $x \overset{2e}{\leftrightsquigarrow } y$ in $G$. We must show that $x \overset{2e}{\leftrightsquigarrow } y$ in $G^{*}$. By Lemma \ref{def:2eblemma1}, we need to show that $x,y$ lie in the same SCC of $G^{*}\setminus\lbrace e\rbrace$ for any edge $e \in E^{*}$.
 Let $e$ be an edge in $G^{*}$. We consider two cases.
 \begin{description}
 \item[$1.$] $e$ is not a strong bridge in $G^{*}$. By lemma \ref{def:GOutPutHaveSameEdgeDominator}, $G^{*}$ is strongly connected. Hence by definition of strong bridges the vertices $x,y$ lie in the same SCC of $G^{*}\setminus\lbrace e\rbrace$. 
 \item[$2.$] $e$ is a strong bridge in $G^{*}$. By Lemma \ref{def:GOutPutHaveSameEdgeDominator}, $G$ and $G^{*}$ have the same strong bridges. Since $x,y$ lie in the same SCC of $G\setminus \lbrace e\rbrace$, the execution of the loop in lines $8$--$13$ enforces that the vertices $x,y$ are also in the same SCC of $G^{*}\setminus\lbrace e\rbrace$.
 \end{description}
\end{proof}
\begin{theorem}\label{def:ApproRatioAprroAlgoForMSCSS2EBs}
Let $G^{*}=(V,E^{*})$ be the output of Algorithm \ref{algo:AprroAlgoForMSCSS2EBs}. Then $|E^{*}|<(4+2t_{sb})n$.
\end{theorem}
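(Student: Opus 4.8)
The plan is to bound $|E^{*}|$ by counting separately the two groups of edges that Algorithm \ref{algo:AprroAlgoForMSCSS2EBs} inserts: the four spanning trees constructed in lines $1$--$6$, and the strongly connected spanning subgraphs added inside the loop over strong bridges in lines $8$--$13$.

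First I would handle the spanning-tree part. Each of $T_{1},T_{2},T_{3},T_{4}$ is a spanning tree of a flowgraph on the full vertex set $V$ (either $G(v)$ or $G^{R}(v)$), so each consists of exactly $n-1$ edges. Hence $|T_{1}\cup T_{2}\cup T_{3}^{R}\cup T_{4}^{R}|\leq 4(n-1)$, regardless of how many edges the trees happen to share.

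Next I would bound the edges contributed by the strong-bridge loop. For a fixed strong bridge $e$, removing $e$ leaves all $n$ vertices in place, so the SCCs of $G\setminus\lbrace e\rbrace$ form a partition of $V$; consequently $\sum_{C}|C|=n$, where the sum ranges over these SCCs. For each SCC $C$ for which $G^{*}[C]$ is not yet strongly connected, line $12$ adds a SCSS $(C,E')$ of $G[C]$ with $|E'|<2|C|$. Summing over the SCCs of $G\setminus\lbrace e\rbrace$ therefore contributes fewer than $2\sum_{C}|C|=2n$ edges for this single strong bridge. Since there are exactly $t_{sb}$ strong bridges (line $7$), the whole loop contributes fewer than $2t_{sb}n$ edges.

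Adding the two bounds yields $|E^{*}|\leq 4(n-1)+2t_{sb}n<(4+2t_{sb})n$, as claimed. The argument is essentially a disjointness count, so I do not expect a genuine obstacle; the only point that needs care is the observation that the SCCs of $G\setminus\lbrace e\rbrace$ partition the entire vertex set (because only an edge, not a vertex, is deleted), which is precisely what lets $\sum_{C}|C|$ collapse to $n$ for each strong bridge and keeps the per-bridge cost at $2n$ rather than something larger.
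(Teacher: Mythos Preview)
Your proof is correct and follows essentially the same approach as the paper: bound the four spanning trees by $4(n-1)$, bound the per-bridge contribution of the loop by observing that the SCCs of $G\setminus\{e\}$ partition $V$ so that the added edges sum to fewer than $2n$, and multiply by $t_{sb}$. The paper's write-up is terser (it states $\leq 4(n-1)+2t_{sb}(n-1)$ directly), but your argument is the same decomposition with a bit more justification for the partition step.
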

\begin{proof}
The number of iterations of the for-loop in lines $8$--$13$ is $t_{sb}$. Therefore, we have  $|E^{*}|=|T_{1}\cup T_{2}\cup T_{3}^{R} \cup T_{4}^{R}|+|E^{*}|-|T_{1}\cup T_{2}\cup T_{3}^{R} \cup T_{4}^{R}|\leq 4(n-1)+2t_{sb}(n-1)<(4+2t_{sb})n$.
\end{proof}
Let $G$ be a strongly connected graph. Since each SCSS of $G$ has at least $n$ edges, Algorithm \ref{algo:AprroAlgoForMSCSS2EBs} achieves an approximation ratio of $(4+2t_{sb})$.
\begin{theorem}\label{def:RunTimeAprroAlgoForMSCSS2EBs}
The running time of Algorithm \ref{algo:AprroAlgoForMSCSS2EBs} is $O((t_{sb}+\alpha(n,m))m)$.
\end{theorem}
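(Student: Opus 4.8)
The plan is to add up the cost of the three phases of Algorithm~\ref{algo:AprroAlgoForMSCSS2EBs}: the construction of the four spanning trees (lines $1$--$6$), the computation of the strong bridges (line $7$), and the main loop over strong bridges (lines $8$--$13$). Since $G$ is strongly connected we have $m\geq n$, so throughout I may replace $O(m+n)$ by $O(m)$.

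First I would bound lines $2$--$5$. By Tarjan~\cite{T76}, two spanning trees of $G(v)$ rooted at $v$ that share only the edge dominators of $G(v)$ can be computed, and the accompanying algorithm runs in $O(m\,\alpha(n,m))$ time, where $\alpha$ is the inverse Ackermann function arising from the disjoint-set operations used in the construction. The same bound holds for the two spanning trees $T_{3},T_{4}$ of $G^{R}(v)$, and forming $E^{*}$ in line $6$ costs only $O(n)$. This phase therefore contributes the $\alpha(n,m)\,m$ term.

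Next, line $7$ finds all strong bridges of $G$ in linear time using the algorithm of Italiano et al.~\cite{ILS12}. For the main loop I would argue that each of the $t_{sb}$ iterations costs $O(m)$. Computing the SCCs of $G\setminus\lbrace e\rbrace$ in line $9$ takes $O(m+n)$ time by Tarjan's algorithm~\cite{T72}. Because the SCCs of $G\setminus\lbrace e\rbrace$ are pairwise disjoint, the total work spent in lines $10$--$13$ over all SCCs $C$ of a single $G\setminus\lbrace e\rbrace$ --- testing whether $G^{*}[C]$ is strongly connected and, if not, building a SCSS $(C,E')$ of $G[C]$ with $|E'|<2|C|$ as the union of an out-branching and an in-branching --- is bounded by $O(m+n)=O(m)$. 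Hence each iteration is $O(m)$ and the whole loop is $O(t_{sb}\,m)$.

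Summing the three phases gives $O(\alpha(n,m)\,m)+O(m)+O(t_{sb}\,m)=O((t_{sb}+\alpha(n,m))\,m)$, as claimed. I expect the only subtle point to be the $\alpha(n,m)$ factor: it is inherited entirely from Tarjan's spanning-tree construction in lines $2$--$5$, and one must check that forming the edge set and the remaining bookkeeping introduce no larger factor. The disjointness of the SCCs of $G\setminus\lbrace e\rbrace$ is exactly what keeps the inner loop linear per strong bridge and prevents a blow-up there.
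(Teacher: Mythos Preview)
Your proof is correct and follows essentially the same approach as the paper's own argument: bound lines $2$--$5$ by Tarjan's $O(m\,\alpha(n,m))$ result, line $7$ by the linear-time strong-bridge algorithm of Italiano et al., and the loop in lines $8$--$13$ by $O(t_{sb}\,m)$. Your version is in fact more careful than the paper's, since you explicitly invoke the disjointness of the SCCs of $G\setminus\lbrace e\rbrace$ to justify that lines $10$--$13$ cost $O(m)$ per iteration, a point the paper leaves implicit.
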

\begin{proof}
Two spanning trees $T_{1},T_{2}$ of $G(v)$ (rooted at $v$) such that $T_{1},T_{2}$ have only the edge dominators of $G(v)$ in common can be computed in $O(m\alpha(n,m))$ time by using Tarjan's algorithm \cite{T76}, where $\alpha(n,m)$ is a very slowly function related to a functional inverse of Ackermann's function. The strong bridges of a strongly connected graph can be computed in linear time using the algorithm of Italiano et al. \cite{ILS12}. Moreover, the number of iterations of the for-loop in lines $6$--$13$ is $t_{sb}$. The total time for Algorithm \ref{algo:AprroAlgoForMSCSS2EBs} is therefore $O((t_{sb}+\alpha(n,m))m)$.
\end{proof}
 Notice that by Lemma \ref{def:LemmaForSeconedAlgorFor2Edgeblocks}, we can obtain a $(2(t_{sap}+t_{sb})+29/3)$ approximation algorithm for the MS-2DBs problem by combining Algorithm \ref{algo:AprroximationAlgoForMSCSS2SBs} and Algorithm \ref{algo:AprroAlgoForMSCSS2EBs}. This algorithm whose running time is $O((t_{sap}+t_{sb}+\sqrt{n}+\alpha(n,m))m+n^{2})$ might be useful when $t_{sap}+t_{sb}$ is small.

\section{Open problems}
 The $k$-strong blocks of directed graphs which are natural generalization of $2$-strong blocks are similar to the $k$-blocks of undirected graphs \cite{CDHH13}. Let $G=(V,E)$ be a directed graph. We define a relation $\overset{ks}{\leftrightsquigarrow } $ as follows. For any pair of distinct vertices $x,y \in V$, we write $x \overset{ks}{\leftrightsquigarrow } y$ if for each subset  $X \subseteq V\setminus \lbrace x,y\rbrace$ with $|X|<k$, the vertices $x$ and $y$ lie in the same SCC of $G\setminus X$. A \textit{$k$-strong block} in a directed graph $G$ is a maximal vertex set $C^{ks}\subseteq V$ with $|C^{ks}|\geq k$ such that for each pair of distinct vertices $x,y \in C^{ks}$, $x \overset{ks}{\leftrightsquigarrow } y$. One can show that any two $k$-strong blocks have at most $k-1$ vertices in common.
A simple algorithm was given in \cite{CDHH13} to find the $k$-blocks of an undirected graph. We noticed that this algorithm is also able to compute the $k$-strong blocks of a directed graph in $O(\min\lbrace k,\sqrt{n}\rbrace n^{4})$-time. We just need to modify the pre-processing step and the definition of separation. Let $G=(V,E)$ be a directed graph. An ordered pair $(C,D)$ such that $C,D \subseteq V$ and $C\cup D=V$ is a separation of $G$ if there is no edge from $C\setminus D$ to $D\setminus C$ or there is no edge from $D\setminus C$ to $C\setminus D$. In the pre-processing step we construct a new undirected graph $H_{k}=(V,E_{k})$ as follows. For each pair of distinct vertices $x,y $ of $G$, if $x \overset{ks}{\leftrightsquigarrow } y$ in $G$, we add an undirected edge $(x,y)$ to $E_{k}$ (see \ref{def:LemmasRelatedKSBs}). Furthermore, for every pair $(x,y)$ of $H_{k}$ with $(x,y) \notin E_{k}$, we label it with some separation $(C,D)$ such that $|C\cap D|<k$ and $x\in C, y\in D$.  

We leave as an open problem whether there exists any approximation algorithm for the problem of finding MSCSS with same $k$-strong blocks of a strongly connected graph for $k>2$. Another open question is whether there is an approximation algorithm for the problem of finding MSCSS with the same $k$-directed blocks when $k>2$.

It is also possible to generalize the MS-2EBs problem. Let $G=(V,E)$ be a strongly connected graph. We define a relation $\overset{ke}{\leftrightsquigarrow } $ as follows. For any pair of distinct vertices $x,y \in V$, we write $x \overset{ke}{\leftrightsquigarrow } y$ if for each edge subset $Y\subseteq E$ with $|Y|<k$, the vertices $x,y$ lie in the same SCC of $G\setminus Y$.
The $k$-edge blocks of $G$ are maximal subsets of $V$ of size $\geq k$ closed under $\overset{ke}{\leftrightsquigarrow } $. 
\begin{lemma}\label{def:kedgeBlocksAreDisjoint}
The $k$-edge blocks of a strongly connected graph are disjoint.
\end{lemma}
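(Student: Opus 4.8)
The plan is to exploit that the relation $\overset{ke}{\leftrightsquigarrow}$ is an equivalence relation, from which disjointness is immediate, exactly as disjointness of the $2$-edge blocks in Lemma \ref{def:2ebsAredisjoint} ultimately rests on the transitivity of ``lying in a common SCC''. Reflexivity and symmetry are read off directly from the definition, so the heart of the argument is transitivity. Concretely, I would show: if $x \overset{ke}{\leftrightsquigarrow} z$ and $z \overset{ke}{\leftrightsquigarrow} y$, then for every edge subset $Y \subseteq E$ with $|Y| < k$, both $x,z$ and $z,y$ lie in the same SCC of $G \setminus Y$; since membership in a common SCC of a \emph{fixed} digraph is itself an equivalence relation, $x$ and $y$ lie in the same SCC of $G \setminus Y$. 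Because this holds uniformly for every admissible $Y$, we obtain $x \overset{ke}{\leftrightsquigarrow} y$.

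Granting transitivity, the maximal vertex sets of size $\geq k$ on which all pairs are $\overset{ke}{\leftrightsquigarrow}$-related are precisely the $\overset{ke}{\leftrightsquigarrow}$-equivalence classes of size $\geq k$: any such closed set is contained in a single class, and maximality forces it to be the whole class. Equivalence classes partition $V$, so distinct $k$-edge blocks are disjoint. Alternatively, to mirror the contradiction style of Lemma \ref{def:2ebsAredisjoint} verbatim, I would assume two distinct $k$-edge blocks $C^{ke}_1, C^{ke}_2$ satisfy $C^{ke}_1 \cap C^{ke}_2 \neq \emptyset$, fix a vertex $z$ in the intersection, and produce $x \in C^{ke}_1$, $y \in C^{ke}_2$ with $x \not\overset{ke}{\leftrightsquigarrow} y$ and $x,y \notin C^{ke}_1 \cap C^{ke}_2$. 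For such a pair there is an edge subset $Y$ with $|Y| < k$ placing $x$ and $y$ in distinct SCCs of $G \setminus Y$; but $x,z \in C^{ke}_1$ keeps $x,z$ in one SCC of $G \setminus Y$ and $z,y \in C^{ke}_2$ keeps $z,y$ in one SCC of $G \setminus Y$, so $x,y$ share an SCC of $G \setminus Y$ after all, a contradiction.

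The only step needing care, and the place I would regard as the main (though minor) obstacle, is justifying the existence of the separated pair $x,y$ lying outside the intersection. This I would derive from maximality: if every cross pair $x \in C^{ke}_1$, $y \in C^{ke}_2$ were $\overset{ke}{\leftrightsquigarrow}$-related, then transitivity through the common vertex $z$ would make $C^{ke}_1 \cup C^{ke}_2$ a set of size $\geq k$ on which all pairs are related, strictly larger than each block (the blocks are distinct and, being maximal, neither is contained in the other), contradicting maximality. That any resulting separating pair avoids the intersection is then forced, since a vertex of $C^{ke}_1 \cap C^{ke}_2$ is related to everything in both blocks. Beyond this bookkeeping I expect no genuine difficulty: everything reduces to the transitivity of the ``same SCC'' relation applied uniformly across all edge subsets of size less than $k$.
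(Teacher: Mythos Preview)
Your proposal is correct and follows essentially the same approach as the paper, whose proof simply reads ``similar to our proof of Lemma \ref{def:2ebsAredisjoint}'': assume two distinct $k$-edge blocks intersect, pick a vertex $z$ in the intersection and a cross pair $x,y$ that fails $\overset{ke}{\leftrightsquigarrow}$, then use transitivity of the same-SCC relation through $z$ to reach a contradiction. Your additional care in justifying the existence of the separated pair outside the intersection via maximality is a detail the paper leaves implicit.
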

\begin{proof}
The proof is similar to our proof of Lemma \ref{def:2ebsAredisjoint}.
\end{proof}
The $k$-edge blocks can be computed in $O(n^{4}\sqrt{m})$ using flow algorithms since the relation $ \overset{ke}{\leftrightsquigarrow } $ is symmetric and transitive.
\section*{Acknowledgements.}
The author would like to thank Martin Dietzfelbinger for helpful comments and interesting discussions.

\appendix
\section{Lemmas related to $k$-strong blocks} \label{def:LemmasRelatedKSBs}

\begin{Lemma}\label{def:xyLieInKsRelatio}
Let $G=(V,E)$ be a directed graph and let $x,y$ be distinct vertices of $G$. Then $x \overset{ks}{\leftrightsquigarrow } y$ if and only if $x,y$ satisfy one of the following conditions.
\begin{description}
\item[$1.$] $\lbrace (x,y),(y,x) \rbrace \subseteq E$.
\item[$2.$] $(x,y) \in E, (y,x)\notin E$ and there are $k$-vertex-disjoint paths from $y$ to $x$ in $G$.
\item[$3.$]$(y,x) \in E, (x,y)\notin E$ and there are $k$-vertex-disjoint paths from $x$ to $y$ in $G$.
\item[$4.$] $\lbrace (x,y),(y,x) \rbrace \cap E=\emptyset$ and there exist $k$-vertex-disjoint paths from $x$ to $y$ and from $y$ to $x$ in $G$.
\end{description}
\end{Lemma}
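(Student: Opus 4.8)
The plan is to decouple the symmetric relation $x \overset{ks}{\leftrightsquigarrow} y$ into its two one-directional parts and to apply Menger's Theorem for vertex connectivity \cite{BR12} to each part separately, being careful about the degenerate situation created by a direct edge. Recall that $x \overset{ks}{\leftrightsquigarrow} y$ requires $x$ and $y$ to lie in the same SCC of $G\setminus X$ for every $X\subseteq V\setminus\lbrace x,y\rbrace$ with $|X|<k$; since lying in the same SCC means there is both an $x$-to-$y$ path and a $y$-to-$x$ path, this is equivalent to the conjunction of the two conditions
\begin{description}
\item[(A)] for every $X\subseteq V\setminus\lbrace x,y\rbrace$ with $|X|<k$ there is a path from $x$ to $y$ in $G\setminus X$, and
\item[(B)] for every such $X$ there is a path from $y$ to $x$ in $G\setminus X$.
\end{description}
These two conditions are independent of each other, so I would analyze them one at a time.

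First I would establish the key sub-claim: condition \textbf{(A)} holds if and only if either $(x,y)\in E$, or $(x,y)\notin E$ and there are $k$ vertex-disjoint paths from $x$ to $y$ in $G$. If $(x,y)\in E$, then the edge $(x,y)$ survives the deletion of any $X\subseteq V\setminus\lbrace x,y\rbrace$, so \textbf{(A)} holds trivially. If $(x,y)\notin E$, then \textbf{(A)} says precisely that no vertex set of size $<k$ avoiding $x,y$ is an $x$-$y$ separator, which by Menger's Theorem for vertex connectivity \cite{BR12} is equivalent to the existence of $k$ vertex-disjoint paths from $x$ to $y$. The symmetric sub-claim for \textbf{(B)} uses the edge $(y,x)$ in place of $(x,y)$.

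Finally I would combine the two sub-claims by a case split on which of the edges $(x,y)$ and $(y,x)$ lie in $E$; the four edge-configurations are exhaustive and mutually exclusive, and they correspond exactly to the four listed cases. When both edges are present, \textbf{(A)} and \textbf{(B)} both hold trivially, matching condition $1$ with no extra hypothesis. When $(x,y)\in E$ but $(y,x)\notin E$, condition \textbf{(A)} is automatic and \textbf{(B)} reduces to $k$ vertex-disjoint $y$-to-$x$ paths, giving condition $2$; the case $(y,x)\in E,(x,y)\notin E$ is symmetric and gives condition $3$. When neither edge is present, both \textbf{(A)} and \textbf{(B)} reduce via Menger to the existence of $k$ vertex-disjoint paths in the respective direction, matching condition $4$.

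The only delicate point is the degenerate behaviour of Menger's Theorem in the presence of a direct edge: such an edge can never be destroyed by removing internal vertices, so it forces the corresponding direction of \textbf{(A)}/\textbf{(B)} to hold vacuously and must be excluded from the Menger count. Making this observation explicit is exactly what distinguishes the four cases and is the heart of the argument; everything else is a routine verification.
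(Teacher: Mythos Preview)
Your proposal is correct and follows the same approach as the paper, which simply states that the lemma follows immediately from Menger's Theorem for vertex connectivity. Your write-up is just a careful unpacking of that one-line appeal to Menger, handling explicitly the degenerate case of a direct edge between $x$ and $y$.
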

\begin{proof}
This follows immediately from Menger's Theorem for vertex connectivity.
\end{proof}
\begin{lemma}\label{def:separationKstrongBlocks}
Let $(C,D)$ be a separation of a directed graph $G=(V,E)$ such that $|C \cap D|< k$. Then $C\setminus D$ and $D\setminus C$ lie in different $k$-strong blocks of $G$.
\end{lemma}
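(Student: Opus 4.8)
The plan is to reduce the set-level statement to a pairwise one. By the definition of a $k$-strong block, any two vertices lying in a common block must be related by $\overset{ks}{\leftrightsquigarrow}$; hence it suffices to show that we do \emph{not} have $x \overset{ks}{\leftrightsquigarrow} y$ for every $x \in C\setminus D$ and every $y \in D\setminus C$. Refuting $x \overset{ks}{\leftrightsquigarrow} y$ requires only exhibiting one vertex set $X\subseteq V\setminus\{x,y\}$ with $|X|<k$ such that $x$ and $y$ lie in different SCCs of $G\setminus X$.

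The natural candidate is $X := C\cap D$. First I would check admissibility: since $x\in C\setminus D$ we have $x\notin D$, so $x\notin C\cap D$, and symmetrically $y\notin C\cap D$; thus $X\subseteq V\setminus\{x,y\}$, and $|X|=|C\cap D|<k$ by hypothesis. After deleting $X$ the surviving vertex set is $V\setminus(C\cap D)=(C\setminus D)\cup(D\setminus C)$, using $C\cup D=V$, so every remaining vertex lies in exactly one of the two sides.

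The key step invokes the defining property of a separation: either there is no edge from $C\setminus D$ to $D\setminus C$, or there is no edge from $D\setminus C$ to $C\setminus D$. In the first case, any walk in $G\setminus X$ starting in $C\setminus D$ can never leave $C\setminus D$, because crossing into $D\setminus C$ would require one of the forbidden edges; hence there is no path from $x$ to $y$ in $G\setminus X$, and so $x,y$ lie in different SCCs of $G\setminus X$. The second case is symmetric, giving no path from $y$ to $x$. In either case $x \overset{ks}{\leftrightsquigarrow} y$ fails, so $x$ and $y$ cannot belong to a common $k$-strong block; as $x,y$ were arbitrary, $C\setminus D$ and $D\setminus C$ lie in different $k$-strong blocks.

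I expect no serious obstacle here. The only points requiring care are verifying that the separator $C\cap D$ omits both $x$ and $y$ (so that it is an admissible deletion set in the definition of $\overset{ks}{\leftrightsquigarrow}$), and arguing that the absence of one crossing direction blocks every $x$--$y$ \emph{path}, not merely single edges—this follows because, after removing $C\cap D$, the two sides are connected only by the two crossing directions, one of which is empty. The argument is essentially the directed analogue of the separation fact underlying the $k$-block theory of \cite{CDHH13}.
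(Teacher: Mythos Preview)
Your proposal is correct and follows essentially the same approach as the paper's proof: pick arbitrary $x\in C\setminus D$ and $y\in D\setminus C$, delete the separator $X=C\cap D$, and use the definition of separation to conclude that one direction between the two sides is blocked, so $x,y$ lie in different SCCs of $G\setminus X$. The paper's version is simply more terse, omitting the admissibility checks you spell out.
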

\begin{proof}
Let $x$ be any vertex in $C\setminus D$ and let $y$ be any vertex in $D\setminus C$. By the definition of separation, there is either no path from $x$ to $y$ or no path from $y$ to $x$ in $G\setminus (C\cap D)$. Thus, $x,y$ do not lie in the same SCC of $G\setminus (C\cap D)$.
\end{proof}

\end{document}